\newcommand*{\addFileDependency}[1]{
\typeout{(#1)}
\@addtofilelist{#1}
\IfFileExists{#1}{}{\typeout{No file #1.}}
}\makeatother
\newcommand*{\myexternaldocument}[1]{%
\externaldocument{#1}%
\addFileDependency{#1.tex}%
\addFileDependency{#1.aux}%
}
\newtheorem{definition}{Definition}
\newtheorem{ex}{Example}
\newtheorem{thm}{Theorem}
\newtheorem{prop}{Proposition}
\newtheorem{cor}{Corollary}[prop]
\newtheorem{lemma}{Lemma}
\newtheorem{customcor}{Corollary}
\renewcommand\footnoterule{%
  \vspace{0.8em}
 \kern-3\p@\hrule\@width.4\columnwidth%
   \kern2.6\p@}
\def\sym#1{\ifmmode^{#1}\else\(^{#1}\)\fi}
\DeclareMathOperator {\corr}{\text{Cor}}
\DeclarePairedDelimiter\set\{\}
\begin{document}

\title{\textsc{Tradeoffs and Comparison Complexity}\thanks{We are indebted to Benjamin Enke, Matthew Rabin, Joshua Schwartzstein, and Tomasz Strzalecki for their excellent supervision and guidance. We also thank Katie Coffman, John Conlon, Bnaya Dreyfuss, Ignacio Esponda, Xavier Gabaix, Thomas Graeber, David Laibson, Shengwu Li, Daniel Martin, Paulo Natenzon, Gautam Rao, Alex Rees-Jones, Andrei Shleifer, Keyu Wu, and Harvard PhD workshop participants for helpful comments and suggestions. We are grateful for financial support from the Chae Family Economics Research Fund. Shubatt: Department of Economics, Harvard University. cshubatt@g.harvard.edu. Yang: Department of Economics, University of California, Santa Barbara. jeffrey.yang97@gmail.com.}}


\author{Cassidy Shubatt \and Jeffrey Yang}

\date{\small{First posted: January 13, 2024\\
Updated: \today}\\
Latest version \href{https://jeffreyyang97.github.io/personalwebsite/CC_Draft.pdf}{\textcolor{blue}{here}}}

\normalsize
\maketitle

\thispagestyle{empty}

\vspace*{-.0cm}

\begin{abstract}

Using theory and experiments, this paper shows that the difficulty of making tradeoffs offers a parsimonious explanation for a wide range of behavioral phenomena. We develop a model of imprecise comparisons applicable to multiattribute, lottery, and intertemporal choice, which formalizes the idea that comparisons are difficult when they involve pronounced tradeoffs. Our model rationalizes a range of documented regularities, such as context effects, preference reversals, apparent probability weighting and hyperbolic discounting, and generates novel implications for behavior. We assess the explanatory power of our model in a series of choice experiments. Our model explains a large share of the variation in choice inconsistency across problems, and we document that manipulating tradeoffs reverses classic behavioral regularities, in line with its predictions. 

\end{abstract}

\vspace*{1cm}

\small{

\noindent \textit{Keywords: Complexity, multi-attribute choice, choice under risk, intertemporal choice, experiments}
}
\normalsize



\newpage
\setcounter{page}{1}

\section{Introduction}
\label{SEC:intro}

There is increasing interest in studying the complexity of economic decisions. Recent work has argued that a number of behavioral regularities are in part driven by cognitive frictions, from canonical anomalies in  risky choice \citep{khaw_cognitive_2021,frydman_source_2025,enke_cognitive_2023,oprea_decisions_2024} and intertemporal choice \citep{gabaix_myopia_2017,vieider_noisy_2021,enke_complexity_2025} to a range of patterns in other domains \citep{de_clippel_caution_2025, arrieta_procedural_2023,guan_beyond_2025,ba_over-_2022,augenblick_overinference_2025,enke_behavioral_2024}. This evidence points to the potential for a unifying explanation of these regularities. However, it is less well-understood what exactly makes a decision complex, which limits our ability to predict when and which complexity-driven distortions will arise in a given setting. 

Using theory and experiments, this paper shows that one source of complexity---the difficulty of making tradeoffs---provides a parsimonious explanation for a broad set of behavioral phenomena, and yields new predictions supported by the data. The phenomena we explain are compiled in Table 1, which include canonical anomalies in behavioral economics---decoy effects in multiattribute choice, probability weighting and hyperbolic discounting in the valuation of risk and time---as well as choice instabilities: preference reversals and inconsistencies in probability weighting across elicitation procedures. While most of these patterns, taken individually, have explanations in the literature, including classic accounts of non-standard preferences or more recent models of cognitive frictions,  existing models typically only speak to a subset of anomalies and domains. This paper is motivated by the observation that all of these phenomena arise precisely in decisions that require aggregating tradeoffs: prices against product features, payoffs against probabilities, or money against time. We argue that a simple insight can explain them: tradeoffs make some comparisons systematically harder than others, which generates distortions depending on which comparisons are relevant to the decision.

To formalize this idea, we develop a model of tradeoff-driven comparison complexity applicable to multiattribute, lottery, and intertemporal choice. By specifying how comparison difficulty varies with tradeoffs, the model makes sharp predictions on which complexity-driven distortions arise in a given choice setting. We leverage this to both explain the documented empirical regularities in Table 1, and also to derive new implications for behavior: predictions on how binary choice inconsistency varies with tradeoffs in all three domains, and how canonical patterns in choice and valuation can be eliminated or even reversed by manipulating tradeoffs. We take these predictions to the data and find broad empirical support: across all three domains, tradeoff-based complexity explains substantial variation in decision noise, and manipulations of tradeoffs informed by our model induce predictable shifts in behavior.\\

\begin{table}[!t]
\begin{center}
\caption{Behavioral regularities}
\label{table:anomalies}
\scalebox{.8}{
\begin{threeparttable}
    \begin{tabular}{lll}
    \toprule\addlinespace
    \textbf{Domain} & \textbf{Empirical regularity} & \textbf{Example reference}  \\\addlinespace \hline\addlinespace
Multiattribute & Decoy/asymmetric dominance effects & \cite{huber_adding_1982} \\
Risk & Inverse-S probability weighting in certainty equivalents & \cite{kahneman_prospect_1979} \\ 
Risk & Absence of fourfold pattern in binary choice & \cite{harbaugh_fourfold_2010} \\ 
Risk & Reversed probability weighting in probability equivalents & \cite{feldman_certain_2024} \\ 
Risk & Lottery preference reversals & \cite{tversky_anomalies_1990}\\
Risk & Dissimilarity-based noise in binary choice & \cite{enke_quantifying_2023}\\
Time & Hyperbolic discounting in present value equivalents & \cite{cohen_measuring_2020} \\ 
Time & Invariance of hyperbolic discounting to front-end delays & \cite{cohen_measuring_2020} \\ 
Time & Intertemporal preference reversals & \cite{tversky_causes_1990} \\ 
\addlinespace
\hline
\addlinespace
Multiattribute & Tradeoff-driven noise in binary choice & New prediction \\
Risk & Elimination of preference reversals in probability equivalents & New prediction\\
Time & Tradeoff-driven noise in binary choice & New prediction \\
Time & Reversed hyperbolic discounting in time equivalents & New prediction\\
Time & Elimination of preference reversals in time equivalents & New prediction\\
\addlinespace \bottomrule
    \end{tabular}
\end{threeparttable}}
    \end{center}
\end{table}

\noindent\textbf{\textit{Modeling comparison complexity}}. 
Our paper is motivated by the idea that aggregating tradeoffs is difficult, and that not all comparisons require the same degree of aggregation. To illustrate, consider the options in Figure \ref{fig:opening_examples}. Across these domains, the comparison $(x,y)$ is simple---there is little need to make tradeoffs to see that $x$ is better than $y$. On the other hand,  $(x',y)$ is less obvious: 1a) involves a tradeoff between fixed and variable fees, 1b) involves trading off a higher maximum payout against a lower payout probability, and 1c) involves a tradeoff between money and delays. Tradeoffs complicate comparisons for at least two reasons: decision-makers may be uncertain over the relative importance of different features, and even absent this uncertainty, may find it difficult to aggregate advantages and disadvantages across features.


We formalize this in a model of imprecise comparisons, in which a decision-maker (DM) is uncertain over the value of each option, $v_x$, and chooses based on noisy signals of how they compare. The precision of each signal, $\tau_{xy}$, captures the ease of comparison between $x$ and $y$.  We specify $\tau_{xy}$ in multiattribute, lottery, and intertemporal choice, developing concrete measures of comparison complexity that reflect the difficulty of making tradeoffs in each domain.


These measures are grounded by two shared properties: similarity and dominance. First, holding fixed value differences, options are easier to compare if they are more \textit{similar} across features. Formally, the ease of comparison $\tau_{xy}$ is increasing in the \textit{value-dissimilarity ratio} $|v_x-v_y|/d(x,y)$, where $d(x,y)$ is a distance metric. Intuitively, similar options require less aggregation of tradeoffs to compare, as one can divert attention from congruent features and more easily assess differences. This principle is motivated by insights from psychology \citep{tversky_substitutability_1969} and our formalism follows recent work in decision theory \citep{he_moderate_2024}. To put structure on $d(x,y)$, we appeal to our second principle: options are maximally easy to compare when there is \textit{dominance}---that is, when there are no tradeoffs. Domain-specific dominance notions give rise to tractable dissimilarity measures in each domain. 
\begin{figure}[t!]
\small
\begin{subfigure}[t]{0.31\textwidth}
\centering
    \caption{\centering Multiattribute }
    \vspace{0.5em}
$
    \begin{array}{ll}
        x: &  \text{\$11/month, \$3.45/GB}\\
        x': &  \text{\$32/month, \$1.6/GB} \\
        y: &  \text{\$10.95/month, \$4.45/GB} 
    \end{array}
$
\end{subfigure}
\begin{subfigure}[t]{0.44\textwidth}
\centering
\caption{\centering Lottery }
    \vspace{0.5em}
$
    \begin{array}{ll}
        x: &  \text{\$27 w.p. 25\%, \$3 w.p. 75\%}\\
        x': &  \text{\$9 for sure} \\
        y: &  \text{\$20 w.p. 20\%, \$3.2 w.p. 80\%}\\
    \end{array}
$
\end{subfigure}
\begin{subfigure}[t]{0.20\textwidth}
\centering
\caption{\centering Intertemporal}
    \vspace{0.5em}
$
\begin{array}{ll}
        x: &  \text{\$60 in 61 days} \\        
        x': &  \text{\$100 in 3 years} \\
        y: &  \text{\$40 in 60 days} 
\end{array}
$
\end{subfigure}
\captionsetup{font=small}
\caption{Choice Domains. Comparisons between a) phone plans characterized by a monthly and data use fee, b) monetary lotteries, and c) payoff flows.
}
\label{fig:opening_examples}
\end{figure}

These complexity measures have transparent behavioral implications for binary choice. Our model belongs to the moderate utility class axiomatized by \citet{he_moderate_2024}, where binary choice probabilities depend on value differences normalized by a distance metric. Building on this decision-theoretic framework, we show that axioms on binary choice behavior corresponding to similarity and dominance, together with other easily-understood axioms, characterize the implications of our complexity measures for binary choice in each domain. \\


\noindent\textit{\textbf{Rationalizing behavioral regularities}}.  Using our model of imprecise comparisons, we show that tradeoff-based comparison complexity can account for a broad set of regularities central to behavioral economics and recent work on complexity, and generates novel testable implications. 

These regularities follow from two key predictions of the model. First, in binary choice, comparison complexity leads to noisy, but unbiased, choice. Second, in richer menus, comparison complexity generates systematic distortions: a) context effects in choice, which occur when competing alternatives are hard to compare to each other but differ in their comparability to other options in the menu, as in \citet{natenzon_random_2019}, and b) ``pull-to-center'' effects in the valuation of options in a multiple price list: that is, when the option being valued is hard to compare to a range of prices, its valuations are compressed towards the center of that range. 


In conjunction with our measures of comparison complexity, these forces rationalize a range of documented behavioral regularities. Given our multiattribute complexity measure, the context effects in our model  generate familiar decoy/asymmetric dominance effects. With the added structure of our lottery and intertemporal complexity measures, which predict which risky and intertemporal prospects are hard to compare to prices, our model can rationalize documented biases and instabilities in valuation: preference reversals and apparent probability weighting and hyperbolic discounting, specifically because these patterns can be generated by a tradeoff dependent pull-to-center bias in valuations.

To illustrate, consider a standard paradigm used to study risk preferences: eliciting certainty equivalents of a lottery $l$ that pays \$$\overline{w}$ with probability $p$. As $p$ approaches 0 or 1 in the limit, $l$ has a near-dominance relationship with the certain payments in the price list, so valuations converge to accuracy. Away from these boundaries, however, tradeoffs between payoff amounts and probabilities make $l$ harder to compare to the price list, producing a pull-to-center: small probabilities are overvalued and large probabilities are undervalued. Taken together, this produces inverse-S probability weighting. The same logic can generate apparent hyperbolic discounting in the valuation of delayed payments. 

Importantly, these distortions are not fixed biases in our model, but instead arise from tradeoffs in comparing options to prices. As such, our framework not only provides foundations for complexity-based accounts of patterns like probability weighting, but also explains why these patterns are often unstable across choice settings. Suppose that instead of valuing lotteries against certain payments, the decision-maker values certain payments against lotteries: assessing the payoff probability $p$ that makes $l$ indifferent to $\$w\in[0,\overline{w}]$ for sure. The pull-to-center effects that produce probability weighting in certainty equivalents here generate the \textit{opposite} pattern: low (high) certain payments are overvalued (undervalued), which helps explain documented inconsistencies between the valuation methods \citep{feldman_certain_2024}. Furthermore, the same mechanism operates in intertemporal choice, yielding the novel prediction that hyperbolic discounting can be reversed by similarly inverting the valuation task. 

Since these pull-to-center distortions affect valuations but not direct choice, our framework can also explain puzzling inconsistencies between valuation and choice, such as the well-documented preference reversal phenomenon in lottery choice \citep{tversky_anomalies_1990}, where risk preferences seemingly flip across direct choice and valuations, as well as similar reversals in intertemporal choice. Taken at face value, these reversals challenge the existence of stable preferences. Our model instead rationalizes them as expressions of the same underlying preferences, under tradeoff-driven noise. Here, the model also makes the novel prediction that these reversals can be \textit{eliminated} by manipulating the relative ease of comparing each option to prices---specifically by changing the numeraire against which options are valued. \\


\noindent\textit{\textbf{Experimental evidence.}} We conduct two sets of experiments. First, we assess the explanatory power of our complexity measures in  large-scale binary choice datasets in our three domains: choice between induced-values multiattribute goods, lotteries, and time-dated payoff streams. Our measures are strongly predictive of noise in choice: subjects are 15 percentage points more likely to make inconsistent choices across repeat instances of a problem that has the highest vs. lowest level of tradeoff complexity in each domain. Our measures also strongly predict other proxies for complexity used in the literature, such as choice errors vis-a-vis a preference benchmark and subjective uncertainty over choices.

In each domain, our model of noise explains a large share of variation in choice rates not captured by existing models of preferences, and improves the predictive power of leading models in lottery and intertemporal choice by 10-22\%. We quantify the \textit{completeness} and \textit{restrictiveness} of our model \citep{fudenberg_measuring_2022, fudenberg_how_2023}. In terms of completeness, our model explains 70\% of the predictable variation in multiattribute choice, and over 90\% in lottery and intertemporal choice. In terms of restrictiveness, our model is less restrictive than the leading multiattribute choice model, but \textit{more} restrictive than leading models in lottery and intertemporal choice, suggesting that our gains in predictive power do not come at substantial costs to model parsimony.

Second, we run a series of experiments to document novel empirical patterns predicted by our framework---specifically that preference reversals and canonical distortions in lottery and intertemporal valuation can be eliminated or reversed by manipulating tradeoffs as our model predicts. In our preference reversal experiments, we reproduce classic reversals in lottery choice and document similar reversals in intertemporal choice, and show that these reversals can be eliminated by manipulating the relative ease of comparing options to prices. In line with the model, the inconsistencies between valuations and direct choice disappear when subjects instead value lotteries (delayed payments) using probability equivalents (time equivalents).

Finally, we document that two classic valuation patterns, probability-weighting and hyperbolic discounting, reverse when we manipulate tradeoffs in the valuation task. In line with model predictions and existing findings, we find that using probability equivalents rather than certainty equivalents to estimate risk preferences results in apparent \textit{underweighting} of small probabilities and \textit{overweighting} of large probabilities. Similarly, relative to the discounting revealed by present value equivalents, time equivalents exhibit \textit{overvaluation} for delays close to the present and \textit{undervaluation} for longer delays.\\

\noindent\textit{\textbf{Related literature.}} This paper contributes to a literature on complexity and the cognitive foundations of economic behavior. While recent work has argued that various behavioral regularities are driven by cognitive frictions, such as prospect theory-like behavior \citep{khaw_cognitive_2021,frydman_source_2025,steiner_perceiving_2016,enke_cognitive_2023,oprea_decisions_2024} and hyperbolic discounting \citep{gabaix_myopia_2017,enke_complexity_2025}, we lack a formal understanding of the sources of choice difficulty that produce these regularities. 
This paper shows how a single mechanism---the difficulty of making tradeoffs---can rationalize phenomena as distinct as probability weighting, hyperbolic discounting over money, preference reversals, context effects, and the large variation of noise in binary choice, and also predicts new regularities that find support in the data. While different formal explanations have been proposed for some of these regularities, ours is the first to our knowledge to accommodate them simultaneously.

Within this literature, this paper most closely relates to two strands of work. First, the pull-to-center effects in our model resemble predictions of noisy cognition models \citep{khaw_cognitive_2021,frydman_efficient_2021,enke_cognitive_2023,gabaix_myopia_2017}, wherein cognitive frictions cause valuations to be compressed towards a prior. By modeling tradeoffs as one source of these frictions, our model puts structure on how the magnitude of compression effects, often a key degree of freedom, varies across options. Second, our focus on tradeoff complexity connects to the empirical analysis of \citet{enke_quantifying_2023}, which finds that a measure of ``excess dissimilarity,'' closely related to our lottery complexity measure, emerges as a key predictor of the complexity of binary lottery choice. We theoretically ground and expand the scope of this dissimilarity notion by showing how it can be extended to domains beyond lottery choice and derived from common properties reflecting tradeoff difficulty, and by showing how tradeoff complexity can rationalize a broad set of anomalies beyond binary choice.




Our approach not only rationalizes known empirical patterns, but also helps organize a body of evidence on their instability across elicitation methods \citep{tversky_anomalies_1990,seidl_preference_2002,harbaugh_fourfold_2010,bouchouicha_choice_2023,imai_meta-analysis_2025,feldman_certain_2024}---findings that pose fundamental issues for economists hoping to apply insights from behavioral research to novel settings, or draw inferences from choice data. We propose a theory that rationalizes documented patterns of choice instability, and test its novel predictions on how manipulating tradeoffs can eliminate or even reverse canonical choice patterns.

This paper builds on a recent stochastic choice literature \citep{natenzon_random_2019,he_moderate_2024,he_random_2023} studying the moderate utility class, as well as research on more general forms of heteroskedastic noise in lottery choice \citep[e.g.][]{hey_experimental_1995,buschena_generalized_2008,loomes_modelling_2005}. We bridge this theoretical framework and recent work on complexity and cognitive foundations by proposing concrete moderate utility specifications that reflect tradeoff complexity, and by using this framework to both organize a body of evidence on canonical regularities and generate novel testable implications. 

Section \ref{SEC:theory} develops our specification of comparison complexity. Section \ref{SEC:multinomial} applies the framework to explain empirical regularities in choice and valuation. Section \ref{SEC:experiments} describes the experimental tests of the model. Section \ref{SEC:existing_models} discusses relationships to existing models, and Section \ref{SEC:conclusion} concludes. Appendix \ref{APP:proofs} contains proofs of all results in the main text.

\section{Theory of Comparison Complexity}
\label{SEC:theory}

We model a decision-maker who faces imprecise comparisons and responds in a Bayesian fashion. This section develops our theory of what makes a comparison difficult. To exposit this theory, we first consider the restriction of our model to binary choice, where the DM faces a single comparison. In the subsequent section, we extend this framework to study our main applications, where decisions involve multiple comparisons. 


Let $X$ denote a set of options and $v_x$ denote the value of each $x\in X$. Consider a decision-maker (DM) who is uncertain about the value of each option, and when faced with a binary choice $\set{x,y}$, chooses based on a noisy signal on which option is better. The DM has continuous, i.i.d. priors over each $(v_x)_{x\in X}$ given by the symmetric distribution $Q$, and receives a noisy signal $s_{xy}$ on the value comparison between $x$ and $y$:
\begin{align*}
s_{xy}&=\text{sgn}(v_x - v_y)+\frac{1}{\sqrt{\tau_{xy}}}\epsilon_{xy}\\
&\epsilon_{xy}\sim N(0,1).
\end{align*}
The DM chooses the option with the highest posterior expected value, randomizing in the case of ties; this yields choice probabilities  $\rho(x,y)=\Phi(\text{sgn}(v_x-v_y)\sqrt{\tau_{xy}})$, for $\Phi$ the standard normal CDF.\footnote{Normality of $\epsilon_{xy}$ is not crucial to our results; see Appendix \ref{APP:proofs_multinomial}.} Here, the precision $\tau_{xy}$ governs the \textit{ease of comparison} between $x$ and $y$: the higher is $\tau_{xy}$, the greater the likelihood of choosing the higher-value option. 

This framework follows recent models of Bayesian cognitive imprecision \citep[e.g.][]{khaw_cognitive_2021,frydman_source_2025,vieider_noisy_2021,natenzon_random_2019}, in which the DM behaves as-if she has imperfect access to underlying values, and forms posteriors based on signals that reflect  internal deliberation. We make two departures. First, signals are over value \textit{comparisons} rather than individual values. This is natural given our focus on modeling tradeoffs, which are inherently comparative, and aligns with a tradition in psychology of modeling  comparative value judgments \citep{thurstone_law_1927}. Second, these signals convey only ordinal value information. This reflects how one can often have a precise idea of which option is better without knowing ``by how much,'' and lets the model capture settings where two options are perfectly comparable to a third, yet hard to compare to each other: if learning was instead over cardinal value differences, the DM cannot both learn $v_x-v_z$ and $v_y-v_z$ without also learning $v_x-v_y$.\footnote{This will be material in our applications in Section \ref{SEC:multinomial}, which consider settings involving multiple comparisons.}

Our i.i.d. priors assumption reflects an ``ignorance prior'': before observing anything about the options, the DM has no basis for judging one to be better than the other. This eliminates the prior as a degree of freedom, so that predictions are driven by our specification of $\tau_{xy}$. 

We now turn to developing these specifications. We propose functional forms for $\tau_{xy}$ in multiattribute, lottery, and intertemporal choice which reflect the difficulty of making tradeoffs.\\

\noindent \textbf{\textit{Shared Principles}}. Across our choice domains, our specifications of $\tau$ satisfy two common properties reflecting the difficulty of tradeoffs: similarity and dominance. 

First, options are easier to compare when they are more \textit{similar}, holding fixed their value difference. In each of our domains, the ease of comparison $\tau_{xy}$ is an increasing transformation $H$ of the \textit{value-dissimilarity} ratio
\begin{align*}
    \tau_{xy}=H\left(\frac{|v_x-v_y|}{d(x,y)}\right),
\end{align*}
where the numerator contains the value difference: options further from indifference are easier to compare, and the denominator is a distance metric measuring the options' dissimilarity. Intuitively, if options are more dissimilar feature-by-feature, the DM faces larger tradeoffs across those features, and thus a harder comparison. The idea that feature dissimilarity reduces comparability has been emphasized in work in psychology \citep{tversky_substitutability_1969}, and we follow recent work in decision theory \citep{he_moderate_2024} in our formalism.\footnote{This restriction on $\tau$ implies that binary choice probabilities belong to the moderate utility class axiomatized in \citet{he_moderate_2024}.}

Second, we place structure on $d(x,y)$ by positing that options are maximally easy to compare when there are no tradeoffs---that is, when they have a \textit{dominance} relationship. As we discuss below, this gives rise to specific distance metrics in each of our three choice domains of interest, given the domain-relevant dominance notion.

\subsection{Multiattribute Choice}
\label{SEC:theory_mac}

Consider the domain of multiattribute choice, where each option $x\in X\equiv  X_1\times...\times X_n$ is defined on $n$ real-valued attributes, i.e., $X_i=\mathbb{R}$. Utility is linear in attributes\footnote{Linearity of preferences can be relaxed; see Appendix \ref{APP:theory_axioms}.}, where the value of each option $x$ is given by $v_x=U(x)=\sum_{k} \beta_kx_k$ for attribute weights $\beta\in \mathbb{R}^n$. Consider the following specification for $\tau$:

\begin{definition}
\label{def:L1_complexity}
    $\tau$ has an $L_1$-\textit{complexity representation} $\tau^{L1}$ if there exists $\beta\in\mathbb{R}^n$ with $\beta_k\neq 0$ for all $k$, such that
\begin{align*}
    \tau_{xy}=H\left(\frac{|U(x)-U(y)|}{d_{L1}(x,y)}\right)
\end{align*}
for $H$ continuous, strictly increasing with $H(0)=0$, where $d_{L1}(x,y)=\sum_{k}|\beta_k(x_k-y_k)|$. 
\end{definition} 

Here, the ease of comparison between two options is governed by their value-dissimilarity ratio: their aggregated value difference, normalized by a distance metric equal to the summed feature-by-feature value differences. 

This measure satisfies {similarity} and {dominance}: holding fixed value differences, $\tau^{L1}$ is increasing in similarity as measured by a distance metric, where the choice of this metric ensures that $\tau^{L1}$ is maximized when there is dominance. Specifically, if $x$ attribute-wise dominates $y$, i.e. $\beta_k x_k\geq \beta_k y_k$ for all $k$, the ease of comparison $\tau_{xy}$ takes on its maximal value of $H(1)$. 



$\tau^{L1}$ also satisfies a \textit{simplification} property, wherein reducing the number of attributes along which there is a value difference increases the ease of comparison. To illustrate, suppose $n=3$ and $\beta=(1,1,1)$, and consider the following comparisons:
\begin{align*}
\begin{array}{ll}
&(x,y)\\
    x&=(10,7,9)\\
    y&=(3,15,5)
\end{array}
\qquad
\begin{array}{ll}
&(x',y)\\
    x'&=(3,14,9)\\
    y&=(3,15,5)
\end{array}
\end{align*}
Note that $(x',y)$ is formed by eliminating the value difference along the first attribute in $(x,y)$ and redistributing it to the second attribute. Our measure predicts that $(x',y)$ is easier to compare than $(x,y)$, i.e., 
$\tau_{x'y}>\tau_{xy}$. More generally, concentrating value differences into fewer attributes makes comparisons easier: that is, for $x'$ satisfying $x'_i=y_i$ for some $i$, $x'_j\neq x_j$ for at most one $j\neq i$, and $U(x')=U(x)$, we have $\tau_{x'y}\geq\tau_{xy}$. This property again reflects tradeoff complexity: if aggregating tradeoffs is difficult, we might expect that an operation of this kind, where some of that aggregation is done for the individual, simplifies the comparison.\\ 

\noindent \textbf{\textit{Axiomatic Foundations}}. Our specification of $L_1$-complexity puts transparent restrictions on binary choice behavior. The binary choice probabilities induced by $\tau^{L1}$ take the form $\rho(x,y)=G\left(\frac{U(x)-U(y)}{d_{L1}(x,y)}\right)$, where $G:[-1,1]\to[0,1]$ is continuous, strictly increasing, and satisfies $G(t)=1-G(-t)$. This representation is a special case of the moderate utility class axiomatized in \cite{he_moderate_2024}, in which the utility function and distance metric are fully general. In Appendix \ref{APP:l1_axioms}, we provide an axiomatic characterization for this representation: a binary choice rule takes the above form if and only if it satisfies Moderate Transitivity, an axiom that \cite{he_moderate_2024} show characterizes the moderate utility class, Continuity and Linearity axioms, as well as Dominance and Simplification---direct translations of the two corresponding properties of $\tau^{L1}$ discussed above into restrictions on binary choice probabilities. 




\subsection{Risky and Intertemporal Choice}

\textbf{\textit{Lottery Choice}}. Consider the lottery domain, where each option $x$ is a monetary lottery described by the mass function $f_x:\mathbb{R}\to[0,1]$ where $f_x(w)> 0$ for finitely many $w$. Let $F_x$ and $F_x^{-1}$ denote the CDF and quantile function of $x$. Tastes are given by expected utility: $v_x=EU(x)=\sum_{w}u(w)f_x(w)$ for $u$ strictly increasing.

\begin{definition}
\label{def:CDF_complexity}
    $\tau$ has a \textit{CDF-complexity representation} $\tau^{CDF}$ if for $u$ strictly increasing,
\begin{align*}
    \tau_{xy}=H\left(\frac{|EU(x)-EU(y)|}{d_{CDF}(x,y)}\right)
\end{align*}
for $H$ continuous, strictly increasing with $H(0)=0$, where $d_{CDF}$ is given by
\begin{align*}
    d_{CDF}(x,y)=\int_{0}^{1} |u(F^{-1}_x(q))-u(F^{-1}_y(q))|\,dq. 
\end{align*}
\end{definition} 

As with $L_1$-complexity, $\tau_{xy}^{CDF}$ is governed by the value-dissimilarity ratio. The specific distance metric in our representation $d_{CDF}(x,y)$ measures the area between the utility-valued CDFs of $x$ and $y$, and intuitively captures how similarly the payoffs in $x$ and $y$ are distributed. For linear $u$, $d_{CDF}$ reduces to the 1-Wasserstein metric, which has been shown to predict noise in lottery choice \citep{enke_quantifying_2023,erev_combining_2008,buschena_generalized_2008}. \\

\noindent \textbf{\textit{Intertemporal Choice}}. 
Here, each option $x$ is a payoff stream described by the \textit{payoff function} $m_x:[0,\infty)\to\mathbb{R}$, where $m_x(t)\neq 0$ for finitely many $t$. Here,  $m_x(t)$ describes how much $x$ pays at time $t$. Let $M_x(t)=\sum_{t'\leq t}m_x(t')$ denote the \textit{cumulative payoff function} of $x$, which describes the total payoff of $x$ up to time $t$. Utility is given by exponential discounting, with $v_x=PV(x)=\sum_{t}\delta^t m_x(t)$, for $\delta<1$.

\begin{definition}
\label{def:CPF_complexity}
    $\tau$ has a \textit{CPF-complexity representation} $\tau^{CPF}$ if for $\delta<1$,
\begin{align*}
    \tau_{xy}= H\left(\frac{|PV(x)-PV(y)|}{d_{CPF}(x,y)}\right)
\end{align*}
for $H$ continuous, strictly increasing with $H(0)=0$, where  $d_{CPF}$ is given by
\begin{align*}
    d_{CPF}(x,y)=\ln(1/\delta)\int_{0}^{\infty}|M_x(t)-M_y(t)|\cdot\delta^t\,dt.
\end{align*}
\end{definition} 

Here, the distance metric in the value-dissimilarity ratio is proportional to the present value of the difference between the cumulative payoff functions of $x$ and $y$, and captures how similarly $x$ and $y$ distribute their payoffs across time.  \\

\noindent\textbf{\textit{Shared Properties and Axiomatic Foundations}}. As with our multiattribute  measure, $\tau^{CDF}$ and $\tau^{CPF}$ satisfy similarity and dominance. Holding fixed value differences, both are increasing in the similarity between options, as measured by a distance metric that respects dominance: $\tau^{CDF}_{xy}$ is maximized when $x$ first-order stochastically dominates $y$ (i.e. $F_x\leq F_y$), and $\tau^{CPF}_{xy}$ is maximized when $x$ temporally dominates $y$ (i.e. $M_x\geq M_y$---that is, if $x$ pays out more in total than $y$ at any point in time). Both also satisfy analogs of the simplification property for $\tau^{L1}$: as we formalize in Appendix \ref{APP:theory_axioms}, concentrating differences into fewer quantile regions and fewer time periods increases the ease of comparison under $\tau^{CDF}$ and $\tau^{CPF}$, respectively.

The behavioral implications of $\tau^{CDF}$ and $\tau^{CPF}$ for binary choice can be characterized using a parallel set of axioms on choice probabilities. These axioms are direct analogs of the axioms that characterize the binary choice implications of $\tau^{L1}$ in multiattribute choice, exposing commonalities in how tradeoff difficulty can be modeled across these three domains. In the Appendix, we state and prove these characterization results.\footnote{\citet{fishburn_probabilistic_1978} axiomatizes a  closely related representation for binary choice probabilities in the lottery domain. We discuss the relationship between his axioms and ours in Appendix \ref{APP:theory_axioms}.}

\subsection{Parameterizing the Model}
\label{SEC:theory_params}
In each domain, binary choice probabilities under our model take the form $\rho(x,y)=G\left(\frac{v_x-v_y}{d(x,y)}\right)$,
where the signed value-dissimilarity ratio is specified according to Definitions \ref{def:L1_complexity}, \ref{def:CDF_complexity}, and \ref{def:CPF_complexity}, and $G$ is an increasing transformation satisfying $G(r)=1-G(-r)$. To obtain quantitative predictions, the analyst must specify the preference parameters that enter the ratio---$\beta$ in multiattribute choice, $u$ in lottery choice, and $\delta$ in intertemporal choice---as well as the transformation $G$. Each of these objects can be identified from binary choice data, as stated in Theorems \ref{THM:representation}, \ref{THM:representation_risk}, and \ref{THM:representation_time} in the Appendix. Our preferred parameterization of $G$ is given by
\begin{align}\label{eq:G_param}
        G(r)=\begin{dcases}(1-\kappa)-(0.5-\kappa)(1-r)^{\gamma} & r\geq 0\\
            \kappa+(0.5-\kappa)(1+r)^{\gamma} & r<0
            \end{dcases}
\end{align}
where $\kappa$ is a tremble parameter that governs the error rates at dominance, and $\gamma$ governs the curvature in the relationship between choice rates and the ratio.


\section{Rationalizing Behavioral Regularities}
\label{SEC:multinomial}

We now show how tradeoff complexity can rationalize a range of documented behavioral regularities, including context effects, preference reversals, apparent probability weighting and hyperbolic discounting in valuation tasks, and develop novel predictions of our model. To study these settings, we extend our model to multinomial choice.

\subsection{Multinomial Choice Extension}
As before, there is a set of options $X$, and the DM has continuous, i.i.d. priors over $v_x$ for all $x\in X$, distributed according to a symmetric distribution $Q$. Let $\mathcal{A}$ denote the collection of non-empty finite subsets of $X$, and let $\mathcal{C}=\mathcal{A}\cup\set{\varnothing}$. The DM faces a \textit{choice problem} $(A,C)\in \mathcal{A}\times\mathcal{C}$, comprised of a \textit{menu} of options  $A$ and a \textit{choice context} $C$---a set of options the DM observes but cannot choose. For instance, in any single choice in a multiple price list elicitation, the DM observes price options in other rows of the list, which form the choice context $C$.\footnote{Elements of $C$ are also referred to as \textit{phantom options}. Empirically, the choice context has been shown to influence choice \citep[e.g.][]{soltani_range-normalization_2012} and will be key to our approach to modeling valuation tasks.} The DM chooses from $A$ based on signals on how each pair of options in $A\cup C$ compares. 

In particular, for each pair of distinct options $x,y\in A\cup C$, the DM observes the signal $s_{xy}=\text{sgn}(v_x - v_y)+\frac{1}{\sqrt{\tau_{xy}}}\epsilon_{xy}$, where $\epsilon_{xy}\sim N(0,1)$, where $\tau_{xy}$ retains its interpretation as the ease of comparison. Let $s$ denote the collection of these signals; given a signal realization, the DM chooses the option $x\in A$ with the maximal posterior expected value $\mathbb{E}[v_x|s]$. We are interested in the resulting choice probabilities, which are given by\footnote{In the case of ties, we assume a symmetric tiebreaking rule. See Appendix \ref{APP:theory_tiebreaking} for details.}
\begin{align*}
    \rho(x,A|C)=\mathbb{P}(\set{s: \mathbb{E}[v_x|s]>\mathbb{E}[v_y|s]\,\forall\, y\in A/\set{x} }\,|\,v).
\end{align*}
Note that when restricted to binary menus, this model reduces to the model in Section \ref{SEC:theory}. Let $\rho(x,y)=\rho(x,\set{x,y}|\varnothing)$ denote binary choice probabilities, and let $\rho(x,y|C)=\rho(x,\set{x,y}|C)$ denote binary choice probabilities under a choice context $C$. 


This model has two key implications, developed below. First, comparison complexity leads to noisy but unbiased choice in binary menus---the DM may err, but more often chooses the higher-value option. Second, complexity produces systematic biases in richer choice settings: when facing hard-to-compare alternatives, the DM relies on information from other, easier comparisons, which can distort choice. This generates 1) context effects in multinomial choice, and 2) pull-to-center distortions in valuation tasks, caused by the difficulty of comparing the option being valued against prices. When combined with our theory of $\tau_{xy}$, which makes sharp predictions on comparison difficulty, these forces rationalize a broad set of empirical regularities. \\


\noindent\textbf{\textit{Context Effects.}} Consider an example in which $X=\set{x,y,z}$, with $v_x>v_y>v_z$ and $\tau_{xy}=\tau_{xz}=0$, $\tau_{yz}=\infty$. That is, the DM has no idea how $x$ compares to $y$ and $z$, but knows $y$ is better than $z$. Here, the model predicts that $\rho(y,x|\set{z})=1$: the presence of $z$ in the choice context provides information that rules out posterior beliefs over $(v_x,v_y,v_z)$ in which $v_y< v_z$, thus distorting the DM's choice in favor of the inferior option $y$. This is generalized in the following proposition, which says that if $x$ and $y$ are sufficiently hard to compare, the presence of an inferior option $z$ that is easier to compare to $y$ distorts choice in favor of that option.

\begin{prop}
    \label{PROP:context}
    Let $v_x,v_y>v_z$. If $\tau_{yz}>\tau_{xz}$, then there exists $\epsilon>0$ such that if $\tau_{xy}<\epsilon$, $\rho(y,x|\set{z})>1/2$. 
\end{prop}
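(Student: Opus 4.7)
The plan is to reduce to the limit $\tau_{xy}\to 0$ by continuity, compute the sign of $\mathbb{E}[v_y\mid s] - \mathbb{E}[v_x\mid s]$ explicitly in that limit using the i.i.d.\ symmetric prior, and then evaluate the resulting probability under the true values $v_x, v_y > v_z$. Since the posterior distribution, and hence $\rho(y, x \mid \{z\})$, varies continuously in $\tau_{xy}$, it suffices to show that $\lim_{\tau_{xy}\to 0}\rho(y, x\mid \{z\}) > 1/2$, from which the existence of $\epsilon$ is immediate. In the limit the signal $s_{xy}$ has infinite variance and becomes uninformative, so the posterior depends only on $(s_{xz}, s_{yz})$.

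Next, because $Q$ is continuous and symmetric and $(v_x, v_y, v_z)$ are i.i.d., the six strict orderings each have prior probability $1/6$, and conditional on an ordering a variable with rank $k$ has expectation $r_k$, where $r_1 > r_2 > r_3$ are the order-statistic means of three i.i.d.\ draws from $Q$. With $s_{xy}$ uninformative, the posterior over orderings factors through the Gaussian likelihoods of $s_{xz}$ and $s_{yz}$, which depend only on the truth values of $A = \{v_x > v_z\}$ and $B = \{v_y > v_z\}$. As a consequence, the two orderings with $(A, B) = (T, T)$ carry equal posterior weight, and likewise for the two with $(A, B) = (F, F)$. Expanding $\mathbb{E}[v_y - v_x \mid s_{xz}, s_{yz}]$ across all six orderings, the same-sided pairs cancel and one obtains
\begin{align*}
\mathbb{E}[v_y - v_x \mid s_{xz}, s_{yz}] = (p_4 - p_2)(r_1 - r_3),
\end{align*}
where $p_2$ and $p_4$ are the posterior masses of the mixed orderings $x > z > y$ and $y > z > x$. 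A standard Gaussian likelihood-ratio computation then yields $\text{sgn}(p_4 - p_2) = \text{sgn}(\tau_{yz} s_{yz} - \tau_{xz} s_{xz})$, and $r_1 - r_3 > 0$ since $Q$ is non-degenerate.

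Finally, under the true ordering $v_x, v_y > v_z$, the signals $s_{xz}$ and $s_{yz}$ are independent Gaussians with mean $+1$ and precisions $\tau_{xz}$ and $\tau_{yz}$, so $\tau_{yz} s_{yz} - \tau_{xz} s_{xz}$ is Gaussian with mean $\tau_{yz} - \tau_{xz} > 0$; its probability of being positive is $\Phi\bigl((\tau_{yz} - \tau_{xz})/\sqrt{\tau_{yz} + \tau_{xz}}\bigr) > 1/2$, which together with the continuity argument completes the claim. The main obstacle I anticipate is establishing the cancellation in the second step: it rests on the fact that at $\tau_{xy} = 0$ the posterior cannot distinguish orderings sharing the same $(A, B)$ pattern, so only the two mixed orderings drive the asymmetry between $\mathbb{E}[v_y \mid s]$ and $\mathbb{E}[v_x \mid s]$. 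Without this reduction the expression would depend on six distinct posterior masses and three order-statistic moments, and the sign comparison would no longer be transparent.
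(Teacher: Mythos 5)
Your proof is correct, and the heart of it -- the observation that at $\tau_{xy}=0$ the posterior cannot separate rankings sharing the same pattern of $\set{v_x>v_z}$ and $\set{v_y>v_z}$, so that $\mathbb{E}[v_y-v_x\mid s]$ collapses to $(r_1-r_3)\bigl(\Pr(y>z>x\mid s)-\Pr(x>z>y\mid s)\bigr)$ -- is exactly the identity the paper derives in its Lemma on context effects. Where you diverge is in how the sign of that difference is converted into a statement about $\rho(y,x|\set{z})$. The paper proves a stronger monotonicity result: working with an arbitrary symmetric noise density satisfying the MLRP, it defines an implicit threshold $e^*_{yz}(\tau_{xz},e_{xz})$ at which the two mixed-ranking posteriors tie, establishes three properties of this threshold, and then signs $\partial\rho(y,x|\set{z})/\partial\tau_{xz}$ and $\partial\rho(y,x|\set{z})/\partial\tau_{yz}$ by a change of variables and differentiation under the integral; the proposition follows by starting from the symmetric baseline $\tau_{xz}=\tau_{yz}$ (where $\rho=1/2$) and moving along the monotone direction. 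You instead exploit the Gaussian form to get the closed-form likelihood ratio $p_4/p_2=\exp\bigl(2(\tau_{yz}s_{yz}-\tau_{xz}s_{xz})\bigr)$ and hence the exact limiting probability $\Phi\bigl((\tau_{yz}-\tau_{xz})/\sqrt{\tau_{yz}+\tau_{xz}}\bigr)>1/2$. Your route is substantially shorter and gives an explicit formula for the size of the context effect, which the paper's argument does not; the paper's route buys generality beyond Gaussian noise and delivers the comparative static in $\tau_{xz},\tau_{yz}$ as a standalone lemma that is reused elsewhere. Both arguments lean on the same unproved-but-standard continuity of $\rho(y,x|\set{z})$ in $\tau_{xy}$ at $\tau_{xy}=0$ for the final $\epsilon$ step, so you are on equal footing there.
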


This result is closely related to \citet{natenzon_random_2019}, which develops a model that also produces context effects through Bayesian learning over options that vary in their comparability. While we compare the two models in Section \ref{SEC:existing_models}, we stress that here, our contribution is to specify how option features determine their comparability, which allows the framework to generate predictions from observable characteristics. In particular, when combined with the structure of $L_1$ complexity, the above result rationalizes documented decoy and asymmetric dominance effects in multiattribute choice \citep[e.g.][]{huber_adding_1982}, where adding a decoy option $z$ that is dominated or near-dominated by $x$, but not by $y$, boosts the choice share of $x$. In Appendix \ref{APP:decoy_effects}, we discuss how $L_1$ complexity predicts documented patterns of decoy effects that existing context-dependent models in multiattribute choice cannot capture. \\

\noindent \textbf{\textit{Compression Effects in Valuations}}. We model the valuation of an option as a sequence of choices in a multiple price list, a standard experimental procedure for eliciting valuations. Here, an option $x\in X$ is valued against a \textit{price list} $Z=\set{z^1,z^2,..,z^n}\subseteq X$: a set of options for which the ranking $v_{z^1}>v_{z^2}>...>v_{z^n}$ is unambiguous, i.e., $\tau_{z^iz^j}=\infty$ for all $z^i,z^j\in Z$. For each price $z^k\in Z$, the DM chooses between $x$ and $z^k$, revealing her valuation of $x$ in terms of $Z$. 


Specifically, in a \textit{valuation task} $(x,Z)$, the DM receives signals $s$ for each pairwise comparison in $\set{x}\cup Z$, forms posteriors over each option's value, and makes each binary choice $\set{x,z^1},...,\set{x,z^n}$ according to those posteriors.\footnote{Formally, this is a special case of an extension of our choice model to \textit{menu sequences}, in which the DM faces a sequence of menus evaluated under a shared signal realization; see Appendix \ref{APP:theory_tiebreaking}.} Since the DM perfectly learns the ranking of prices in $Z$, this yields a switching point in her choices: for any signal realization, there is an index $R\in \set{1,...,n,n+1}$ for which the DM chooses the option $x\in \set{x,z^k}$ for all $k\geq R$, and the price $z^k\in \set{x,z^k}$ for all $k<R$. $R$ reveals where the DM believes the object $x$ falls within the ranking of prices, i.e., the subject's valuation in terms of $Z$. We will be interested in the distribution over $R$ induced by $s$, which we denote by $R(x,Z)$. With slight abuse of notation, denote $v_k \equiv v_{z^k}$ and $\tau_{xk} \equiv \tau_{xz^k}$.  


Our model predicts that when $x$ is hard to compare to prices, valuations will be systematically compressed towards the center of the price list. Below, we formalize this as a limit result in the case of uniform comparability. In the sequel, we impose structure on $\tau$ using our specification of comparison complexity, and show via simulation how our framework explains a range of empirical regularities in valuation and choice. 

\begin{prop}
    \label{PROP:valuation} Consider a valuation task $(x,Z)$ with $R^*(x,Z)\in\set{1,...,n+1}$ such that $v_x>v_k$ for $k\geq R^*(x,Z)$ and $v_x<v_k$ otherwise. If $\tau_{xk}=\tau$ and $v_x\neq v_k$ for all $k$, 
    \begin{enumerate}[label = (\roman*)]
        \item As $\tau\to 0$, $\mathbb{E}[R(x,Z)]\to (n+2)/2$.
        \item As $\tau\to \infty$, $R(x,Z)\to_p R^*(x,Z)$.
    \end{enumerate}
\end{prop}

Intuitively, when $x$ is incomparable to prices, i.e., $\tau = 0$, the DM receives no information on where $x$ falls within the ranking of prices---her posterior puts equal probability on each ranking, so she values $x$ in the middle of the list, a ``pull to center'' bias. As $x$ becomes increasingly comparable, i.e., $\tau\to \infty$, valuations converge to the truth.

These ``pull-to-center'' distortions resemble predictions of Bayesian cognitive noise models, in which cognitive frictions compress valuations toward an intermediate default. We advance these models in two ways. First, our framework disciplines the magnitude of pull-to-center distortions---often a key degree of freedom in these models---by tying them to the comparability of options to prices. As our complexity measures predict which risky and intertemporal prospects are difficult to compare to prices, our framework  explains regularities in how pull-to-center distortions vary across options, as the subsequent applications illustrate. Second, our framework predicts that pull-to-center distortions are unstable: as they arise from the difficulty of comparing options against a numeraire good, they are not present in direct choice, and also depend on what numeraire the options are valued against.  As such, our framework not only accounts for valuation patterns previously attributed to complexity-driven pull-to-center effects, like probability weighting and hyperbolic discounting, but also explains documented instabilities in these patterns and associated preference reversals. 



\subsection{Preference Reversals}
\label{SEC:reversals}
Consider the preference reversal phenomenon in risky choice. Lottery $x$ pays a high amount with a low probability, while $y$ pays a modest sum with a high probability, e.g.
\begin{align*}
    &x:\quad\$23.50 \text{ with 19\%}\\
   &y:\quad\$4.75\text{ with 94\%}
\end{align*}
Most subjects choose $y$ over $x$ in direct choice, yet state a higher valuation for $x$. While taken at face value, these reversals challenge the existence of stable preferences, our model provides a simple rationalizing explanation: valuations are distorted when options are hard to compare to money, and tradeoffs make some options harder to compare to money than others. In particular, because our model predicts that $y$ is easier to compare to money than $x$, differential pull-to-center effects distort their valuations, generating reversals relative to direct choice. As we show below, the same mechanism generates analogous reversals in intertemporal choice. This perspective contrasts with existing formal accounts of preference reversals, such as salience and regret-based explanations (see Appendix \ref{APP:alt_models}), which are specific to lottery choice and so do not speak to intertemporal reversals. Furthermore, our mechanism yields novel predictions:  that by manipulating the ease of comparing options to prices, reversals can be \textit{eliminated}.

\subsubsection{Lottery Reversals}
Consider the lottery domain, where $v_x=\sum_{w}u(w)f_x(w)$ and $\tau$ has a CDF-complexity representation $\tau_{xy}^{CDF}=H\left(\frac{|EU(x)-EU(y)|}{d_{CDF}(x,y)}\right)$ for which $H(1)=\infty$; that is, the DM perfectly learns the ranking between lotteries with a dominance relationship. Call $l=(w_l,p_l)$ a simple lottery if it pays out $w_l$ with probability $p_l$, and nothing otherwise. 

\begin{ex}\label{EX:cequiv_reversals} \normalfont (Classic preference reversals). Suppose the DM is risk-averse, and again consider the lotteries $x=(\$23.5,0.19)$, $y=(\$4.75,0.94)$. First, consider binary choice between the lotteries. Since any risk-averse DM weakly prefers $y$ to $x$, as $x$ is a mean-preserving spread of $y$, our model predicts that the DM is more likely to choose $y$.

Now, suppose the DM values the lotteries. Formally, the DM faces a valuation task $(l,Z)$ in which the lottery $l=(w_l,p_l)$ is valued against a price list $Z=\set{z^1,...,z^n}$, where each $z^k=(w_k,1)$ is a sure payment. We restrict to ``adapted'' price lists, where $Z$ is \textit{adapted} to $l$ if $Z$ contains equal-sized steps and contains the minimal and maximal support points of $l$ (i.e., $w_k-w_{k+1}$ is constant in $k$, and $w_n=0$, $w_1=w_l$).
Recall that each valuation task $(l,Z)$ induces a distribution of switching points $R(l,Z)$; denote by $CE(l,Z)=1/2\left[w_{R(l,Z)-1}+w_{R(l,Z)}\right]$ the resulting distribution over certainty equivalents.

\begin{figure}[b!]
    \centering
    \small
    \begin{subfigure}[t]{0.48\textwidth}
        \includegraphics[width=\linewidth]{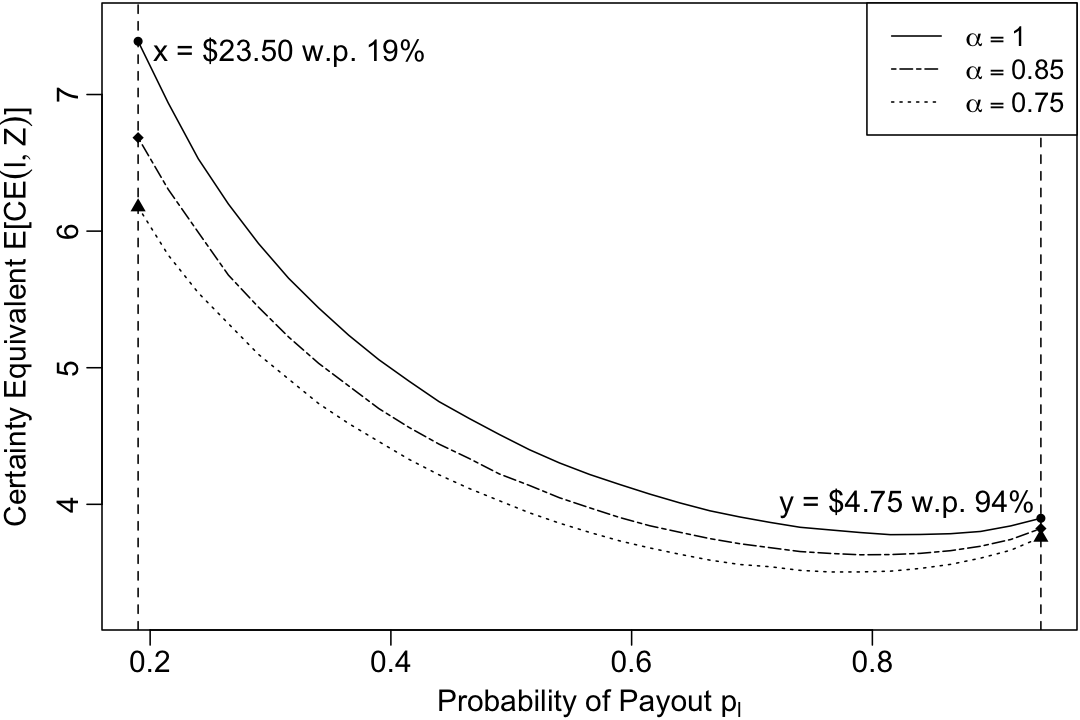}
        \caption{Certainty Equivalents $\mathbb{E}[CE(l,Z)]$}
        \label{fig:ce_sims_pr}
    \end{subfigure}
    \hspace{1em}
    \begin{subfigure}[t]{0.48\textwidth}
        \includegraphics[width=\linewidth]{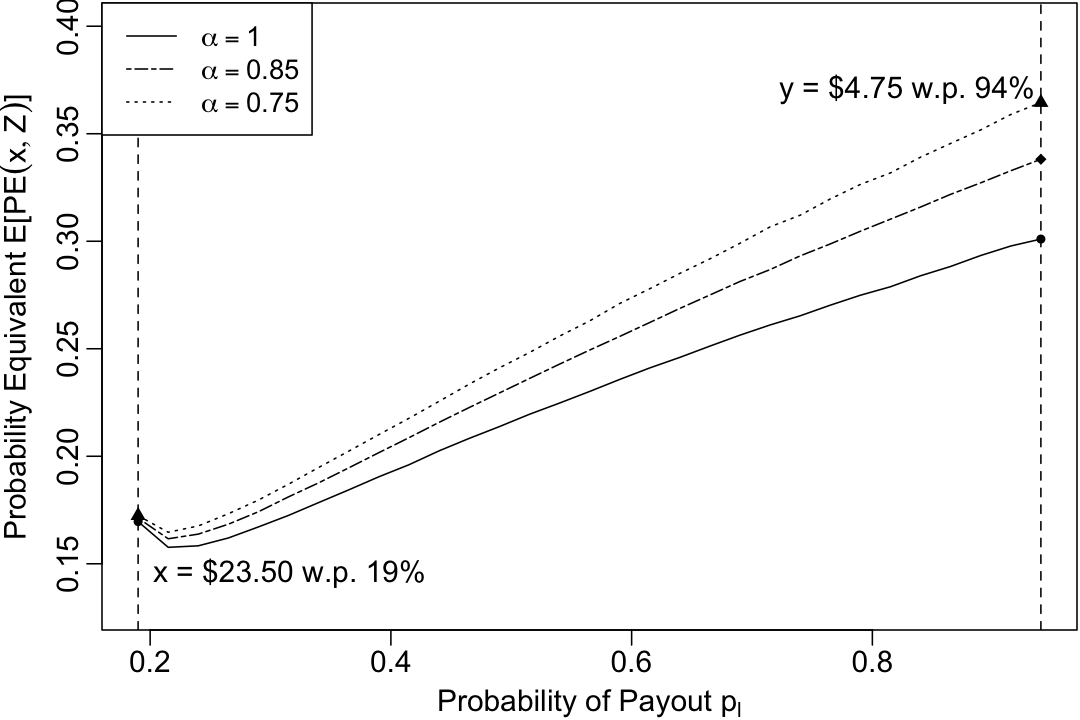}
        \caption{Probability Equivalents $\mathbb{E}[PE(l,Z)]$}
        \label{fig:pe_sims_pr}
    \end{subfigure}
    \caption{Simulated average certainty equivalents and probability equivalents for simple lotteries $l=(w_l,p_l)$ with expected value equal to that of $x=(23.50, 0.19)$ as a function of $p_l$. $Z$ is adapted to $l$ and we set $|Z|=15$. $\tau$ has a CDF-complexity representation parameterized by $u(w)=w^{\alpha}$ and $H(r)=(\Phi^{-1}(G(r)))^2$, for $G$ given by \eqref{eq:G_param} with $\kappa=0,\gamma=0.5$. Priors are distributed $Q\sim U[0,1]$.}
    \label{fig:lottery_reversal}
\end{figure}

Figure \ref{fig:ce_sims_pr} plots the expected certainty equivalents $\mathbb{E}[CE(l,Z)]$ simulated from our model for lotteries $l$ with the same expected value as $x$ and $y$. Notice that the riskier lottery $x$ is valued higher than $y$ on average, despite the fact that $y$ is weakly preferred to $x$ for our risk-averse DM. Intuitively, $x$ is dissimilar to and therefore difficult to compare to money, so its valuation is inflated toward the midpoint of the undominated range of prices $[0,w_x]$. On the other hand, $y$ is easier to compare to money, so its valuation is less distorted, and if anything is pulled \textit{downward} toward the midpoint of undominated prices $[0,w_y]$.  We have a reversal: $\rho(y,x)\geq 1/2$ and yet $\mathbb{E}[CE(x,Z)]>\mathbb{E}[CE(y,Z)]$. 

The rest of the figure traces our model's predictions for preference reversals in general: for a sufficiently high-risk lottery $x'$ and a sufficiently low-risk lottery $y'$, we  have $\mathbb{E}[CE(x',Z)]>\mathbb{E}[CE(y',Z)]$---even though $y'$ is preferred to $x'$ and so $\rho(y',x')\geq 1/2$.
\end{ex}

In our model, reversals result from the differential ease of comparing lotteries to money. This relates to past work suggesting that the difficulty of valuing lotteries against an incongruent response scale may generate preference reversals \citep{tversky_causes_1990, gilovich_compatibility_2002, butler_imprecision_2007}. Unlike previous work, however, we provide a formal account of both what makes lotteries hard to value, and how this difficulty distorts valuation. As such, our model generates novel predictions: in particular, that one can eliminate these reversals by manipulating the ease of comparing each lottery to prices---specifically, by changing the units against which the lotteries are valued.

\begin{ex}\label{EX:pequiv_reversals} \normalfont (Reversals in probabilities).
Consider lotteries $x=(\$23.50,19\%)$ and $y=(\$4.75,94\%)$ from Example \ref{EX:cequiv_reversals}. Instead of valuing $x$ and $y$ against money, suppose the DM assesses their \textit{probability-equivalents}: the probability $p$ that makes the lottery $z=(\$24,p)$  indifferent to each. Whereas $y$ was easier to compare to money, $x$ is easier to compare to this new numeraire, which is more similar to $x$ than to $y$. Our model predicts that this change in numeraire \textit{reverses} the distortions in the valuation of $x$ and $y$.

Formally, the DM now values $l=(w_l,p_l)$ against a \textit{probability list}: a price list of lotteries $Z = \set{z^1, ..., z^n}$, where each $z^k=(24,p_k)$; denote
by $PE(l, Z) = 1/2[p_{R(l,Z)-1} + p_{R(l,Z)
}]$ the DM's probability equivalents.\footnote{As in Example \ref{EX:cequiv_reversals}, we restrict to \textit{adapted} lists containing equal-sized steps with $p_1=p_l$, $p_n=0$.} Figure \ref{fig:pe_sims_pr} plots the simulated probability equivalents $\mathbb{E}[PE(l,Z)]$ for the same simple lotteries as in Figure \ref{fig:ce_sims_pr}. Intuitively, $y$ is harder to compare to prices, so its valuation is compressed upward toward the middle of the range of undominated probabilities, whereas $x$ is easier to compare, so its valuation is closer to the truth. Thus, $\mathbb{E}[PE(y,Z)] > \mathbb{E}[PE(x,Z)]$, eliminating the reversal.\footnote{\citet{gilovich_compatibility_2002} elicit preferences using ``probability matching,'' in which subjects indicate the probability $p$ that makes $x'=(\$23.5, p)$ indifferent to $y$. This is a special case of our manipulation in which the price list is made perfectly comparable to $x$, i.e., $z=(\$23.5,p)$. They document that probability matching is consistent with direct choice, consistent with the predictions of our model.}


\end{ex}
\subsubsection{Intertemporal Reversals} 
As our model generates preference reversals due to the tradeoffs in comparing options to prices, it predicts similar reversals in intertemporal choice. As this prediction closely parallels the lottery case, we relegate formal details and simulation results to Appendix \ref{APP:theory_time_reversals} and sketch the intuition here. Consider a DM who values the following options:
\begin{align*}
x &: \$27 \text{ in 750 days}\\
y &: \$8.25 \text{ in 30 days}
\end{align*}
The high-delay option $x$ is harder to compare to money than the low-delay option $y$, so pull-to-center effects inflate $x$'s valuation toward the middle of the range $[\$0,\$27]$, while $y$'s valuation is more accurate, and if anything is compressed downward. This can cause $x$ to be valued above $y$, even if the DM in truth prefers $y$ to $x$ --- producing a reversal between valuations and choice. As in lottery choice, our model predicts that this reversal can be eliminated by changing the numeraire: when options are instead valued using time equivalents---eliciting the delay $t$ that makes, for instance, $\$27.5$ in $t$ days indifferent to each option---the relative ease of comparison flips, reversing valuation distortions and eliminating the inconsistency with direct choice.

\subsection{Biases in Valuation of Risk and Time}
\label{SEC:valuation_biases}
The pull-to-center effects that produce preference reversals in our model also generate canonical patterns in the valuation of risk and time: probability-weighting and hyperbolic discounting. The same logic underlies both domains: consider a simple lottery with payout probability $p$, or a delayed payment with delay $t$. Away from the boundaries, tradeoffs make these options hard to compare to money, inducing a pull-to-center: valuations of small (large) probabilities are compressed upward (downward), generating probability weighting; and valuations of small (large) delays are compressed downward (upward), generating apparent hyperbolic discounting. 

While a range of explanations have been proposed for probability weighting and hyperbolic discounting, taken individually (see Appendix \ref{APP:alt_models} for a discussion), our framework provides a parsimonious explanation for both, which accords with evidence linking these patterns to cognitive frictions and complexity \citep{oprea_decisions_2024,frydman_source_2025,enke_cognitive_2023,enke_complexity_2025}. Furthermore, we make novel predictions on the instability of these patterns: that they can be \textit{reversed} by inverting the role of the numeraire.

\subsubsection{Apparent Probability Weighting in Lottery Valuation}

Consider the standard paradigm used to estimate risk preferences: eliciting certainty equivalents of simple lotteries $l=(\overline{w},p_l)$. 
As $p_l$ approaches 0 or 1 in the limit, $l$ is easy to compare to prices and so valuations converge to accuracy; away from the limit, tradeoffs between payouts and probabilities make $l$ harder to compare to prices, producing pull-to-center distortions: small probabilities are overvalued and large probabilities are undervalued. These two forces generate apparent inverse-S probability weighting in the DM's valuations, as seen in Figure \ref{fig:pwf_sim_ce}, which plots our model's predicted normalized certainty equivalents $\mathbb{E}[CE(l,Z)]/\overline{w}$ as a function of $p_l$. 

\begin{figure}[b!]
  \centering
    \small
    \begin{subfigure}[t]{0.48\textwidth}
        \includegraphics[width=\linewidth]{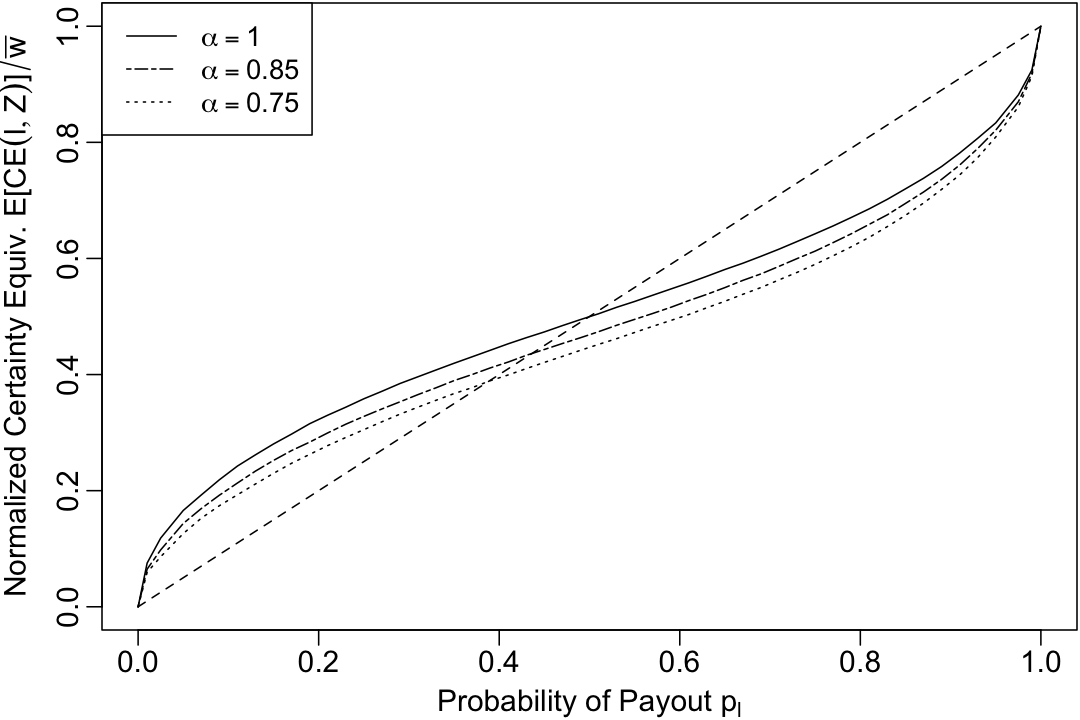}
        \caption{Simulated normalized average certainty equivalents $\mathbb{E}[CE(l,Z)]/\overline{w}$ for simple lotteries $l=(\overline{w},p_l)$ as a function of $p_l$.}
        \label{fig:pwf_sim_ce}
    \end{subfigure}
    \hspace{1em}
    \begin{subfigure}[t]{0.48\textwidth}
        \includegraphics[width=\linewidth]{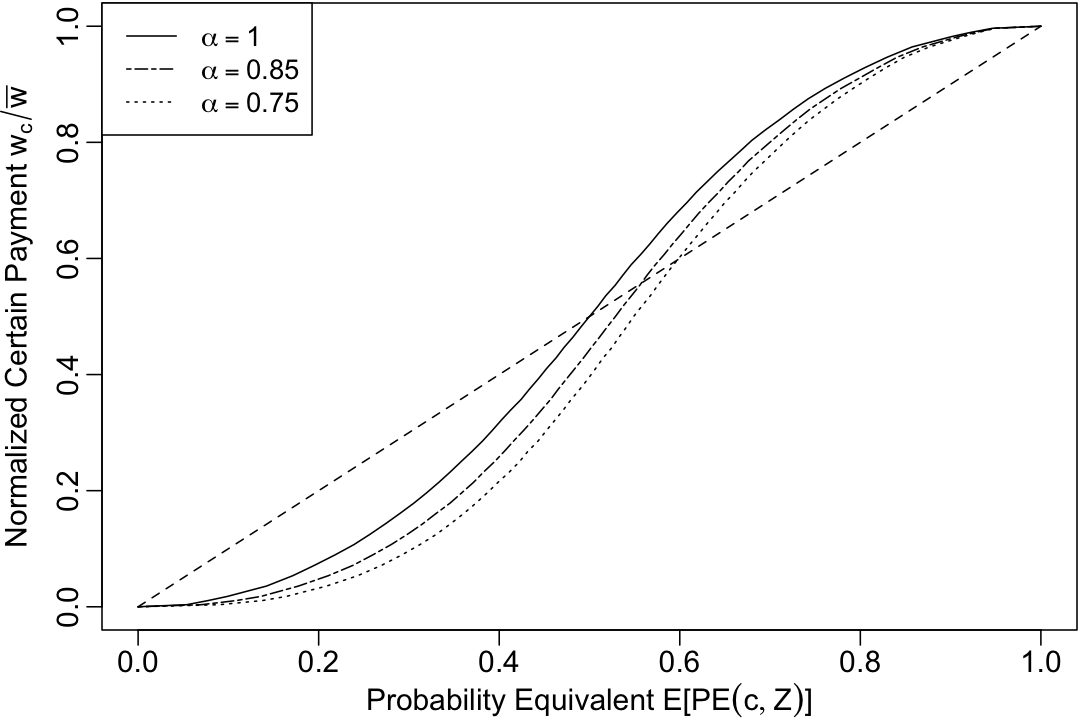}
        \caption{Simulated average probability equivalents $\mathbb{E}[PE(c,Z)]$ for certain payments $c=(w_c,p_l)$ as a function of  $w_c/\overline{w}$.}
        \label{fig:pwf_sim_pe}
    \end{subfigure}   
    \caption{Simulated certainty equivalents (left) and probability equivalents (right). $Z$ is adapted to $l$ with $|Z|=15$. $\tau$ has a CDF-complexity representation with $u(w)=w^{\alpha}$ and $H(r)=(\Phi^{-1}(G(r)))^2$, for $G$ given by \eqref{eq:G_param} with $\kappa=0,\gamma=0.5$. Priors are distributed $Q\sim U[0,1]$.}
    \label{fig:pwf_sim}
\end{figure}

Importantly, our model does not predict that probability weighting is a generic distortion; instead, it results from tradeoffs in comparing a lottery to a price list of sure payments. This is important for two reasons. First, our model can generate probability weighting in valuations even absent such distortions in binary choice, and can thus reconcile why the fourfold pattern is far less pronounced in valuation tasks than in direct choice \citep{harbaugh_fourfold_2010,bouchouicha_choice_2023}. Second, we predict that apparent probability weighting will be sensitive to the specific valuation paradigm. In particular, we predict that it is possible to \textit{reverse} the pattern of apparent probability-weighting with an appropriate choice of price list currency.

Consider an alternative paradigm for eliciting risk preferences using probability equivalents: the probability $p$ that makes the lottery $z=(\overline{w},p)$ indifferent to $c=(w_c,1)$. Tracing out probability equivalents as a function of the normalized certain payments $w_c/\overline{w}$ should---absent complexity-driven distortions---reveal the same preferences as  certainty equivalents. Figure \ref{fig:pwf_sim_pe} plots the predicted relationship between $w_c/\overline{w}$ (y-axis) and the associated probability equivalent $\mathbb{E}[PE(c,Z)]$ (x-axis) simulated from our model.  Here we see a reversal of the inverse-S pattern: the difficulty of comparing sure payments against the numeraire produces a pull-to-center in probability equivalents, generating \textit{underweighting} of small probabilities and \textit{overweighting} of large probabilities. This rationalizes evidence of such inconsistencies between certainty and probability equivalents \citep{sprenger_endowment_2015,feldman_certain_2024}.\footnote{Both papers offer an explanation for these inconsistencies based on loss aversion with stochastic reference points; see Appendix \ref{APP:alt_models} for a discussion of how these explanations relate to our model and data.}



\subsubsection{Apparent Hyperbolic Discounting in Intertemporal Valuation}

Consider a standard paradigm used to estimate time discounting: valuing delayed payments $\upsilon=(\overline{m},t_{\upsilon})$ in terms of money today. Figure \ref{fig:hbd_sim_pve} plots the normalized  valuations $\mathbb{E}[PVE(\upsilon,Z)]/\overline{m}$ simulated from our model as a function of $t_{\upsilon}$.\footnote{See Appendix \ref{APP:theory_time_reversals} for formal definitions of $PVE(\upsilon,Z)$ and $TE(c,Z)$.} Here, we see apparent hyperbolic discounting: pull-to-center effects cause  undervaluation of low delays and overvaluation of high delays, with valuations approaching accuracy as $\tau_v\to 0$.



\begin{figure}[t!]
  \centering
    \small
    \begin{subfigure}[t]{0.48\textwidth}
        \includegraphics[width=\linewidth]{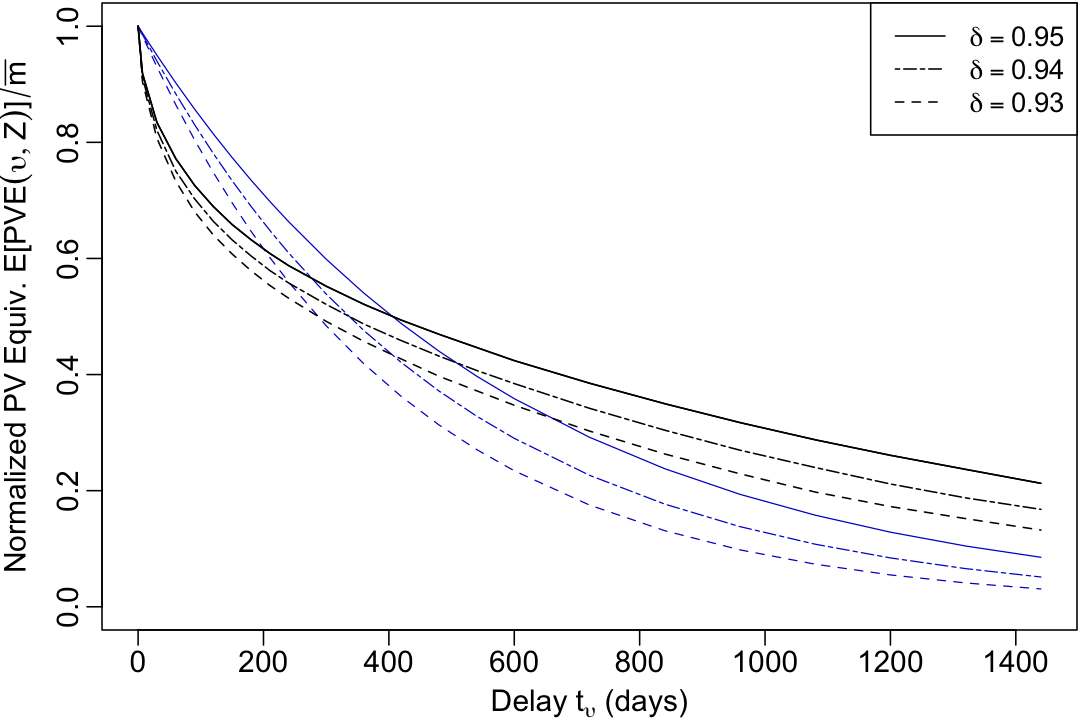}
        \caption{Simulated average present value equivalents $\mathbb{E}[PVE(\upsilon,Z)]$ (in black) for delayed payments $\upsilon=(\overline{m},t_{\upsilon})$ as a function of $t_{\upsilon}$.}
        \label{fig:hbd_sim_pve}
    \end{subfigure}
    \hspace{1em}
    \begin{subfigure}[t]{0.48\textwidth}
        \includegraphics[width=\linewidth]{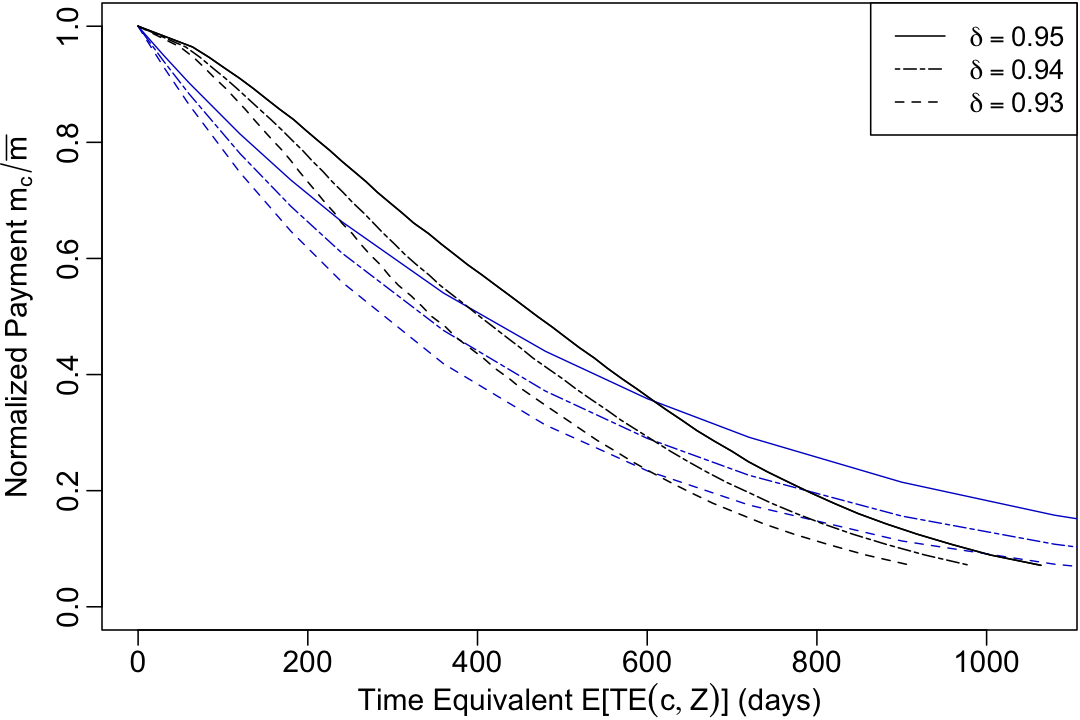}
        \caption{Relationship between simulated average time equivalents $\mathbb{E}[TE(c,Z)]$ (in black) for immediate payment $c=(m_c,0)$ and normalized amount $m_c/\overline{m}$.}
        \label{fig:hbd_sim_te}
    \end{subfigure}   
    \caption{Simulated present value equivalents (left) and time equivalents (right). For PVEs, $Z$ is adapted to $\upsilon$ with $|Z|=15$. For TEs, $Z=\set{z^1,...,z^n}$, where $z^k=(\overline{m},t_k)$, for $(t_1,...,t_n)=(0,7,30,60,120,180,240,360,480,600,720,900,1080,1260,1440)$ days. Blue curves plot distortion-free discount functions given the true discount rate $\delta$. $\tau$ has a CPF-complexity representation parameterized by $H(r)=(\Phi^{-1}(G(r)))^2$, for $G$ given by \eqref{eq:G_param} with $\kappa=0,\gamma=0.5$. Priors are distributed $Q\sim U[0,1]$.}
    \label{fig:hbd_sim}
\end{figure}


As our model generates this pattern via tradeoff difficulty rather than present-focused preferences, it predicts that hyperbolicity persists even when the valuation task involves front-end delays; see \href{https://jeffreyyang97.github.io/personalwebsite/CC_OA.pdf}{Supplemental Appendix F} for details. Our model can thus reconcile an empirical puzzle in the literature \citep{cohen_measuring_2020} that valuations often exhibit marked hyperbolicity and yet are stationary with respect to front-end delays.


Consider now an alternative valuation paradigm in which the DM assesses \textit{time equivalents}: the delay $t$ that makes $(\overline{m},t)$ indifferent to the immediate payment $c=(m_{c},0)$. Figure \ref{fig:hbd_sim_te} plots the predicted relationship between the immediate payment amount $m_c/\overline{m}$ (y-axis) and associated time equivalents $\mathbb{E}[TE(c,Z)]$ (x-axis). As with the reversal of probability weighting, the model predicts a reversal of hyperbolic discounting: pull-to-center effects generate overvaluation of payments close to the present and undervaluation of payments with longer delays.

Here we do not claim that observed hyperbolic discounting is purely a complexity-driven distortion; rather, we provide a mechanism explaining why canonical valuation tasks may overstate the degree of hyperbolic discounting present in direct choice. In \href{https://jeffreyyang97.github.io/personalwebsite/CC_OA.pdf}{Supplemental Appendix F}, we repeat the simulation exercise in Figure \ref{fig:hbd_sim} for a DM with a hyperbolic discount function, and show how comparison complexity magnifies the extent of true hyperbolic discounting in present value equivalents, and reduces it in time equivalents.  



%

\section{Experimental Evidence}
\label{SEC:experiments}

We turn to evaluating the explanatory power of our framework. We test whether 1) our tradeoff complexity measures predict noise in binary choice; 2) classic preference reversals can be eliminated by manipulating the ease of comparing options to prices; and 3) probability weighting and hyperbolic discounting can be reversed, as our framework predicts. We then quantify the predictive power of our binary choice model relative to benchmark models.



\subsection{Tests of Complexity Measures}
\label{SEC:binary_exps}
We assess the explanatory power of our complexity measures using binary choice experiments in multiattribute, intertemporal, and lottery choice. Below, we outline the goals and design features shared across the three experiments.
We then present domain-specific details and results.

Our primary goal is to test whether our measures of tradeoff complexity indeed predict noise in binary choice, as measured by within-subject choice inconsistency---a direct test of the model's prediction that tradeoffs generate noisier binary comparisons. We also assess the extent to which our measures predict two additional proxies for choice complexity commonly used in the literature: subjects' self-reported confidence in their decisions, as well as choice errors relative to a preference benchmark. To measure choice errors, we take a two-pronged approach. In multiattribute choice, we induce subjects' preferences over options so that errors are directly observable. In intertemporal and lottery choice, we estimate preferences from choice data and classify departures from these estimated preferences as errors.\footnote{In Appendix \ref{APP:axiom_tests}, we also report tests of the choice axioms of our model where our data allows.}


We carry out these analyses in three binary choice datasets. We run new experiments in multiattribute and intertemporal choice and compile existing data from \citet{enke_quantifying_2023} and \citet{peterson_using_2021} to study lottery choice. In our experiments, we recruit participants through an online survey platform to make 50 incentivized binary choices. To measure choice consistency, 10 of these problems are repeated in the survey. We also elicit participants' subjective certainty in the optimality of each of their decisions. We collect an average of 37 choices for each of 662 multiattribute choice problems and 1,097 intertemporal problems---a total of more than 66,000 decisions. The lottery dataset includes nearly 10,000 choice problems (over 1 million decisions) and includes similar measures of choice consistency and cognitive uncertainty.\footnote{Cognitive uncertainty is elicited only in the \citet{enke_quantifying_2023} dataset. Problems are only repeated in the \citet{peterson_using_2021} experiment. Unlike our experiments, problems here are repeated in direct succession.}\\

\begin{figure}[htbp]
    \small
    \begin{subfigure}[t]{0.325\textwidth}
        \caption{\centering Multiattribute Inconsistency}
        \label{subfig:multi_incons}
        \vspace{0.5em}
        {\includegraphics[width=\linewidth]{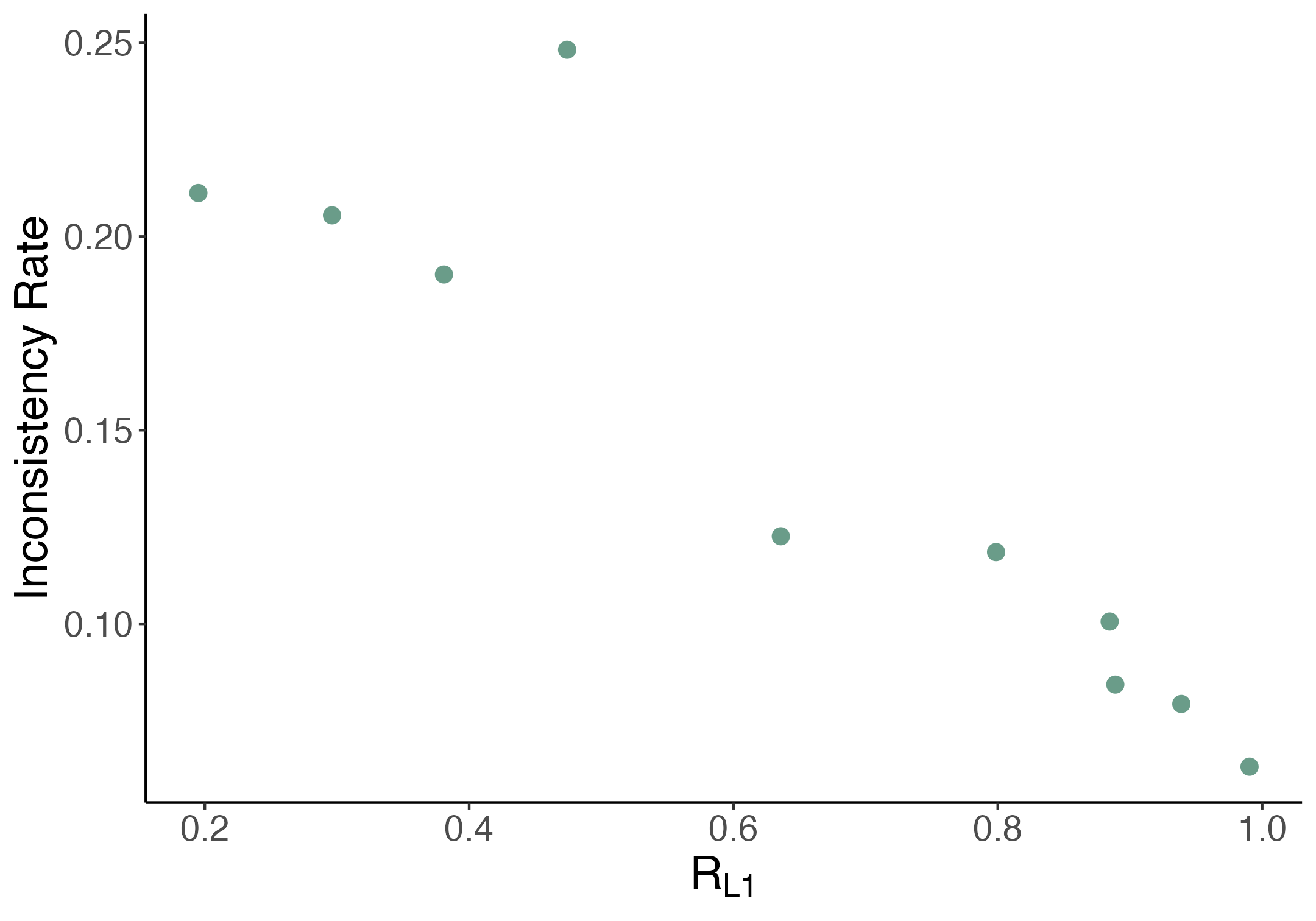}}
    \end{subfigure}
    \begin{subfigure}[t]{0.325\textwidth}
        \caption{\centering Multiattribute Error Rates}
        \label{subfig:multi_errors}
        \vspace{0.5em}
        {\includegraphics[width=\linewidth]{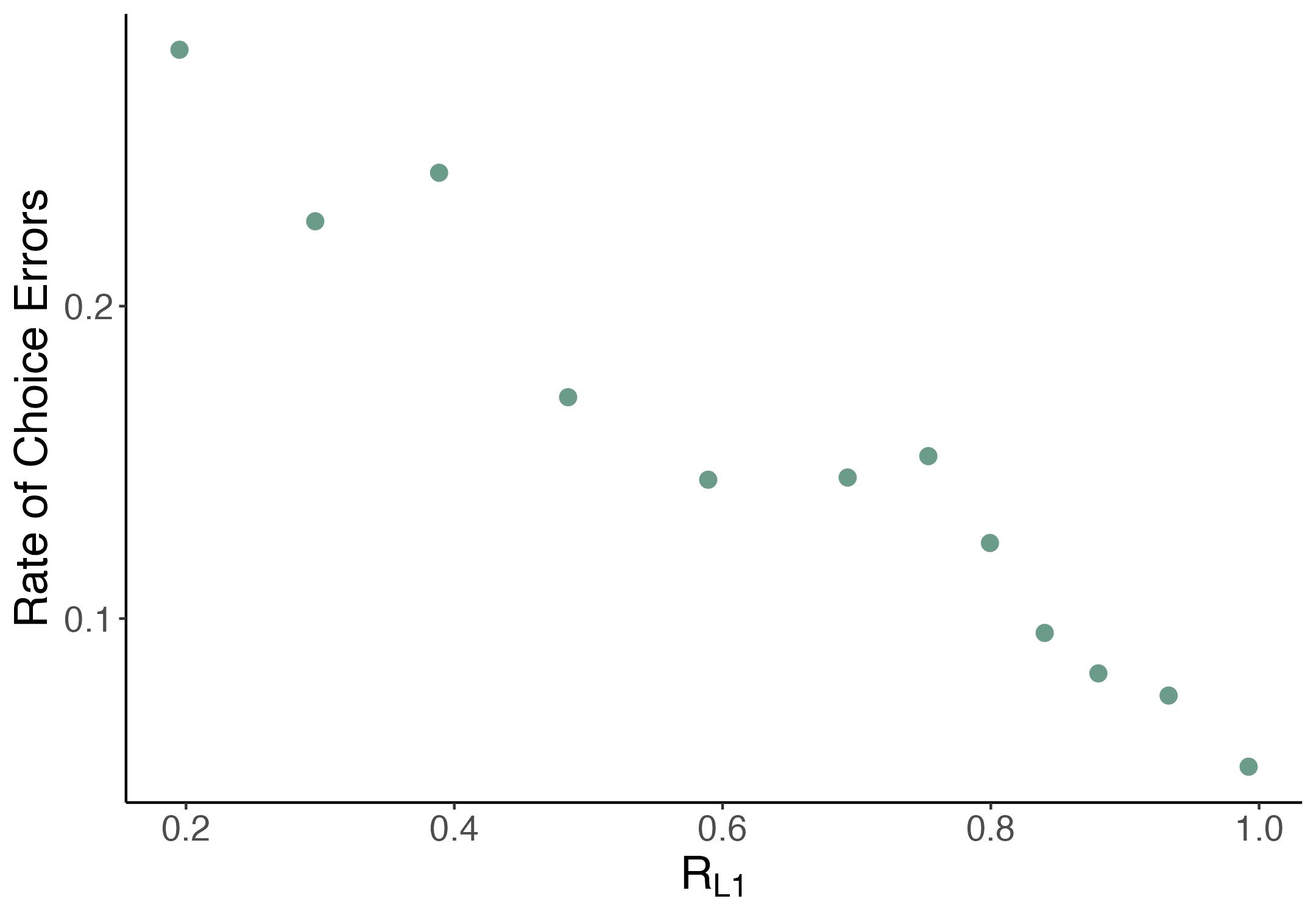}}
    \end{subfigure}
    \begin{subfigure}[t]{0.325\textwidth}
        \caption{\centering Multiattribute Uncertainty }
        \label{subfig:multi_cu}
        \vspace{0.5em}
        {\includegraphics[width=\linewidth]{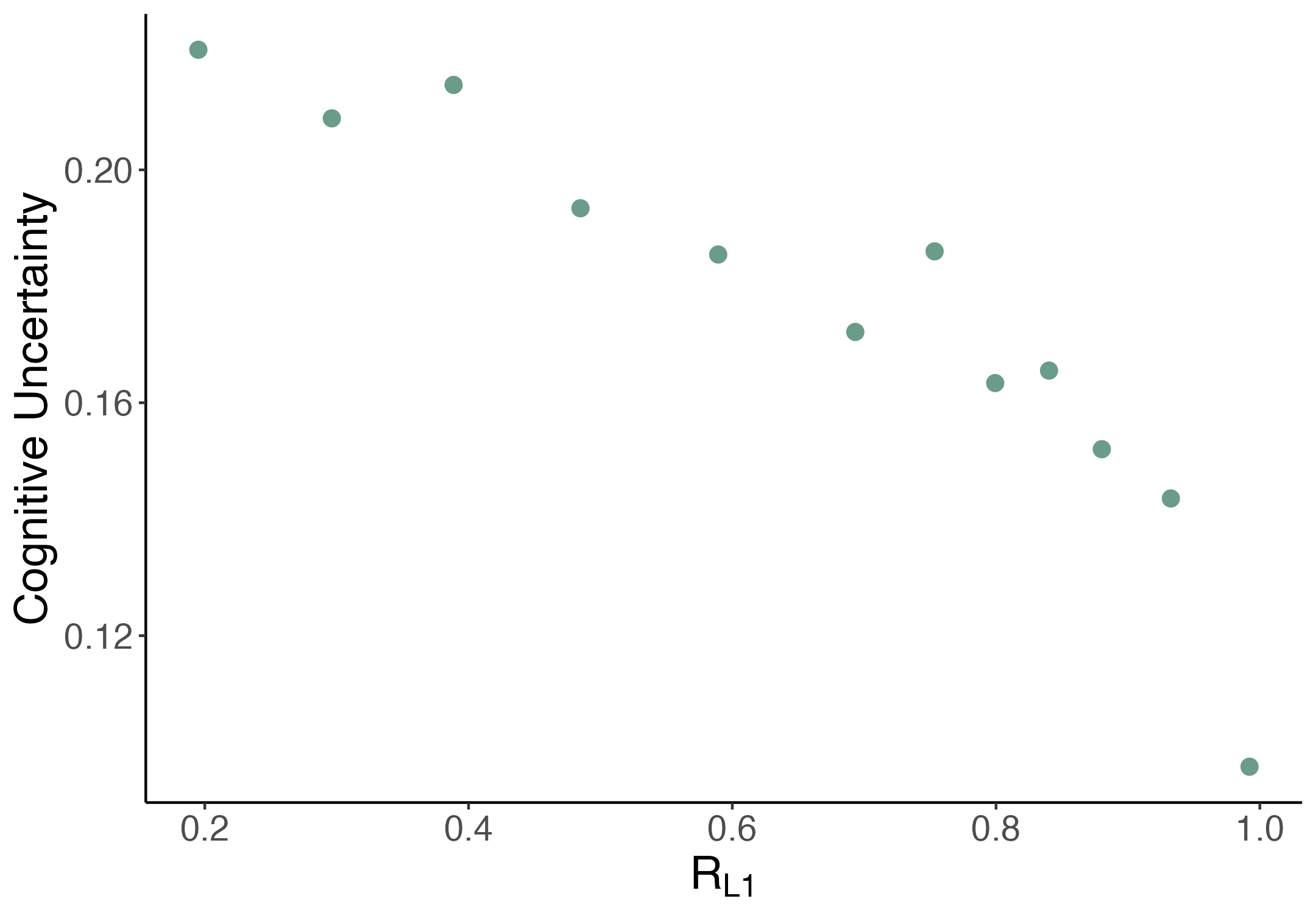}}
    \end{subfigure}
    
    \small
        \begin{subfigure}[t]{0.325\textwidth}
    \centering
    \caption{\centering Lottery Inconsistency}
        \vspace{0.5em}
        {\includegraphics[width=\linewidth]{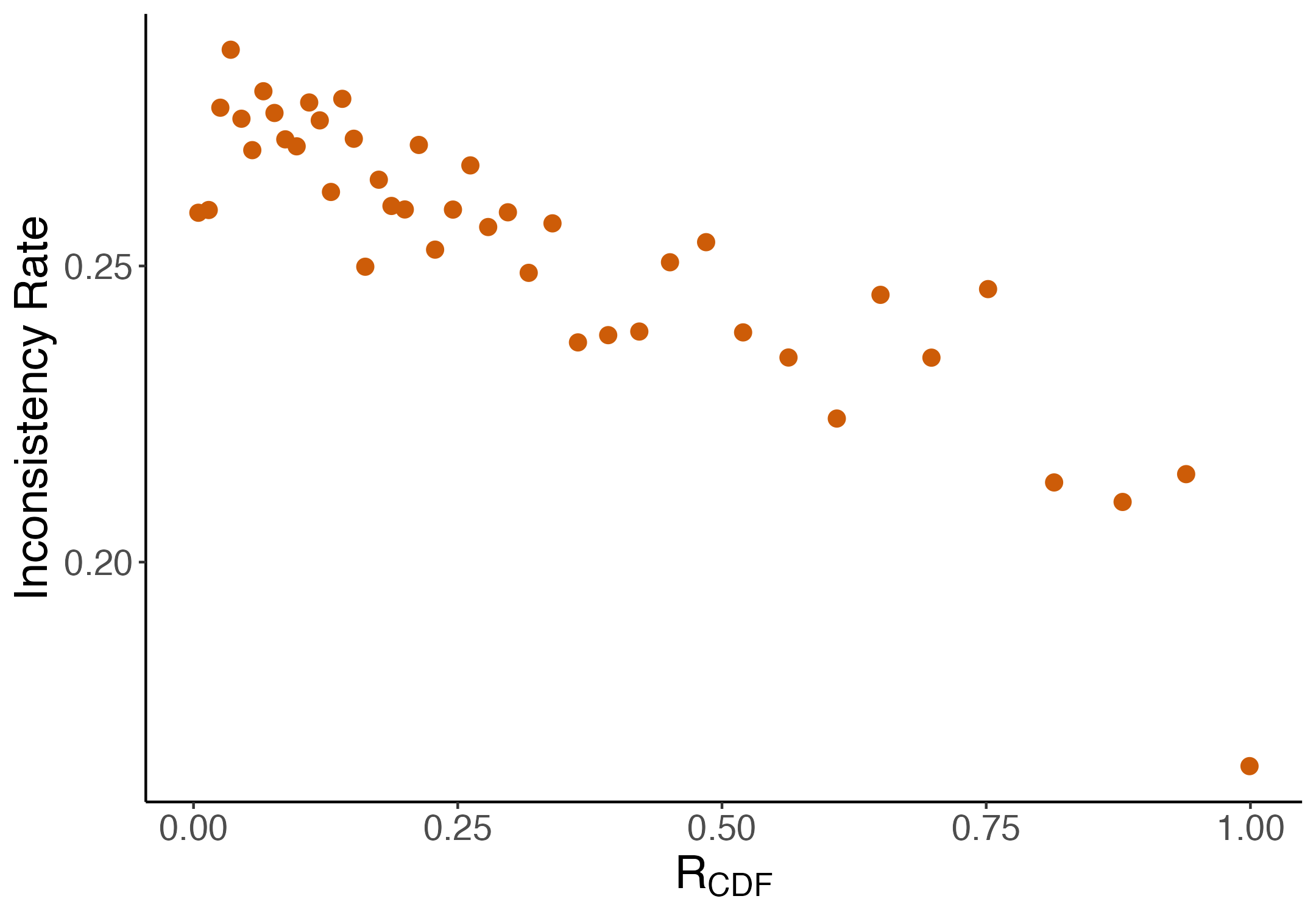}}
        \label{subfig:lottery_incons}
    \end{subfigure}
    \begin{subfigure}[t]{0.325\textwidth}
    \centering
        \caption{\centering Lottery ``Error'' Rates}
        \vspace{0.5em}
        {\includegraphics[width=\linewidth]{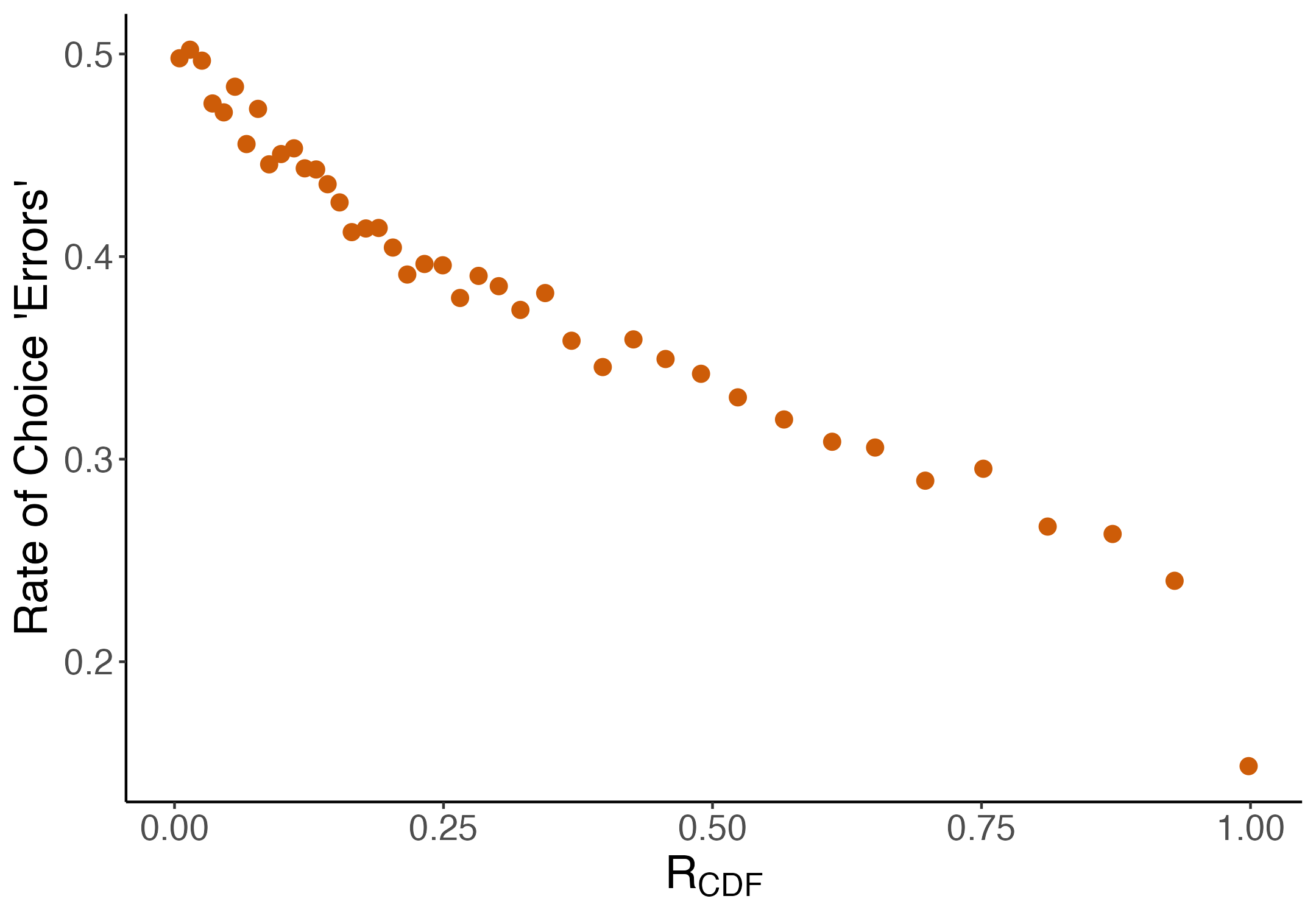}}
        \label{subfig:lottery_errors}
    \end{subfigure}
    \begin{subfigure}[t]{0.325\textwidth}
    \centering
    \caption{\centering Lottery Uncertainty }
        \vspace{0.5em}
        {\includegraphics[width=\linewidth]{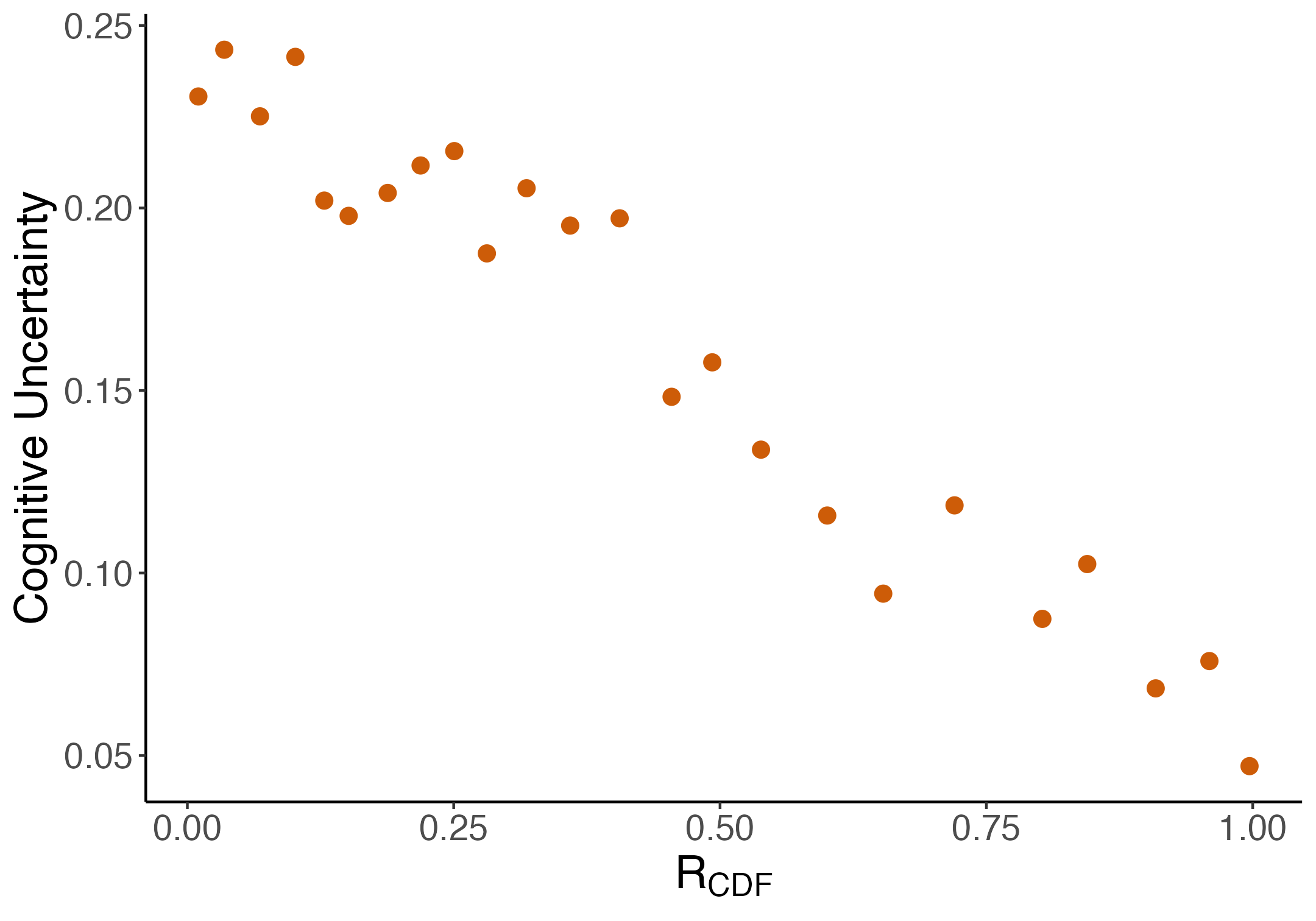}}
        \label{subfig:lottery_cu}
    \end{subfigure}

        \small
    \begin{subfigure}[t]{0.325\textwidth}
    \caption{\centering Intertemporal Inconsistency}
        \vspace{0.5em}
        {\includegraphics[width=\linewidth]{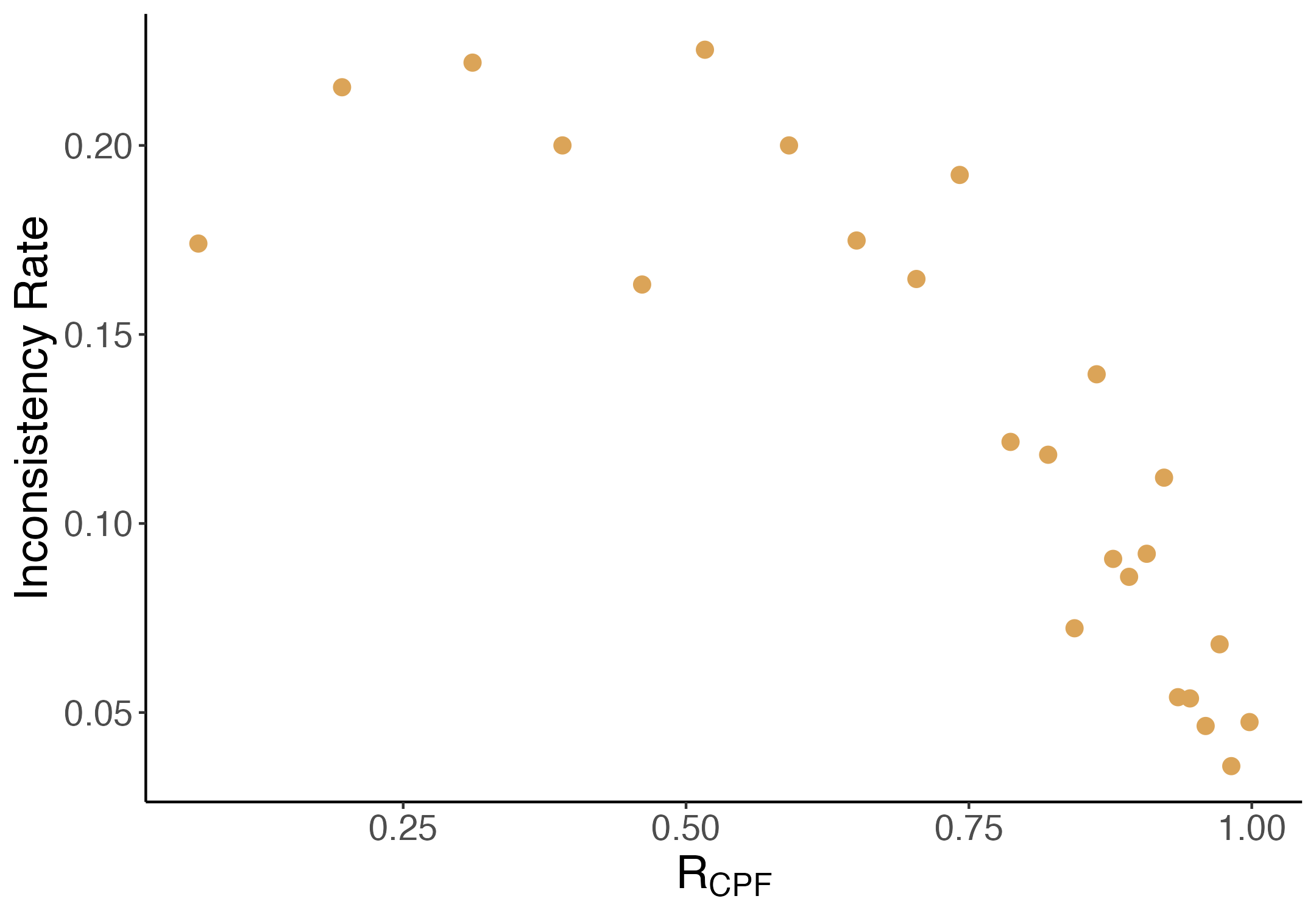}}
        \label{subfig:temp_incons}
    \end{subfigure}
    \begin{subfigure}[t]{0.325\textwidth}
        \caption{\centering Intertemporal ``Error'' Rates}
        \vspace{0.5em}
        {\includegraphics[width=\linewidth]{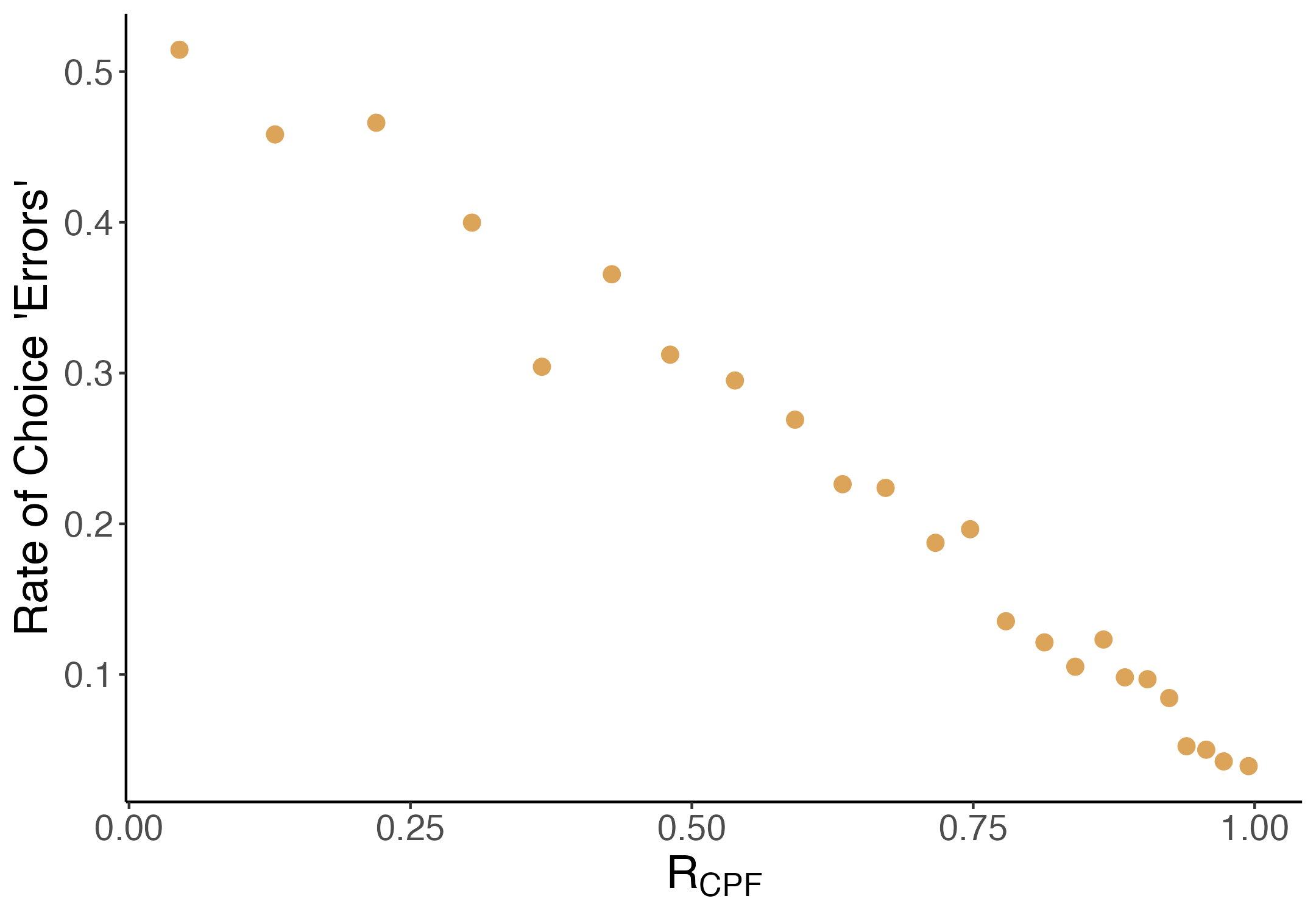}}
        \label{subfig:temp_errors}
    \end{subfigure}
    \begin{subfigure}[t]{0.325\textwidth}
    \caption{\centering Intertemporal Uncertainty }
        \vspace{0.5em}
        {\includegraphics[width=\linewidth]{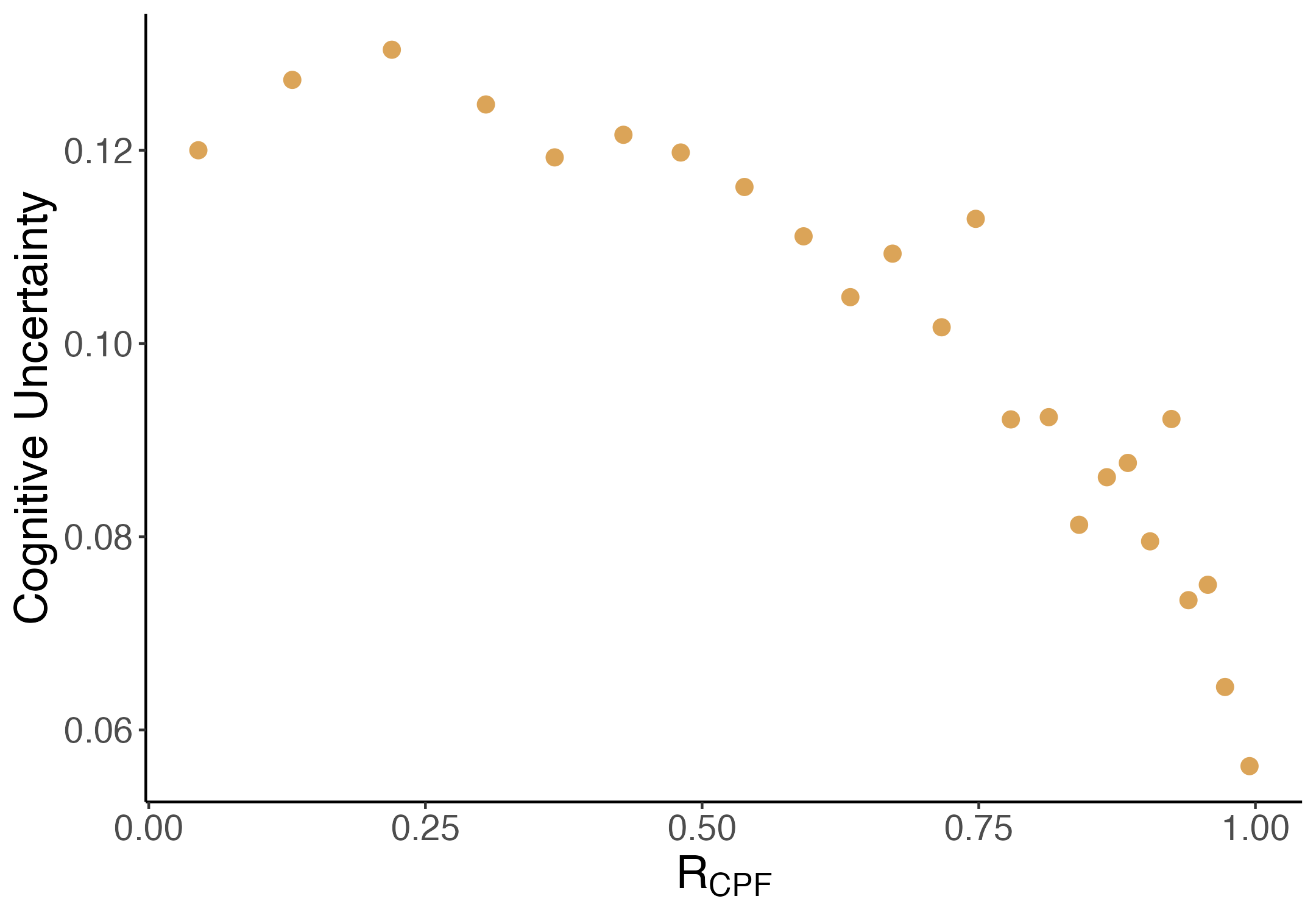}}
        \label{subfig:temp_cu}
    \end{subfigure}
    \captionsetup{font=small}\caption{Binscatter of problem-level error rates, choice inconsistency, and cognitive uncertainty versus the value-dissimilarity ratio. 
    Problem-level inconsistency rates in panels (a), (d), and (g) are constructed as the percent chance subjects choose differently in any repeated instance of the choice problem. Problem-level error rates in panels (b), (e), and (h) are constructed as the rate of choosing (b) the lower-value option, (e) the less-preferred option according to the best-fit exponential discounted utility model, and (h) the less-preferred option according to the best fit expected utility model (see Appendix \ref{APP:structural} for estimation details). Problem-level cognitive uncertainty in panels (c), (f), (i) is constructed as the average subjective likelihood that subjects assign to making an error in that choice problem, i.e., choosing the less-preferred option. The value-dissimilarity ratios $R_{L1}$, $R_{CPF}$, and $R_{CDF}$ follow Definitions \ref{def:L1_complexity}, \ref{def:CDF_complexity}, and \ref{def:CPF_complexity}, respectively, where the preference parameters $\beta$, $\delta$, and $u$ are given by the known attribute weights in multiattribute choice, the best-fit expected utility preferences in lottery choice, and the best-fit exponential discount rate in intertemporal choice, respectively. See Appendix Tables \ref{tab:l1regs}, \ref{tab:cpfregs_global}, and \ref{tab:cdfregs_global} for corresponding regression analyses.}
    \label{fig:experiment_plots}
\end{figure}

\noindent \textbf{\textit{Multiattribute Choice}}. Participants make binary choices between hypothetical phone plans with either two, three, or four attributes, which include a device cost, a monthly flat fee, a data usage fee, and a quarterly wi-fi fee. Each choice problem has an objective payoff-maximizing answer, which allows us to observe choice errors: subjects choose on behalf of a hypothetical consumer with a known budget and data usage, and are incentivized to choose the plan that minimizes costs for the consumer. If a participant is selected to earn a bonus (1 in 2 chance), we select one of their choices at random and pay them based on the money saved. Across all choice problems, the payoff-maximizing plan yields an average bonus that is \$4.78 higher than the alternative. For more detail on the design and pre-registration, see Appendix \ref{APP:experiments}.

Figure $\ref{subfig:multi_incons}$ plots the relationship between the $L_1$ ratio and choice inconsistency rates. As predicted, this relationship is decreasing---choices are noisier when the comparison involves greater tradeoffs, as measured by the $L_1$ ratio. This relationship is quantitatively meaningful: the average inconsistency rate, around 5\% for problems near-dominance, increases more than four-fold for problems with the lowest values of the $L_1$ ratio. Figures \ref{subfig:multi_incons} and \ref{subfig:multi_errors} relate the $L_1$ ratio to our other indicators of choice complexity, errors and cognitive uncertainty; both are strongly decreasing in the ratio. Importantly, all these relationships are unchanged when controlling for value differences, as the regression analysis in Appendix Table \ref{tab:l1regs} shows, which suggests that they are driven by variation in the value-dissimilarity \textit{ratio}, rather than value differences alone.\footnote{We pre-registered analyses restricting to subjects who do not report using a calculator in the experiment (82.5\% of the sample). Quantitative results are virtually unchanged when restricting to this sample.}\\


%

\noindent\textbf{\textit{Lottery Choice}}. In the lottery choice experiments, participants choose between lotteries which pay off different amounts with known probabilities. If selected to receive a bonus, they receive the outcome of a lottery they chose in a randomly selected decision. Here, the CDF ratio takes an input a preference object---the Bernoulli utility function, which we estimate from the data using a representative-agent CRRA parameterization (see Appendix \ref{APP:structural_risk}).\footnote{Results are robust to a range of alternative preferences used to compute the ratio. For instance, results using risk neutral preferences are similar: the resulting ratio has a correlation of 0.98 with the one used in our analyses.} 

Figure \ref{subfig:lottery_incons} plots the relationship between the CDF ratio and choice inconsistency. As in multiattribute choice, our measure of tradeoff complexity strongly predicts choice inconsistency rates, which range from 16\% to nearly 30\% for problems with the highest vs. lowest value of the CDF ratio. Figures \ref{subfig:lottery_errors} and \ref{subfig:lottery_cu} show that the CDF ratio also strongly predicts our other complexity indicators, ``errors'' and cognitive uncertainty, where we code a choice as an error if it departs from the estimated risk preferences used to compute the ratio. To account for preference heterogeneity, we also classify errors using individually-estimated preference parameters, and find similar results (see Appendix Table \ref{tab:cdf_error_indiv}).\footnote{We restrict this analysis to the \citet{enke_quantifying_2023} data since the \citet{peterson_using_2021} data contain relatively few unique choice problems per subject (75\% of subjects face $\leq 14$ unique problems).} All of these relationships are unchanged when controlling for value differences (see Appendix Table \ref{tab:cdfregs_global}), which again suggests that they are driven by variation in the value-dissimilarity \textit{ratio}, rather than value differences alone. 

Finally, we repeat the ``errors'' analysis on data from a separate binary choice experiment in \citet{enke_quantifying_2023}, where subjects face incentives that induce risk-neutral preferences over lotteries, which allows us to define objective errors. In this data as well, errors are strongly decreasing in the ratio (see Appendix Table \ref{tab:prediction_error}).\\


\noindent\textbf{\textit{Intertemporal Choice}}. Participants make binary choices between time-dated payoff streams. Each option has up to two payoffs ranging between \$1 and \$40, to be received at delays ranging between the present and 2 years in the future. If a participant is selected to earn a bonus (1 in 5 chance), we select a decision at random and pay out the chosen payoff stream on the specified dates. For more details on the design and pre-registration, see Appendix \ref{APP:experiments}. In this domain, the CPF ratio takes as input a discount factor $\delta$, which we estimate from the data using a representative agent model (see Appendix \ref{APP:structural_time}).
 
Figure \ref{subfig:temp_incons} plots the relationship between the CPF ratio and choice inconsistency. As in our other domains, the ratio is strongly predictive of choice inconsistency, which is around 5\% of problems with the highest value of the CPF ratio and increases four-fold for problems with the lowest value. As figures \ref{subfig:temp_errors} and \ref{subfig:temp_cu} show, the ratio also strongly predicts choice ``errors'' and cognitive uncertainty, where here we classify choices as errors using our estimated discount factor $\hat\delta$; the CPF ratio is similarly predictive of errors classified using individually estimated discount factors $\hat{\delta}_i$ (see Appendix Table \ref{tab:cpf_error_indiv}).\footnote{The discount factor estimated from our choice data, which involves choices over money, should \textit{not} be interpreted as subjects' ``pure'' time preferences \citep{cohen_measuring_2020}: it could reflect, for instance, access to outside credit or beliefs about repayment risk. Rather than identifying time preferences, we are interested in studying which problems are hard when subjects face tradeoffs between money and delays.} As in our other domains, all of these relationships persist when controlling for estimated value differences in the comparison (see Appendix Table \ref{tab:cpfregs_global}).\\

\subsection{Tests of Preference Reversal Predictions}
\label{SEC:reversal_exps}

We run experiments mirroring the simulation exercises in Section \ref{SEC:reversals} to test our model predictions on preference reversals. Specifically, we establish the presence of reversals in lottery and intertemporal choice, and test the novel model prediction that these reversals can be eliminated by manipulating the ease of comparing options to prices.

\begin{table}[b!]
\begin{center}
\begin{tabular}{l|l|l|l}
                   & \textit{Lottery}  &                     & \textit{Intertemporal} \\ \hline
                   & \$4.50 w.p. 98\%  &                     & \$8.25 in 30 days      \\
\textit{Low-risk/delay}  & \$4.75 w.p. 94\%  &  & \$9.50 in 90 days      \\
                   & \$5.00 w.p. 90\%     &                     & \$11.00 in 150 days       \\ \hline
                   & \$19.50 w.p. 23\% &                     & \$24.00 in 630 days       \\
\textit{High-risk/delay} & \$21.25 w.p. 21\% & & \$25.50 in 690 days    \\
                   & \$23.50 w.p. 19\% &                     & \$27.00 in 750 days  \\
\end{tabular}
\end{center}
\caption{Base options used in reversal experiments. We consider every possible pairing of a low-risk, high-risk option for lotteries; and every possible pairing of a low-delay, high-delay option for delayed payments.}
\label{tab:reversals}
\end{table}

In each domain, we construct two sets of ``base'' options: 3 high-risk and 3 low-risk lotteries, and 3 high-delay and 3 low-delay payoff flows (see Table \ref{tab:reversals}). We elicit binary choices between these sets of options, as well as multiple price list (MPL) valuations of each option. In these valuation tasks, we manipulate the ease of comparing each option to the price list: for lotteries, we elicit both certainty equivalents and probability equivalents using a yardstick lottery that pays \$24 with $p$\% chance; for intertemporal choice, we elicit both present value equivalents and time equivalents using a yardstick that pays \$27.5 in $t$ days. We elicit choices and valuations for both the base options as well as a ``scaled-up'' version of each option, where we multiply payouts by a factor of 1.6.


Recall the predictions developed in Section \ref{SEC:reversals}. First, binary choice probabilities will favor low-risk (low-delay) options over high-risk (high-delay) options.\footnote{We calibrated these options so that under the preferences estimated from our binary choice data, the low-risk/low-delay options are preferred.} Second, valuations will be higher for high-risk (high-delay) options compared to low-risk (low-delay) options when they are valued in terms of money. Third, these relative valuations will flip when options are instead valued using probability equivalents (time equivalents). 

We run separate incentivized lottery and intertemporal choice experiments on an online survey platform with 151 and 152 subjects, respectively. Subjects complete two parts, in random order: \textit{Binary Choice} and \textit{Valuation}. In \textit{Binary choice}, subjects make 16 direct choices between option pairs. In \textit{Valuation}, participants value all 12 options (6 base options, at 2 scale factors) using MPLs corresponding to one of two randomly assigned valuation modes: certainty equivalents or probability equivalents in the lottery experiment, and present value equivalents or time equivalents in the intertemporal experiment. See Appendix \ref{APP:experiments_val} for additional design details.

\begin{figure}[t!]
    \centering
    \begin{subfigure}[t]{0.495\textwidth}
        \includegraphics[width=\linewidth]{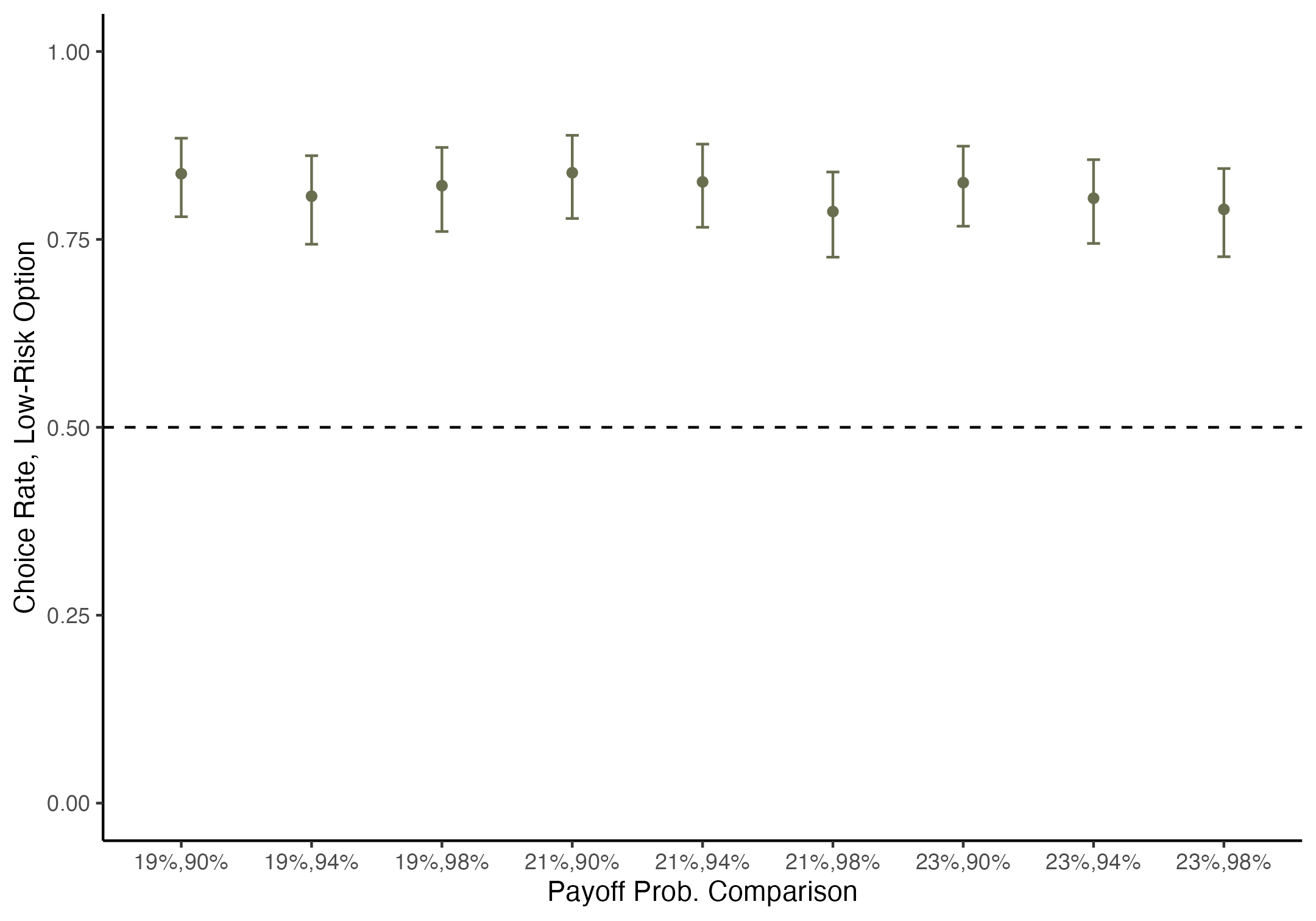}
        \caption{Binary Choice Probabilities}
        \label{fig:risk_pr_choice}
    \end{subfigure}
    \begin{subfigure}[t]{0.495\textwidth}
        \includegraphics[width=\linewidth]{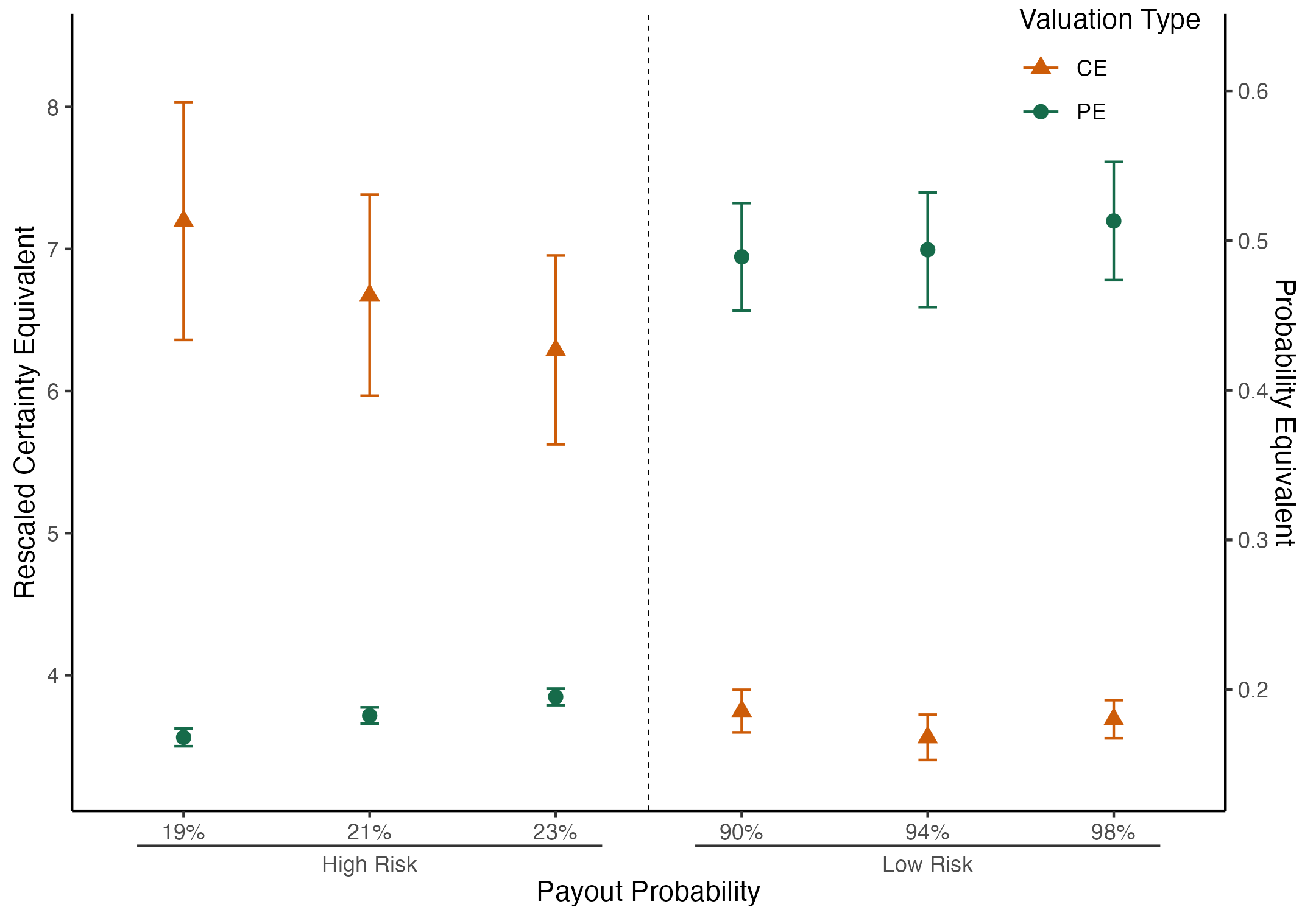}
        \caption{Certainty and Probability Equivalents}
        \label{fig:risk_pr_val}
    \end{subfigure}
    \caption{Preference reversal experiments for simple lotteries, aggregated across scale factor. Panel (a) presents binary choice rates for each high/low risk lottery comparison. Panel (b) presents average rescaled certainty equivalents, computed by dividing each certainty equivalent by the scale factor (scale on left axis), and average probability equivalents (scale on right axis). Error bars reflect 95\% confidence intervals.}
    \label{fig:risk_pr}
\end{figure}
\begin{figure}[t!]
    \centering
    \begin{subfigure}[t]{0.495\textwidth}
        \includegraphics[width=\linewidth]{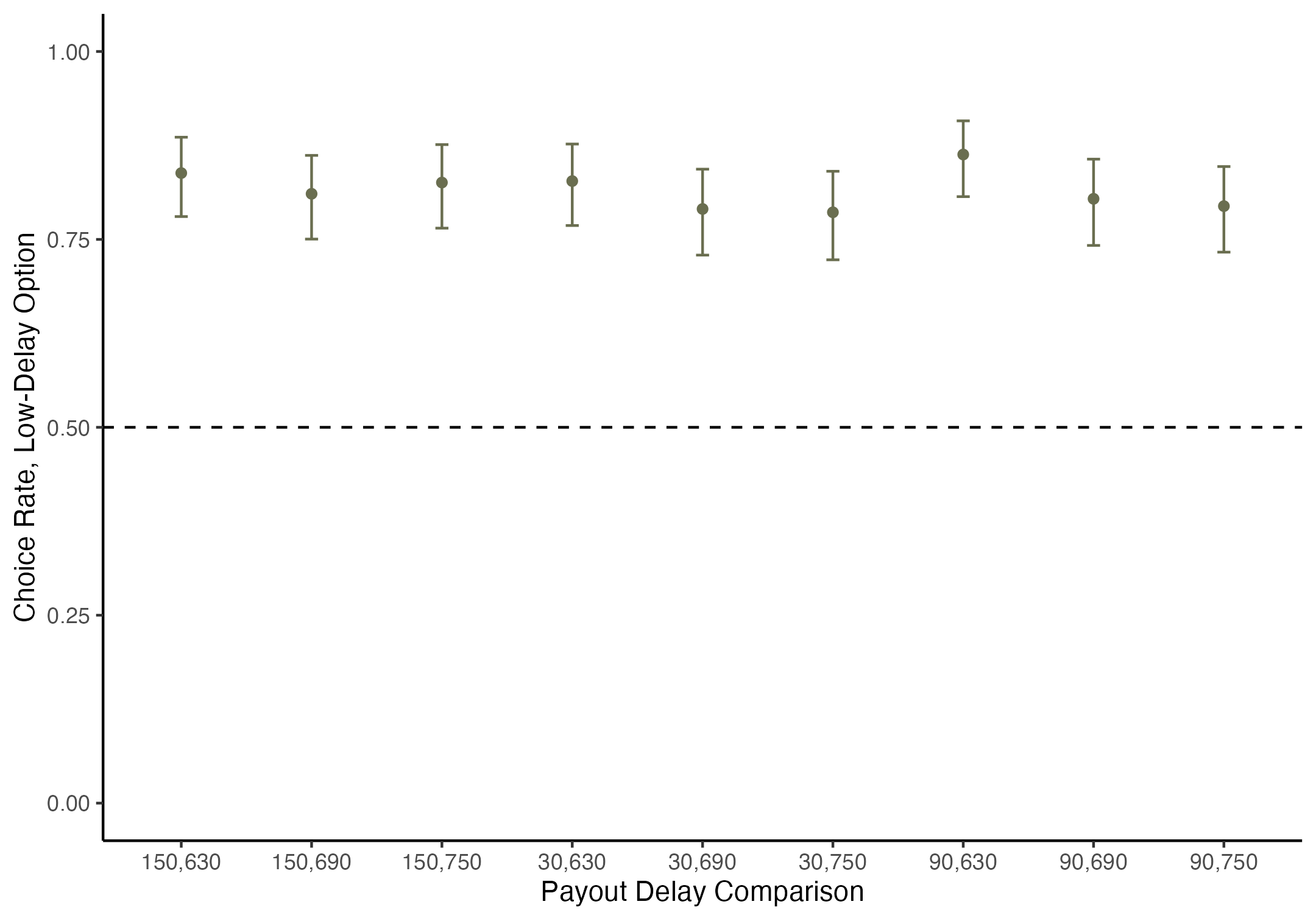}
        \caption{Binary Choice Probabilities}
        \label{fig:time_pr_choice}
    \end{subfigure}
    \begin{subfigure}[t]{0.495\textwidth}
        \includegraphics[width=\linewidth]{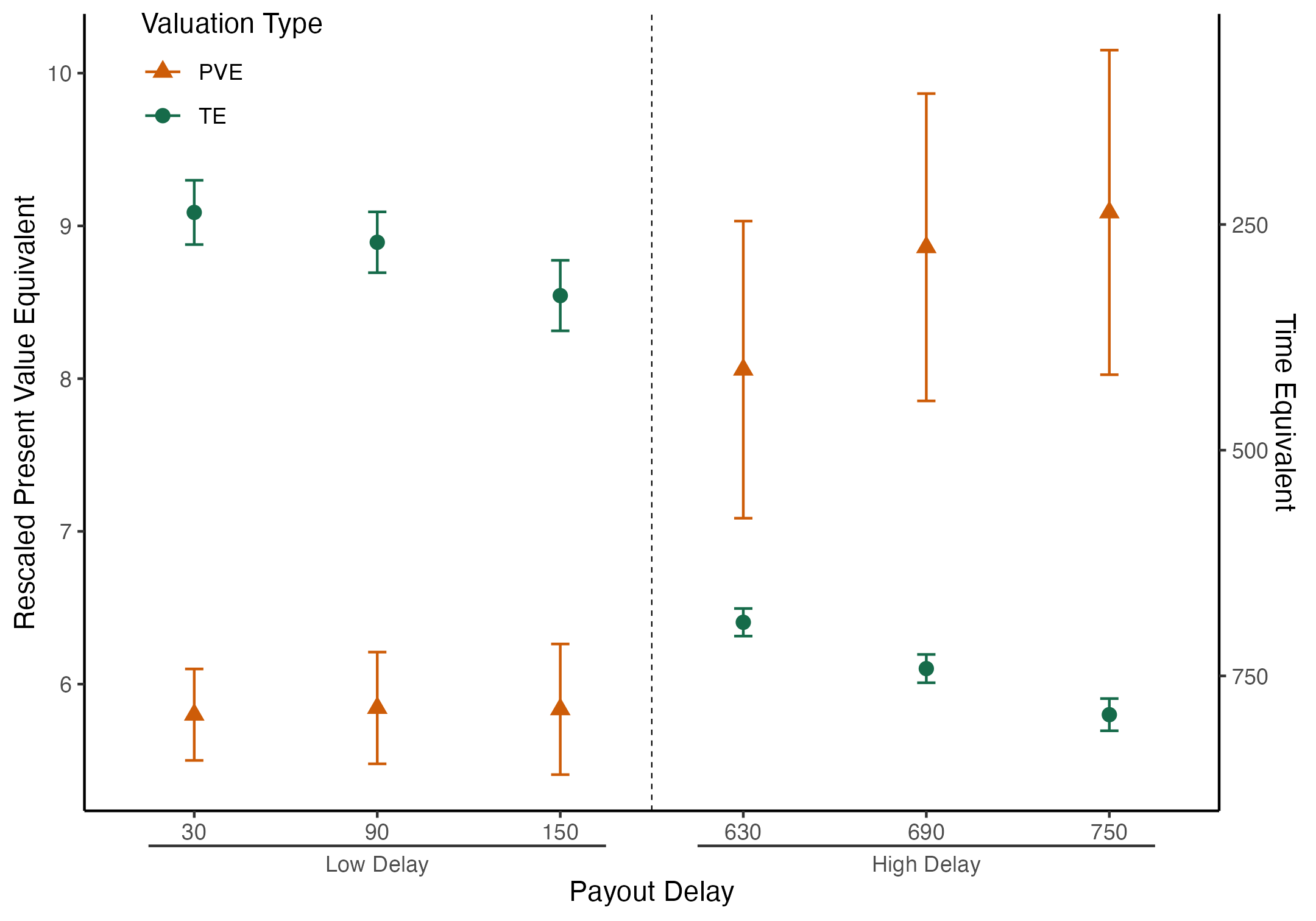}
        \caption{Present Value and Time Equivalents}
        \label{fig:time_pr_val}
    \end{subfigure}   
    \caption{Preference reversal experiments for delayed payments, aggregated across scale factor. Panel (a) presents binary choice rates for each high/low delay comparison. Panel (b) presents average rescaled present value equivalents, computed by dividing each present value equivalent by the scale factor (scale on left axis) and average time equivalents (scale on right axis). The direction of the time equivalent axis is inverted, so that valuations are increasing in the vertical axis. Error bars reflect 95\% confidence intervals.}
    \label{fig:time_pr}
\end{figure}

Figures \ref{fig:risk_pr} and \ref{fig:time_pr} present experimental results, aggregating across scale factors. Figures \ref{fig:risk_pr_choice} and \ref{fig:time_pr_choice} show that binary choice favors the low-risk and low-delay options, with choice rates for those options well above 50\%. In contrast, Figures \ref{fig:risk_pr_val} and \ref{fig:time_pr_val} indicate an apparent reversal in preference when options are valued via certainty equivalents and present value equivalents: subjects assign higher valuations to high-risk and high-delay options on average. In these same figures, however, we see the predicted \textit{flipping} of these valuation patterns when we manipulate the ease of comparing each option to the price list. In particular, when the same options are instead valued via probability equivalents and time equivalents, subjects instead assign higher valuations to the low-risk and low-delay options, thus eliminating the apparent reversal.\footnote{\citet{butler_imprecision_2007} document a related result in lottery choice: subjects are more likely to state higher PEs for the low-risk lottery yet choose the high-risk lottery in direct choice than to exhibit the opposite inconsistency. This is consistent with our model, and we find similar patterns in our data.}

\subsection{Tests of Valuation Predictions}
\label{SEC:valuation_exps}

We experimentally test the prediction, developed in Section \ref{SEC:valuation_biases}, that inverse-S probability weighting and hyperbolic discounting can be reversed by manipulating the units of valuation. Specifically, our model predicts that relative to the certainty equivalents of binary lotteries, probability equivalents of certain payments should exhibit underweighting of low probabilities and overweighting of high probabilities. Likewise, relative to the discounting revealed by the present value equivalents of delayed payments, time equivalents of immediate payments should exhibit overvaluation of low delays and undervaluation of high delays. 

To test these predictions, we recruit 300 subjects through an online survey platform to complete 24 incentivized multiple price lists: 12 for lotteries, and 12 for intertemporal payments. Each subject is randomly assigned to a valuation mode for each domain: certainty or probability equivalents for lotteries, and present value or time equivalents for intertemporal payments, and the order of the domains is randomized. See Appendix \ref{APP:experiments_val} for additional design details.
\\
\\
%
\textit{\textbf{Lottery Valuations.}} We elicit certainty equivalents $CE(l)$ for a simple lottery $l = (\overline{w},p_l)$, and relate the normalized valuations $CE(l)/\overline{w}$ to payout probabilities $p_l$. We also elicit probability equivalents $PE(c)$ of a certain payment $c=(w_c,1)$ against a yardstick lottery $(\overline{w}, p)$, and relate the normalized payouts $w_c/\overline{w}$ to probability equivalents $PE(c)$.  We draw $\overline{w}$ from $\{\$9, \$18, \$27\}$. For CEs, we draw $p_l$ from $\{0.03, 0.05, 0.10, 0.25, 0.5, 0.75, 0.90, 0.95, 0.97\}$. For PEs, we draw $w_c$ so that $w_c/\overline{w}\in \{0.033, 0.056, 0.11, 0.25, 0.5, 0.75, 0.89, 0.944, 0.967\}$.

\begin{figure}[t!]
    \centering
    \includegraphics[width = 0.7\textwidth]{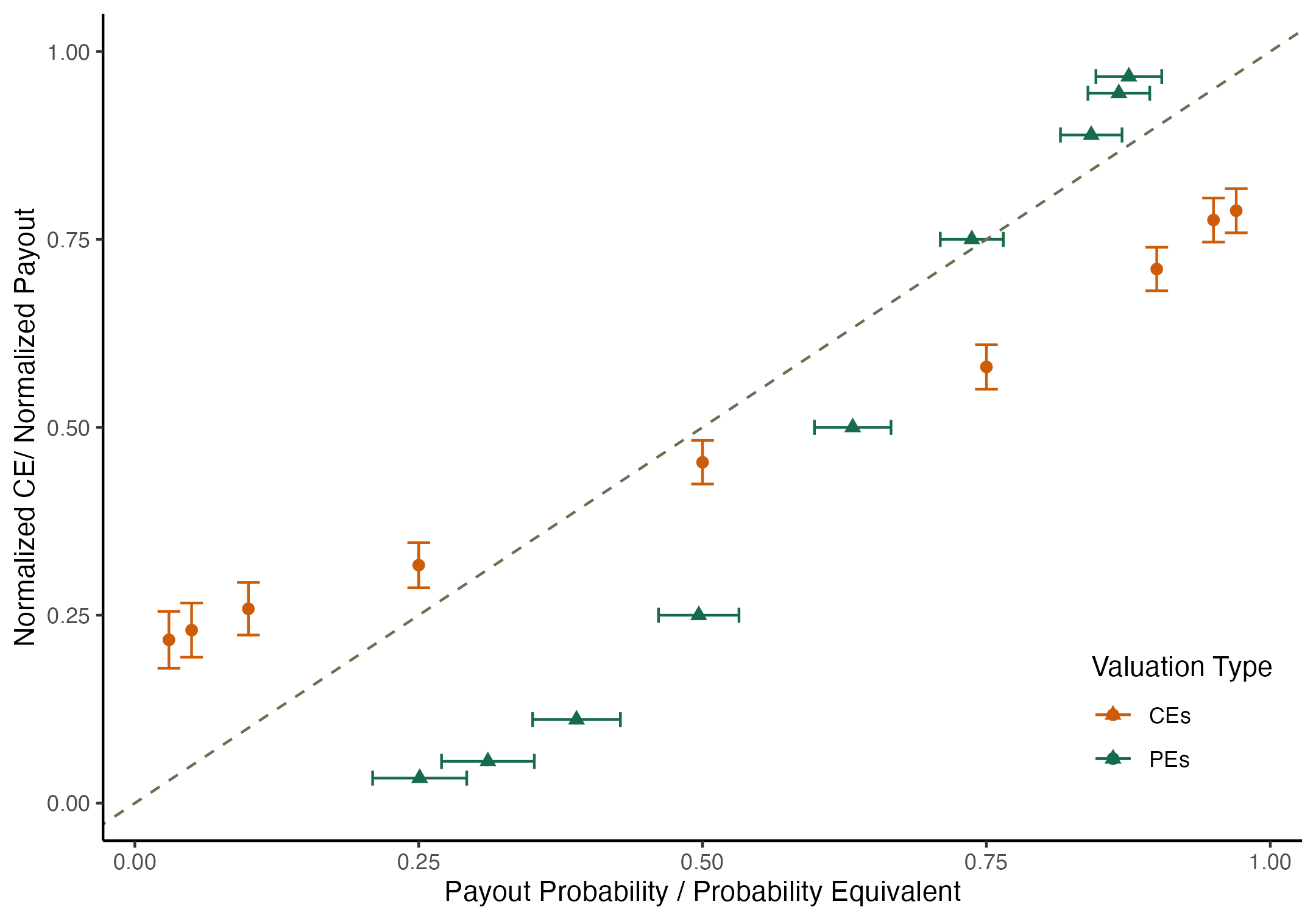}
    \caption{Lottery valuation results. Orange dots plot the payout probability $p_l$ against the average normalized certainty equivalent $CE(l)/\overline{w}$; turquoise triangles plot average probability equivalents $PE(c)$ against the normalized payout $w_c/\overline{w}$. The dashed black line represents linear probability weighting. Error bars reflect 95\% confidence intervals.}
    \label{fig:pwf_experiment}
\end{figure}

Results for lotteries are presented in Figure \ref{fig:pwf_experiment}. Certainty equivalents follow the standard pattern: risk-averse over low probabilities and risk-seeking over high probabilities. Consistent with model predictions, the pattern \textit{reverses} for probability equivalents. Relative to certainty equivalents, probability equivalents are more risk-averse over low probabilities, and more risk-seeking over high probabilities.\footnote{These results are consistent with \citet{feldman_certain_2024}, who find similar patterns in an expert sample of commercial agricultural producers.}\\

\noindent\textit{\textbf{Intertemporal Valuations}}. We elicit present value equivalents $PVE(\upsilon)$ for a delayed payment $\upsilon = (\overline{w}, t_{\upsilon})$, relating normalized valuations $PVE(\upsilon)/\overline{m}$ to payout delays $t_{\upsilon}$. We also elicit time equivalents $TE(c)$ of an immediate payment $c=(m_c,0)$ against a yardstick delayed payment $(\overline{m}, t)$, relating normalized payouts $m_c/\overline{m}$ to time equivalents $TE(c)$. We draw $\overline{m}$ from $\{25, 30, 35\}$. For PVEs, we draw $t_{\upsilon}$ from $\{7, 30, 60, 120, 240, 360, 480, 720, 1080\}$ (in days). For TEs, we draw $m_c$ so that $m_c/\overline{m}\in \{0.20, 0.35, 0.50, 0.65, 0.75, 0.85, 0.90, 0.95, 0.97\}$.

\begin{figure}[t!]
    \centering
    \includegraphics[width = 0.7\textwidth]{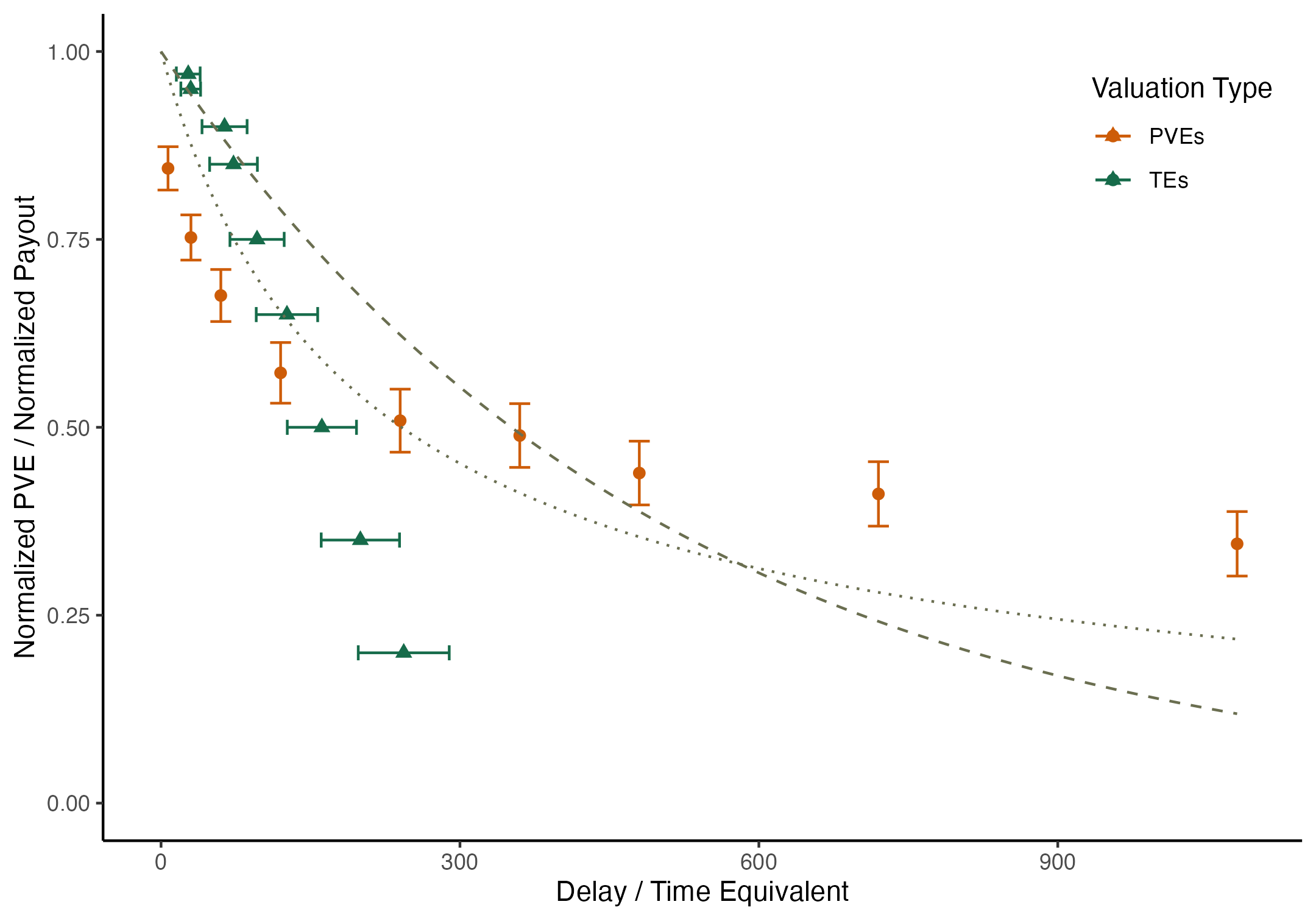}
    \caption{Intertemporal valuation results. Orange dots plot the payout delay $t_{\upsilon}$ against the average normalized present value equivalent $PVE(l)/\overline{m}$; turquoise triangles plot average time equivalents $TE(c)$ against the normalized payout $m_c/\overline{m}$.  The dashed (dotted) line traces the exponential (hyperbolic) discount function estimated from binary choice data. Error bars reflect 95\% confidence intervals.}
    \label{fig:disc_experiment}
\end{figure}

Results for are presented in Figure \ref{fig:disc_experiment}. Focusing first on the discount function implied by present value equivalents, we document the familiar pattern of short-run impatience and long-run patience (i.e., hyperbolicity). We also see that this hyperbolicity is \textit{more extreme} than the hyperbolic discount function estimated from our binary choice experiment (traced in the dotted line; see Appendix \ref{APP:structural_time} for estimation details). This is consistent with the interpretation that the true hyperbolicity in temporal preferences, as revealed by subjects' binary choices, is exaggerated when measured through present value equivalents. However, when instead we elicit valuations using time equivalents, the pattern of hyperbolicity \textit{reverses}: relative to both the hyperbolic discounting estimated from our binary choice data, as well as the discounting implied by their present value equivalents, subjects exhibit short-run patience and long-run impatience.

\subsection{Benchmarking Model Performance}
\label{SEC:benchmarking}

Whereas the predominant approach in behavioral economics is to model non-standard elements in the DM's value function, this paper models an orthogonal feature of choice: how difficult the DM finds a choice, given her objectives. To study whether our theory indeed explains variation in choice that is not captured by preference-based models, we compare the explanatory power of these modeling approaches in our binary choice data. 

First, we structurally estimate our choice model, which is parameterized by a transformation $G$ that maps the value-dissimilarity ratio into choice probabilities and (in the case of lottery and intertemporal choice) domain-specific preference parameters. Across all domains, we use the same specification of $G$ developed in Section \ref{SEC:theory_params}. We then compare performance against a set of models specifying the DM's value function, which we fit to our data using logit errors. In particular, we estimate a ``standard'' benchmark model: a distortion-free logit model in multiattribute choice, exponential discounted utility in intertemporal choice, and expected utility in lottery choice. We also estimate a set of behavioral models: salience-weighting \citep{bordalo_salience_2013}, focusing \citep{koszegi_model_2013}, and relative thinking \citep{bushong_model_2021} in multiattribute choice; quasi-hyperbolic and hyperbolic discounting in intertemporal choice; and simplicity theory \citep{puri_simplicity_2025} and cumulative prospect theory in lottery choice. Estimates are obtained using maximum likelihood; see Appendix \ref{APP:structural} for details on each specification.

\begin{figure}[b!]
\small
\begin{subfigure}[t]{0.33\textwidth}
\centering
    \caption{\centering Multiattribute}
    \label{fig:model_comp_multi}
    \vspace{0.5em}
    {\includegraphics[width=55mm]{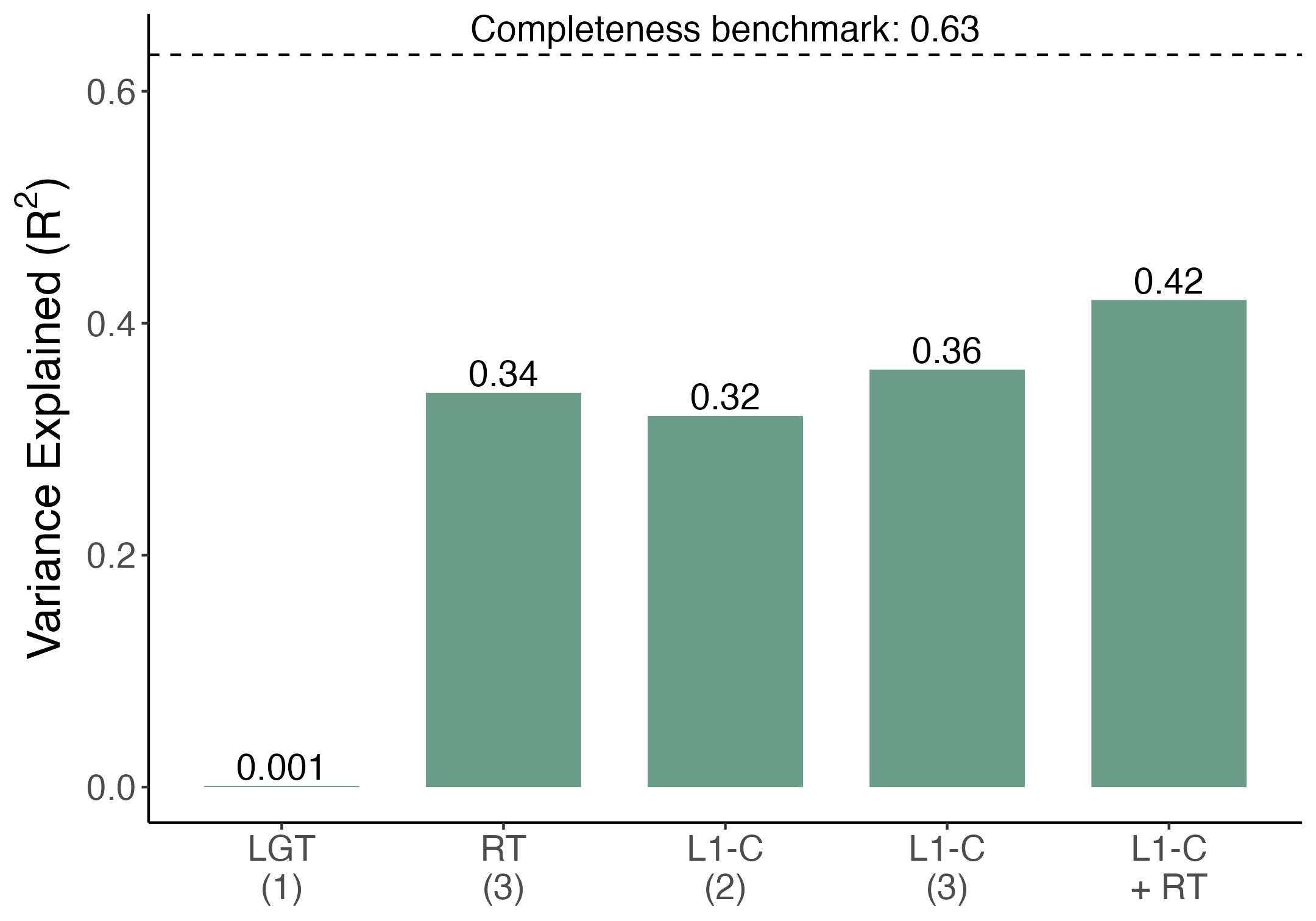}}
\end{subfigure}
\begin{subfigure}[t]{0.33\textwidth}
\centering
\caption{\centering Intertemporal }
    \label{fig:model_comp_temp}
    \vspace{0.5em}
    {\includegraphics[width=55mm]{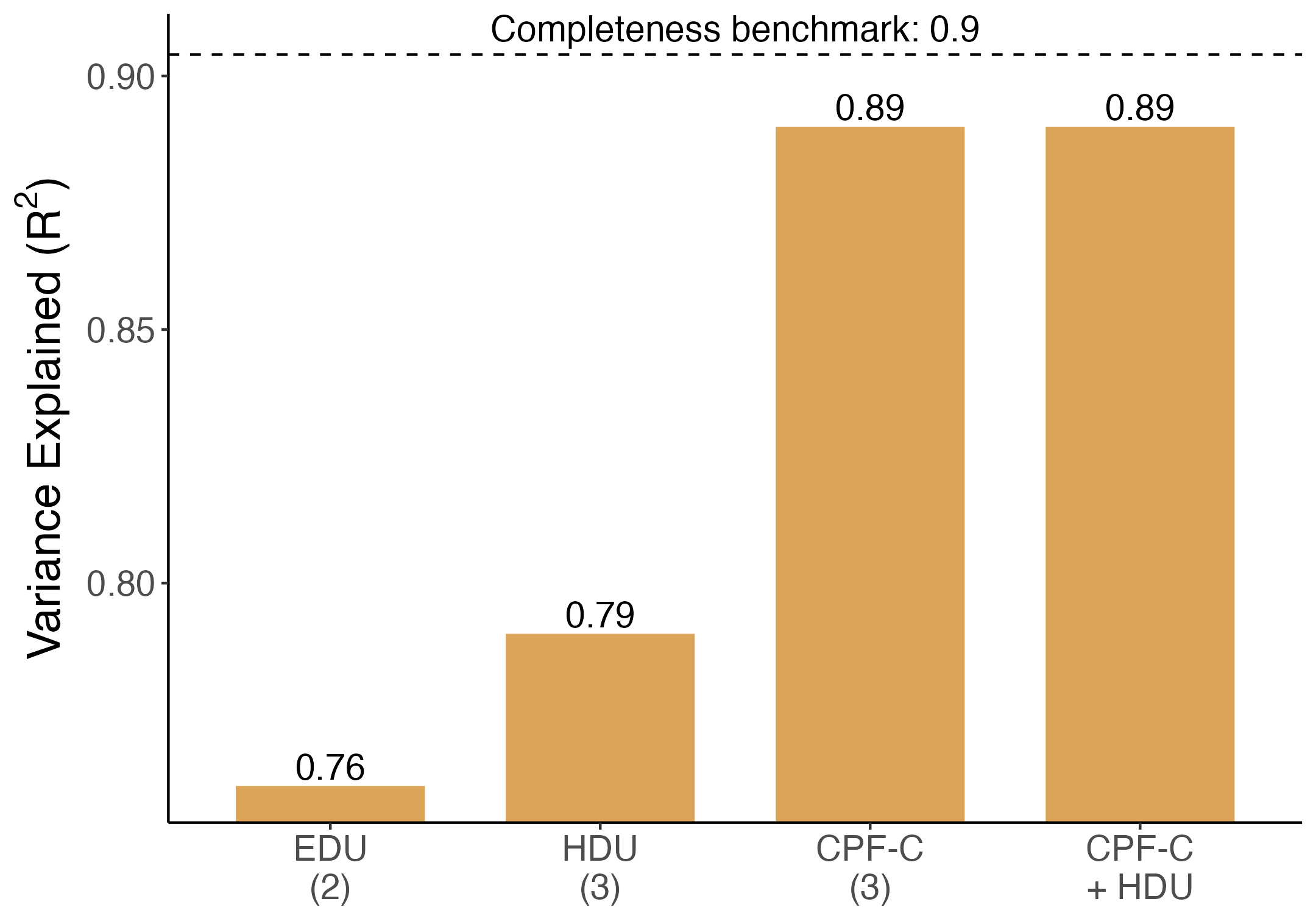}}
\end{subfigure}
\begin{subfigure}[t]{0.33\textwidth}
\centering
\caption{\centering Lottery}
    \label{fig:model_comp_lottery}
    \vspace{0.5em}
    {\includegraphics[width=55mm]{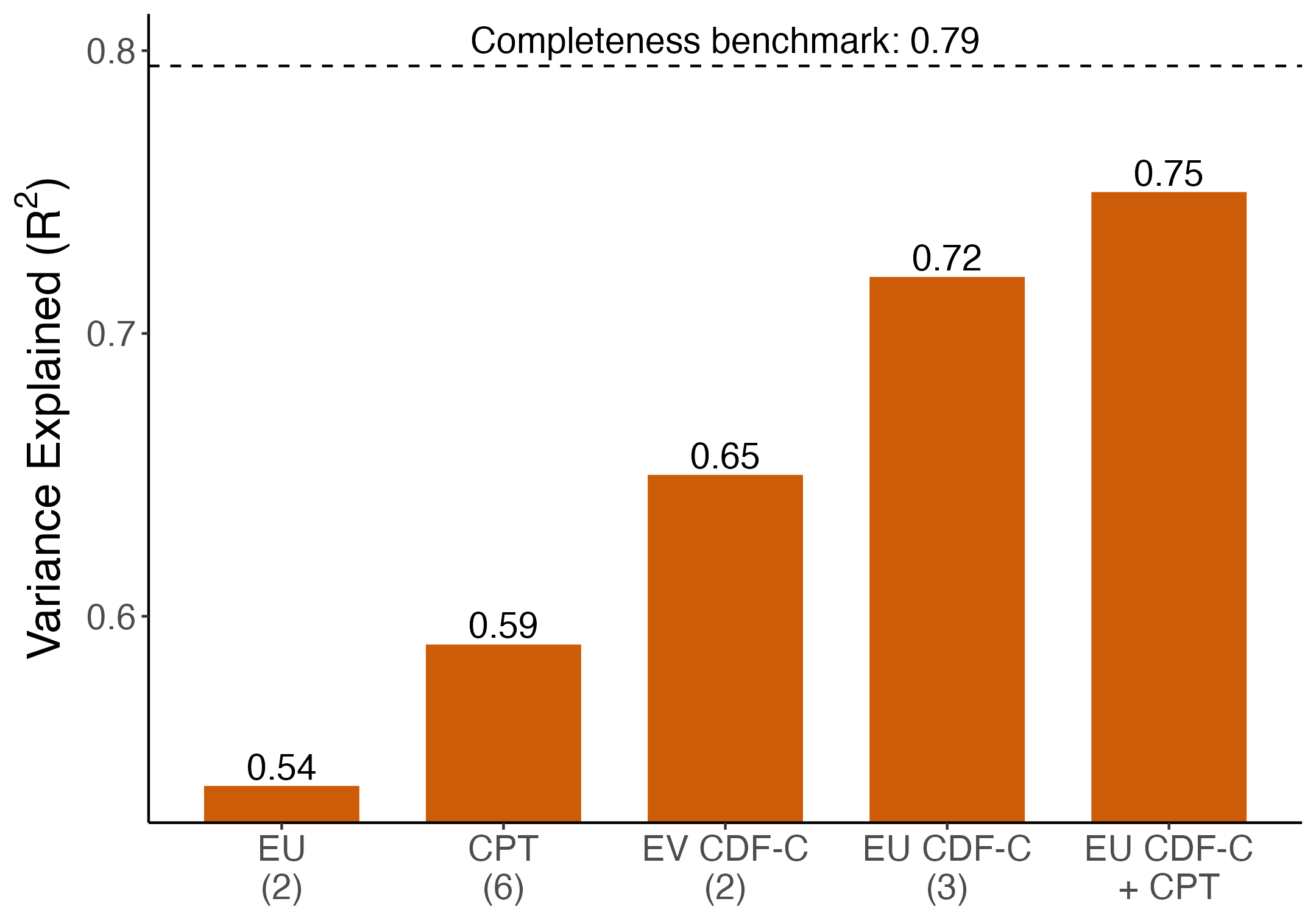}}
\end{subfigure}
\captionsetup{font=small}
\caption{Variance explained of models and completeness benchmarks. $R^2$ values are observation-weighted. Number of free parameters in parentheses. ``LGT,'' ``RT,'' ``L1-C'' refer to the Distortion-Free, Relative Thinking, and $L_1$-complexity models described in Appendix \ref{APP:structural_mac}. ``EDU,'' ``HDU,'' and ``CPF-C'' refer to the Exponential Discounting, Hyperbolic Discounting, and CPF-complexity models described in Appendix \ref{APP:structural_time}. ``EU,'' ``CPT,'' ``EV CDF-C,'' and ``EU CDF-C'' refer to the Expected Utility, Cumulative Prospect Theory, risk-neutral CDF-complexity, and expected utility CDF-complexity models described in Appendix \ref{APP:structural_risk}. Completeness benchmarks are obtained using an ensemble of models and a neural network (see \href{https://jeffreyyang97.github.io/personalwebsite/CC_OA.pdf}{Supplemental Appendix H}).}
\label{fig:model_comp}
\end{figure}

Figure \ref{fig:model_comp} reports the variance explained over problem-level choice rates of the benchmark model, the leading behavioral model, and our comparison complexity model. We emphasize that we do not view this exercise as a horserace per se, as these behavioral models likely capture real insights about preferences. Instead, this exercise highlights  the extent to which our model captures orthogonal variation in choice relative to existing models. We see this clearly in the final column, which reports the variance explained of an ensemble that combines the predictions of the leading behavioral model and our comparison complexity model.\footnote{Specifically, we report the variance explained of an observation-weighted regression of choice rates against the predicted choices rates of the two models.} Appendix Tables \ref{tab:structural_mac}, \ref{tab:structural_time}, and \ref{tab:structural_risk} report estimation results.

In multiattribute choice, we estimate our $L_1$-complexity model using the two and three parameter specifications of $G$ developed in Section $\ref{SEC:theory_params}$. In terms of fit, both the two and three parameter models ($R^2$ of 0.32 and 0.36, respectively) are comparable to the leading behavioral model of relative thinking ($R^2=0.34$). Importantly, $L_1$-complexity explains a substantial amount of variation in choices not captured by the relative thinking model. The ensemble of three-parameter $L_1$-complexity and relative thinking has an $R^2$ of 0.42, a 24\% increase in variance explained compared to relative thinking alone.


In both intertemporal and lottery choice, our model delivers significant performance gains. In intertemporal choice, our CPF-complexity model ($R^2=0.89$) explains 13\% more variation than hyperbolic discounting ($R^2=0.79$), using the same number of free parameters. In lottery choice, we estimate two versions of the CDF-complexity model---one that assumes risk-neutral preferences, and one that allows for utility curvature. Despite having far fewer free parameters, the risk-neutral CDF-complexity model ($R^2=0.65)$ explains 10\% more variation than cumulative prospect theory $(R^2=0.59)$. This is in line with \citet{enke_quantifying_2023}, which finds that allowing complexity to enter the noise term of a logit choice model substantially increases predictive power.\footnote{The ``complexity index'' developed by \citet{enke_quantifying_2023} loads heavily on ``excess dissimilarity,'' which equals the denominator of the CDF ratio minus the numerator, assuming $u(x)=x$.} Adding a parameter to capture utility curvature in our model yields an $R^2$ of 0.72 ---a 22\% improvement over cumulative prospect theory.

These results indicate that there is meaningful variation in choice probabilities which cannot be accounted for by preference models alone, and that developing descriptive models of noise is important for making sense of behavior.\\

\noindent \textbf{\textit{Completeness and Restrictiveness.}} Following \citet{fudenberg_measuring_2022}, we establish ``completeness benchmarks'' in our three domains by training flexible models to predict choice rates based on problem features; these benchmarks capture the predictable variation in choice rates that \textit{any} model could hope to explain. We form these benchmarks using an ensemble approach described in \href{https://jeffreyyang97.github.io/personalwebsite/CC_OA.pdf}{Supplemental Appendix H}, which combines parametric model predictions with those of a neural network. The dashed lines in Figure \ref{fig:model_comp} report the $R^2$ of these benchmarks, and Appendix Tables \ref{tab:structural_mac}, \ref{tab:structural_time}, and \ref{tab:structural_risk} report the completeness of each model following \citet{fudenberg_measuring_2022}. Our model captures 70\% of the predictable variation in our multiattribute choice data, and over 90\% in lottery and intertemporal choice. Note that this high level of completeness in lottery and intertemporal choice suggests that there is limited scope for systematic heterogeneity \textit{across problems} in the relative performance of our model: if our model fit substantially worse than alternative models on a class of problems, the neural network would deliver large performance gains. However, this exercise cannot speak to heterogeneity \textit{across subjects}; we characterize completeness only with respect to the best representative agent model.

One concern is that the predictive power of our model may come at the cost of parsimony---i.e., it is flexible enough to explain any dataset. To address this, we estimate the restrictiveness of each model: a measure of how well it fits synthetic data. Following  \citet{fudenberg_how_2023}, we simulate synthetic data according to a set of common restrictions imposed by the models we evaluate, and compare the fit of each model on this data (see Appendix \ref{APP:completeness} for details). This yields a restrictiveness measure $r\in[0,1]$, where $r=1$ means the model fits the synthetic data no better than a baseline model on average, and $r=0$ implies a perfect fit. Appendix Tables \ref{tab:structural_mac}, \ref{tab:structural_time}, and \ref{tab:structural_risk} report this measure in each domain. While our model is less restrictive than relative thinking in multiattribute choice ($r=0.45$ vs. 0.46), our model is \textit{more} restrictive than cumulative prospect theory ($r=0.60$ vs. 0.53) and hyperbolic discounting ($r=0.59$ vs. 0.58) in lottery and intertemporal choice, respectively. These results suggest that the performance gains of our model do not come at substantial costs to model parsimony. \\

\noindent\textbf{\textit{Alternative noise specifications.}} The above analysis demonstrates that our model of tradeoff-driven noise offers additional predictive power beyond existing models of \textit{value}, but does not answer how it compares to alternative specifications of \textit{noise}. As the high completeness of our model in lottery and intertemporal choice means that there is limited scope for alternative specifications to improve on fit in those data, we focus on multiattribute choice, where our model's completeness (70\%) leaves scope for improvement. Appendix \ref{APP:alternative_metrics} compares our multiattribute model to a flexible class of moderate utility noise specifications using the $L_p$ distance. We find that these alternative specifications provide essentially the same aggregate goodness of fit as our model. However, they predict dominance violations that are strongly rejected in the data.


\section{Relationship to Existing Models}
\label{SEC:existing_models}

\noindent\textbf{\textit{Relationship to Linear Differentiation.}} \citet{he_random_2023} propose a \textit{Linear Differentiation Model} (LDM) in the multiattribute domain with $X=\mathbb{R}^n$, where choice rates are given by $\rho(x,y)=G\left(\frac{U(x)-U(y)}{\sqrt{(x-y)'\Sigma (x-y)}}\right)$ for linear utility $U(x)=\sum_{k}\beta_kx_k$, an $n\times n$ positive definite matrix $\Sigma$, and a continuous, strictly increasing $G$. This model is identical to the $L_1$-complexity model save for the distance metric in the denominator: in the LDM, dissimilarity is measured by the generalized Euclidean distance, in contrast to the $L_1$ distance. The two models are disjoint (see Appendix \ref{APP:L_2}). Below, we outline two key dimensions along which they differ.

First, as Theorem \ref{THM:representation} implies, the $L_1$-complexity model respects \textit{dominance}: $\rho(x,y)$ is maximized when $x$ attribute-wise dominates $y$ (written $x>_{D} y$). The LDM violates this property. For example, consider a parameterization of the LDM with $n=3$, $\beta=(1,1,1)$, and $\Sigma=I$, i.e., the distance in the ratio is Euclidean: $d_{L2}(x,y)=\sqrt{\sum_{k}(x_k-y_k)^2}$. Consider the choice options $x=(4,0,0), x'=(-1,2,3),$ and $y=(0,0,0)$. In this case, the LDM predicts a dominance violation $\rho(x',y)>\rho(x,y)$: while both comparisons involve the same value difference, we have $d_{L2}(x,y)>d_{L2}(x',y)$. In Appendix \ref{APP:L_2}, we show that such dominance violations arise for any parameterization of the LDM when there are 3 or more attributes. 


Second, the $L_1$-complexity model also satisfies a \textit{monotonicity} property, wherein $x'>_{D}x$ implies $\rho(x',y)\geq \rho(x,y)$: that is, improving a choice option along each attribute cannot decrease its probability of being chosen.\footnote{See Proposition \ref{PROP:L2_monotonicity} in the Appendix. Each of $L_1$, CDF, and CPF representations satisfy monotonicity with respect to the domain-specific dominance notions; see Lemma \ref{LEM:monotonicity} in the Appendix.} The LDM, on the other hand, violates monotonicity. For example, take the LDM parameterization discussed above and consider $x=(5,5,0)$, $x'=(10,5,0)$, and $y=(0,0,0)$: here, the LDM predicts the monotonicity violation $\rho(x',y)<\rho(x,y)$. In Appendix \ref{APP:L_2}, we show that such monotonicity violations arise generically in the LDM. 

This highlights a key difference in how the analyst should interpret the attributes within each model. As the $L_1$-complexity model satisfies dominance and monotonicity, it describes settings where the ranking within each attribute is unambiguous to the DM, and where choice rates principally reflect the difficulty of making tradeoffs across attributes. On the other hand, since the LDM violates dominance and monotonicity, it is better suited to applications where choice rates also reflect difficulty in processing within-attribute differences, or alternatively, reflect disagreement in a population over the valence of attributes.\\
\\
\textbf{\textit{Relationship to Bayesian Probit.}}
\citet{natenzon_random_2019} develops a \textit{Bayesian Probit} model, which like ours, models a DM who responds to imprecise comparisons in a Bayesian fashion. Here, the DM has i.i.d. Gaussian priors over $v_x$, and chooses based on signals $s_x=v_x+\frac{1}{p}\epsilon_x$ received for each option in the menu, where $\epsilon_x\sim N(0,1)$ are jointly normal across options. The pairwise correlations of $(\epsilon_x,\epsilon_y)$ allow the model to capture the ease of comparison between options, where $x,y$ are more comparable if $(\epsilon_x,\epsilon_y)$ are more highly correlated. 

Recall that in our model, the DM only receives \textit{ordinal} information on value. In Bayesian Probit, the DM learns about the \textit{cardinal} value differences, which rules out certain choice patterns. For instance, consider the simple lotteries $x=(\$10,0.6)$, $y=(\$5,1)$, $z=(\$0,1)$. Since $z$ is dominated by both $x$ and $y$, we might expect that $\rho(x,z)=\rho(y,z)=1$ and also $\rho(x,y)<1$; that is, the DM does not err in the presence of dominance but finds tradeoffs  difficult.\footnote{The $L_1$-complexity model generates the choice probabilities for $\beta_1,\beta_2$>0, and $G(-1)=G(1)=1$.} Bayesian Probit cannot rationalize this choice data; $\rho(x,z)=\rho(y,z)=1$ implies $\corr(\epsilon_x,\epsilon_z) = \corr(\epsilon_y,\epsilon_z) = 1$, which implies  $\corr(\epsilon_x,\epsilon_y) = 1$; as a result, we have $\rho(x,y)=1$.\footnote{ More generally, when $\rho(x,z),\rho(y,z)$ are close to 1, the Bayesian Probit model places lower bounds on the choice probability $\rho(x,y)$. These bounds imply that there are binary choice rules rationalizable by $L_1$, CDF, and CPF complexity but not by Bayesian Probit; see \href{https://jeffreyyang97.github.io/personalwebsite/CC_OA.pdf}{Supplemental Appendix F} for details.}

Intuitively, since Bayesian Probit features learning over cardinal value differences, if the DM learns $v_x-v_z$ and $v_y-v_z$, she must also learn $v_x-v_y$. This feature of Bayesian Probit means it is less suitable for modeling the behavioral regularities that motivate this paper. In our application to valuations, for instance, we model a DM who finds a lottery trivial to compare against some prices and difficult to compare to others, yet perfectly understands the ranking of prices. While Bayesian Probit cannot accommodate these patterns, our model can. 

\section{Conclusion}
\label{SEC:conclusion}

This paper shows that the difficulty of making tradeoffs can serve as an organizing mechanism that rationalizes a range of empirical regularities across multiple domains of behavioral economics---including regularities central to recent work on complexity---and makes novel predictions. Using experimental evidence we demonstrate the predictive power of our proposed complexity measures, and experimentally test the novel implications of our framework. We close by discussing limitations of the paper and potential avenues for future work.\\

\noindent\textbf{\textit{Extending to additional domains.}} Although we formulated our theory in the domains of multiattribute, lottery, and intertemporal choice, the difficulty of tradeoffs is likely a general feature of decision-making. One extension is to adapt our $L_1$ complexity measure to choice under uncertainty by reinterpreting states of the world as attributes. Another would be to apply the model to derive solution concepts in normal-form games that account for tradeoff-induced noise.\\

\noindent\textbf{\textit{Incorporating systematic biases.}} To develop a parsimonious model of tradeoff complexity, we abstracted away from potential biases in the DM's objective function $v_x$. However, even biased decision-makers presumably face tradeoffs, and so there is value in integrating models of systematic biases---models that augment $v_x$---with our model of tradeoff-driven complexity, and understanding their behavioral implications. Some models can be easily integrated with our approach: for instance, many non-standard multiattribute choice models simply reflect distorted attribute weights, which can directly be incorporated into the $L_1$ complexity measure. However, incorporating other distortions may require more involved adaptations of our measures. \\


\noindent\textbf{\textit{Developing implications for measurement.}} This paper develops a theory that can help explain apparent inconsistencies across workhorse preference elicitation methods, such as the gap between certainty equivalents and binary choice found in lottery preference reversals. However, we do not view our results as an indication that any one elicitation method should be \textit{a priori} favored over others---the suitability of a given measure ultimately depends on the goals of the researcher and how response data is interpreted in service of those goals.\footnote{For instance, even if valuations are subject to pull-to-center distortions as in our model, \textit{comparisons} of valuations could still be informative about  preferences, so long as these distortions impact valuations in the same way. Likewise, inference of preference intensities on the basis of binary choice rates may be confounded by the presence of heteroskedastic noise; see \citet{mcgranaghan_distinguishing_2024} for a discussion.} Instead, we view a theory like ours, which models how noise may manifest differently across elicitation methods, as a tool to help facilitate the interpretation of existing elicitation methods and motivate the development of new measures. We view both aims as promising directions for future work.



\newpage

\vspace{1cm}

\bibliographystyle{aer}
\bibliography{Paper/References.bib}

\newpage
\appendix

\begin{center}
\Large{\textbf{APPENDIX}}
\end{center}

\section{Appendix: Characterization of $L_1$ Complexity}
\label{APP:l1_axioms}
The binary choice probabilities induced by $\tau^{L1}$ take the following form: 

\begin{definition}
\label{def:L1_complexity_rho}
    A binary choice rule $\rho$ has an $L_1$-complexity representation if there exists $\beta\in\mathbb{R}^n$ with $\beta_k\neq 0$ for all $k$, such that
    \begin{align*}
     \rho(x,y)=G\left(\frac{U(x)-U(y)}{d_{L1}(x,y)}\right)   
    \end{align*}
    for some continuous, strictly increasing $G$ satisfying $G(r)=1-G(-r)$.
\end{definition}

We provide an axiomatic characterization of this representation. Let $x_{\set{k}}y$ denote the option obtained by replacing the $k$th attribute of $y$ with $x_k$, i.e. $(x_{\set{k}}y)_k=x_k$ and $(x_{\set{k}}y)_j=y_j$ for all $j\neq k$. Say that $x$ \textit{dominates} $y$, written $x>_{D}y$, if $\rho(x_{\set{k}}y,y)\geq 1/2$ for all $k$ with a strict inequality for at least one $k$---that is, if $x$ is revealed better along each attribute considered in isolation. Say that attribute $k$ is \textit{null} if $\rho(x_{\set{k}}z,y_{\set{k}}z)=1/2$ for all $x,y,z\in X$. Consider the following conditions: 
\begin{enumerate}[label={M\arabic*}., itemsep=0.8mm]
    \item \textbf{Continuity:} $\rho(x,y)$ is continuous on its domain.
    \item \textbf{Linearity:} $\rho(x,y)=\rho(\alpha x+(1-\alpha) z,\alpha y+(1-\alpha) z)$.
    \item \textbf{Moderate Transitivity:} If $\rho(x,y)\geq1/2$ and $\rho(y,z)\geq1/2$, then either $\rho(x,z)> \min\set{\rho(x,y),\rho(y,z)}$ or $\rho(x,z)=\rho(x,y)=\rho(y,z)$.
    \item \textbf{Dominance}: If $x>_{D}y$, then $\rho(x,y)\geq \rho(w,z)$ for any $w,z\in X$, where the inequality is strict if $w\not>_{D} z$.
    \item \textbf{Simplification}: If $\rho(x,y)\geq 1/2$: for any $x'\in X$ satisfying
    \begin{enumerate}[label={(\arabic*}), leftmargin=1.5cm, itemsep=0.8mm]
        \item $x'_i=y_i$ for some $i$,
        \item $x'_j\neq x_j$ for at most one $j\neq i$,
    \end{enumerate}
    such that $\rho(x',x)=1/2$, we have $\rho(x',y)\geq \rho(x,y)$.
\end{enumerate}

Continuity is a technical axiom.\footnote{Continuity holds on the domain where $x\neq y$. This allows the model to accommodate dominance-related discontinuities, e.g. a jump between the choice probabilities $\rho(\$5,\$4.99)$ and $\rho(\$5,\$5.01)$.}  Linearity reflects the fact that preferences and the $L_1$ distance are linear in attributes; in Appendix \ref{APP:theory_axioms}, we characterize a generalized representation that allows for non-linear preferences. Moderate Transitivity is due to \citet{he_moderate_2024}, who show that the axiom characterizes the moderate utility class. Dominance and Simplification are exact counterparts of the properties of $\tau^{L1}$ discussed in Section \ref{SEC:theory_mac}. Dominance says that if $x$ dominates $y$, the likelihood of accurate choice is maximal. Simplification says that aggregating value differences across two attributes into one increases the likelihood of correct choice.

Theorem \ref{THM:representation} states that when there are three or more attributes, M1--M5 characterize the behavioral implications of our representation for binary choice data, and that its parameters $(G,\beta)$ can be identified from choice data.

\begin{thm}
\label{THM:representation}
Suppose all attributes are non-null and $n> 2$. $\rho(x,y)$ has an $L_1$-complexity representation $(G,\beta)$ if and only if it satisfies M1--M5. Moreover, if $\rho(x,y)$ also has an $L_1$-complexity representation $(G',\beta')$, then $G'=G$ and $\beta'=C\beta$ for some $C>0$. 
\end{thm}

In \href{https://jeffreyyang97.github.io/personalwebsite/CC_OA.pdf}{Supplemental Appendix F}, we extend Theorem \ref{THM:representation} to the two-attribute case. Here, we also provide guidance on how the analyst should specify the attributes in a given setting, and show how attributes can be identified from choices.

\section{Appendix: Tables and Figures}
\label{APP:figs}

\begin{table}[!h]
\caption{Complexity Responses vs. $L_1$ Ratio}
\begin{center}
\begin{tabular}{l c c c c c c c c}
\hline
 & \multicolumn{2}{c}{\makecell{\textit{Dependent Variable}:\\ Error Rate}} & & \multicolumn{2}{c}{\makecell{\textit{Dependent Variable}:\\ Inconsistency Rate}} & & \multicolumn{2}{c}{\makecell{\textit{Dependent Variable}:\\ CU}} \\
\cline{2-3} \cline{5-6} \cline{8-9}
 & (1) & (2) &   & (3) & (4) &   & (5) & (6) \\
\hline
$L_1$ Ratio             & $-0.26^{***}$ & $-0.26^{***}$ &  & $-0.19^{***}$ & $-0.19^{***}$ &  & $-0.12^{***}$ & $-0.12^{***}$ \\
                        & $(0.02)$      & $(0.02)$      &  & $(0.02)$      & $(0.02)$      &  & $(0.01)$      & $(0.01)$      \\
Global Value Difference &               & $0.00$        &  &               & $-0.01$       &  &               & $-0.00$       \\
                        &               & $(0.00)$      &  &               & $(0.01)$      &  &               & $(0.00)$      \\
(Intercept)             & $0.32^{***}$  & $0.32^{***}$  &  & $0.26^{***}$  & $0.28^{***}$  &  & $0.25^{***}$  & $0.27^{***}$  \\
                        & $(0.01)$      & $(0.02)$      &  & $(0.01)$      & $(0.03)$      &  & $(0.00)$      & $(0.01)$      \\
\hline
R$^2$                   & $0.32$        & $0.32$        &  & $0.03$        & $0.03$        &  & $0.30$        & $0.31$        \\
Num. obs.               & $662$         & $662$         &  & $4880$        & $4880$        &  & $662$         & $662$         \\
\hline
\multicolumn{9}{l}{\scriptsize{\parbox{1\linewidth}{\vspace{4pt} OLS Estimates. Standard errors (in parentheses) are robust.
                                            ``$L_1$ Ratio" and ``Global Value Difference" are the $L_1$ ratio and the monetary
                                            value difference for each choice problem. \\ $^{***}p<0.001$; $^{**}p<0.01$; $^{*}p<0.05$.}}}
\end{tabular}
\label{tab:l1regs}
\end{center}
\end{table}

\begin{table}[!ht]
\centering
\caption{Structural Estimates: Multiattribute Choice} 
\label{tab:structural_mac}
\begin{tabular}{lrcccccc}
  \hline
\hline
 &   & DF & BGS & Focus & RT & $L_1$-C 2P & $L_1$-C, 3P \\ 
  \hline
Parameter Estimates &  &   &   &   &   &   &   \\ 
  \quad $\delta$ &  &   & 1 &   &   &   &   \\ 
  \quad $\theta$ &  &   &   & 0 &   &   &   \\ 
  \quad $\omega$ &  &   &   &   & 0.84 &   &   \\ 
  \quad $\xi$ &  &   &   &   & 1.62 &   &   \\ 
  \quad $\kappa$ &  &   &   &   &   & 0.09 & 0.04 \\ 
  \quad $\gamma$ &  &   &   &   &   & 2.36 & 0.86 \\ 
  \quad $\psi$ &  &   &   &   &   &   & 0.55 \\ 
  \quad $\eta$ &  & 0.37 & 0.37 & 0.37 & 1.54 &   &   \\ 
   \hline
$R^2$ &  & 0.001 & 0.001 & 0.001 & 0.339 & 0.323 & 0.357 \\ 
  Completeness &  & 0 & 0 & 0 & 0.56 & 0.62 & 0.7 \\ 
  Restrictiveness &  & 0.482 & 0.482 & 0.483 & 0.461 & 0.454 & 0.452 \\ 
  \quad &  & (0.001) & (0.001) & (0.001) & (0.001) & (0.001) & (0.001) \\ 
   \hline 
 \multicolumn{8}{p{0.95\linewidth}}{\scriptsize{``DF'', ``BGS'', ``Focus'', and ``RT''
        refer to the Distortion-Free, Salience, Focusing, and Relative Thinking models described in
        Appendix \ref{APP:structural_mac}. ``$L_1$-C, 2P'' and ``$L_1$-C, 3P'' refer to the
                        2 and 3 parameter $L_1$-Complexity models described in Appendix \ref{APP:structural_mac}.
                        Completeness and Restrictiveness measures are defined in Appendix \ref{APP:completeness}.
                        Standard errors for Restrictiveness estimates in parentheses.}} 
\end{tabular}
\end{table}

\begin{table}[!h]
\caption{Complexity Responses vs. CDF Ratio}
\begin{center}
\begin{tabular}{l c c c c c c c c}
\hline
 & \multicolumn{2}{c}{\makecell{\textit{Dependent Variable}:\\ Error Rate}} & & \multicolumn{2}{c}{\makecell{\textit{Dependent Variable}:\\ Inconsistency Rate}} & & \multicolumn{2}{c}{\makecell{\textit{Dependent Variable}:\\ CU}} \\
\cline{2-3} \cline{5-6} \cline{8-9}
 & (1) & (2) &   & (3) & (4) &   & (5) & (6) \\
\hline
Global CDF Ratio        & $-0.32^{***}$ & $-0.28^{***}$ &  & $-0.10^{***}$ & $-0.11^{***}$ &  & $-0.19^{***}$ & $-0.21^{***}$ \\
                        & $(0.00)$      & $(0.00)$      &  & $(0.00)$      & $(0.00)$      &  & $(0.01)$      & $(0.01)$      \\
Global Value Difference &               & $-0.01^{***}$ &  &               & $0.00^{***}$  &  &               & $0.01^{***}$  \\
                        &               & $(0.00)$      &  &               & $(0.00)$      &  &               & $(0.00)$      \\
(Intercept)             & $0.48^{***}$  & $0.51^{***}$  &  & $0.28^{***}$  & $0.28^{***}$  &  & $0.25^{***}$  & $0.22^{***}$  \\
                        & $(0.00)$      & $(0.00)$      &  & $(0.00)$      & $(0.00)$      &  & $(0.01)$      & $(0.01)$      \\
\hline
R$^2$                   & $0.45$        & $0.47$        &  & $0.10$        & $0.11$        &  & $0.31$        & $0.35$        \\
Num. obs.               & $10920$       & $10920$       &  & $10420$       & $10420$       &  & $500$         & $500$         \\
\hline
\multicolumn{9}{l}{\scriptsize{\parbox{1\linewidth}{\vspace{4pt} OLS estimates. Standard errors (in parentheses) are robust.
                                            ``Global CDF" Ratio" and ``Global Value Difference" are the representative-agent
                                            CDF ratio and value difference for each choice problem, computed using the value of
                                            $\alpha$ estimated in the Expected Utility model described in Appendix
                                            \ref{APP:structural_risk}. \\ $^{***}p<0.001$; $^{**}p<0.01$; $^{*}p<0.05$.}}}
\end{tabular}
\label{tab:cdfregs_global}
\end{center}
\end{table}

\begin{table}[!h]
\caption{Individual-Level Error Rates vs. CDF Ratio}
\begin{center}
\begin{tabular}{l c c c c}
\hline
 & \multicolumn{4}{c}{\makecell{\textit{Dependent Variable}:\\ Binary Error (Indiv. $\hat\alpha$)}} \\
\cline{2-5}
 & (1) & (2) & (3) & (4) \\
\hline
Global CDF Ratio        & $-0.27^{***}$ & $-0.27^{***}$ &               &               \\
                        & $(0.01)$      & $(0.01)$      &               &               \\
Indiv. CDF Ratio        &               &               & $-0.35^{***}$ & $-0.35^{***}$ \\
                        &               &               & $(0.01)$      & $(0.01)$      \\
Indiv. Value Difference &               & $-0.00$       &               & $0.00$        \\
                        &               & $(0.00)$      &               & $(0.00)$      \\
(Intercept)             & $0.36^{***}$  & $0.36^{***}$  & $0.41^{***}$  & $0.41^{***}$  \\
                        & $(0.01)$      & $(0.01)$      & $(0.01)$      & $(0.01)$      \\
\hline
R$^2$                   & $0.04$        & $0.04$        & $0.07$        & $0.07$        \\
Num. obs.               & $12500$       & $12500$       & $12500$       & $12500$       \\
\hline
\multicolumn{5}{l}{\scriptsize{\parbox{0.72\linewidth}{\vspace{4pt} OLS estimates. Standard errors (in parentheses) are robust.
                                            ``Global CDF" Ratio" is the representative-agent CDF ratio for each subject-choice problem,
                                            computed using the value of $\alpha$ estimated in the Expected Utility model described in Appendix
                                            \ref{APP:structural_risk}. ``Indiv. CDF" Ratio" and ``Indiv. Value Difference" are the individual-level
                                            CPF ratio and value difference for each subject-choice problem, computed using the individual-level
                                            $\alpha_i$ estimates under the same model. \\ $^{***}p<0.001$; $^{**}p<0.01$; $^{*}p<0.05$.}}}
\end{tabular}
\label{tab:cdf_error_indiv}
\end{center}
\end{table}

\begin{table}[h!]
\centering
\caption{Structural Estimates: Lottery Choice} 
\label{tab:structural_risk}
\begin{tabular}{lrcccccc}
  \hline
\hline
 &   & EU & ST & RDEU & CPT & EV CDF-C  & EU CDF-C \\ 
  \hline
Parameter Estimates &  &   &   &   &   &   &   \\ 
  \quad $\alpha$ &  & 0.85 & 0.83 & 0.83 & 0.75 &   & 0.6 \\ 
  \quad $\beta$ &  &   &   & 0.77 & 0.78 &   &   \\ 
  \quad $\lambda$ &  &   &   & 1.07 & 0.79 &   &   \\ 
  \quad $\chi$ &  &   &   &   & 1.06 &   &   \\ 
  \quad $\nu$ &  &   &   &   & 0.83 &   &   \\ 
  \quad $\phi$ &  &   & -2.6 &   &   &   &   \\ 
  \quad $\upsilon$ &  &   & 0.6 &   &   &   &   \\ 
  \quad $\mu$ &  &   & 0.81 &   &   &   &   \\ 
  \quad $\kappa$ &  &   &   &   &   & 0.15 & 0.15 \\ 
  \quad $\gamma$ &  &   &   &   &   & 0.77 & 0.71 \\ 
  \quad $\eta$ &  & 0.22 & 0.24 & 0.24 & 0.33 &   &   \\ 
   \hline
$R^2$ &  & 0.54 & 0.56 & 0.55 & 0.59 & 0.65 & 0.72 \\ 
  Completeness &  & 0.64 & 0.67 & 0.66 & 0.71 & 0.81 & 0.9 \\ 
  Restrictiveness &  & 0.564 & 0.561 & 0.564 & 0.532 & 0.6 & 0.596 \\ 
  \quad &  & (0.000) & (0.000) & (0.000) & (0.000) & (0.000) & (0.000) \\ 
   \hline 
 \multicolumn{8}{p{1\linewidth}}{\scriptsize{``EU'', ``ST'', ``RDEU'', and ``CPT''
        refer to the Expected Utility, Simplicity Theory, Reference-Dependent Expected Utility, and Cumulative Prospect Theory
        models described in Appendix \ref{APP:structural_risk}. ``EV CDF-C'' and ``EU CDF-C'' refer to the risk-neutral
                        and expected utility CDF complexity models described in Appendix \ref{APP:structural_risk}.
                        Completeness and Restrictiveness measures are defined in Appendix \ref{APP:completeness}.
                        Standard errors for Restrictiveness estimates in parentheses.}} 
\end{tabular}
\end{table}

\begin{table}[!h]
\caption{Error Rate vs. CDF Ratio in Expected Value Task}
\begin{center}
\begin{tabular}{l c c}
\hline
 & \multicolumn{2}{c}{\makecell{\textit{Dependent Variable}:\\ Error Rate}} \\
\cline{2-3}
 & (1) & (2) \\
\hline
CDF Ratio             & $-0.31^{***}$ & $-0.30^{***}$ \\
                      & $(0.01)$      & $(0.01)$      \\
Abs. Value Difference &               & $-0.00^{**}$  \\
                      &               & $(0.00)$      \\
(Intercept)           & $0.40^{***}$  & $0.41^{***}$  \\
                      & $(0.01)$      & $(0.01)$      \\
\hline
R$^2$                 & $0.29$        & $0.29$        \\
Num. obs.             & $2118$        & $2118$        \\
\hline
\multicolumn{3}{l}{\scriptsize{\parbox{0.72\linewidth}{\vspace{4pt} OLS estimates using expected value task data from \citet{enke_quantifying_2023}.
  Subjects were instructed to choose the lottery which pays out more on average, which means errors are objectively defined.
  Standard errors (in parentheses) are robust.
  CDF Ratio computed using expected value (risk-neutral preferences). $^{***}p<0.001$; $^{**}p<0.01$; $^{*}p<0.05$.}}}
\end{tabular}
\label{tab:prediction_error}
\end{center}
\end{table}

\begin{table}[!h]
\caption{Complexity Responses vs. CPF Ratio}
\begin{center}
\begin{tabular}{l c c c c c c c c}
\hline
 & \multicolumn{2}{c}{\makecell{\textit{Dependent Variable}:\\ Error Rate}} & & \multicolumn{2}{c}{\makecell{\textit{Dependent Variable}:\\ Inconsistency Rate}} & & \multicolumn{2}{c}{\makecell{\textit{Dependent Variable}:\\ CU}} \\
\cline{2-3} \cline{5-6} \cline{8-9}
 & (1) & (2) &   & (3) & (4) &   & (5) & (6) \\
\hline
Global CPF Ratio        & $-0.51^{***}$ & $-0.51^{***}$ &  & $-0.20^{***}$ & $-0.17^{***}$ &  & $-0.07^{***}$ & $-0.07^{***}$ \\
                        & $(0.01)$      & $(0.01)$      &  & $(0.01)$      & $(0.02)$      &  & $(0.00)$      & $(0.00)$      \\
Global Value Difference &               & $-0.00$       &  &               & $-0.01^{***}$ &  &               & $-0.00$       \\
                        &               & $(0.00)$      &  &               & $(0.00)$      &  &               & $(0.00)$      \\
(Intercept)             & $0.55^{***}$  & $0.55^{***}$  &  & $0.27^{***}$  & $0.28^{***}$  &  & $0.15^{***}$  & $0.15^{***}$  \\
                        & $(0.01)$      & $(0.01)$      &  & $(0.01)$      & $(0.01)$      &  & $(0.00)$      & $(0.00)$      \\
\hline
R$^2$                   & $0.59$        & $0.59$        &  & $0.02$        & $0.03$        &  & $0.22$        & $0.22$        \\
Num. obs.               & $1097$        & $1097$        &  & $8290$        & $8290$        &  & $1097$        & $1097$        \\
\hline
\multicolumn{9}{l}{\scriptsize{\parbox{1\linewidth}{\vspace{4pt} OLS estimates. Standard errors (in parentheses) are robust. 
                                            ``Global CPF" Ratio" and ``Global Value Difference" are the representative-agent
                                            CPF ratio and value difference for each choice problem, computed using the value of 
                                            $\delta$ estimated in the Exponential Discounting model described in Appendix 
                                            \ref{APP:structural_time}. \\ $^{***}p<0.001$; $^{**}p<0.01$; $^{*}p<0.05$.}}}
\end{tabular}
\label{tab:cpfregs_global}
\end{center}
\end{table}

\begin{table}[!h]
\caption{Individual-Level Error Rates vs. CPF Ratio}
\begin{center}
\begin{tabular}{l c c c c}
\hline
 & \multicolumn{4}{c}{\makecell{\textit{Dependent Variable}:\\ Binary Error (Indiv. $\hat\delta$)}} \\
\cline{2-5}
 & (1) & (2) & (3) & (4) \\
\hline
Global CPF Ratio        & $-0.18^{***}$ & $-0.17^{***}$ &               &               \\
                        & $(0.01)$      & $(0.01)$      &               &               \\
Indiv. CPF Ratio        &               &               & $-0.37^{***}$ & $-0.32^{***}$ \\
                        &               &               & $(0.01)$      & $(0.01)$      \\
Indiv. Value Difference &               & $-0.02^{***}$ &               & $-0.01^{***}$ \\
                        &               & $(0.00)$      &               & $(0.00)$      \\
(Intercept)             & $0.26^{***}$  & $0.32^{***}$  & $0.40^{***}$  & $0.40^{***}$  \\
                        & $(0.00)$      & $(0.01)$      & $(0.01)$      & $(0.01)$      \\
\hline
R$^2$                   & $0.02$        & $0.06$        & $0.10$        & $0.10$        \\
Num. obs.               & $41450$       & $41450$       & $41450$       & $41450$       \\
\hline
\multicolumn{5}{l}{\scriptsize{\parbox{0.72\linewidth}{\vspace{4pt} OLS estimates. Standard errors (in parentheses) are robust. 
                                            ``Global CPF" Ratio" is the representative-agent CPF ratio for each subject-choice problem,
                                            computed using the value of $\delta$ estimated in the Exponential Discounting model described in Appendix 
                                            \ref{APP:structural_time}. ``Indiv. CPF" Ratio" and ``Indiv. Value Difference" are the individual-level
                                            CPF ratio and value difference for each subject-choice problem, computed using the individual-level 
                                            $\delta_i$ estimates under the same model. \\ $^{***}p<0.001$; $^{**}p<0.01$; $^{*}p<0.05$.}}}
\end{tabular}
\label{tab:cpf_error_indiv}
\end{center}
\end{table}

\begin{table}[h!]
\centering
\caption{Structural Estimates: Intertemporal Choice} 
\label{tab:structural_time}
\begin{tabular}{lrcccc}
  \hline
\hline
 &   & EDU & QDU & HDU & CPF-C \\ 
  \hline
Parameter Estimates &  &   &   &   &   \\ 
  \quad $\delta$ &  & 0.95 & 0.96 &   & 0.96 \\ 
  \quad $\beta$ &  &   & 0.84 &   &   \\ 
  \quad $\iota$ &  &   &   & 0.16 &   \\ 
  \quad $\zeta$ &  &   &   & 0.12 &   \\ 
  \quad $\kappa$ &  &   &   &   & 0.03 \\ 
  \quad $\gamma$ &  &   &   &   & 0.85 \\ 
  \quad $\eta$ &  & 0.35 & 0.41 & 0.43 &   \\ 
   \hline
$R^2$ &  & 0.76 & 0.76 & 0.79 & 0.89 \\ 
  Completeness &  & 0.76 & 0.77 & 0.82 & 0.99 \\ 
  Restrictiveness &  & 0.591 & 0.584 & 0.585 & 0.593 \\ 
  \quad &  & (0.001) & (0.001) & (0.001) & (0.001) \\ 
   \hline 
 \multicolumn{6}{p{0.7\linewidth}}{\scriptsize{``EDU", ``QDU", ``HDU", and ``CPF-C" 
        refer to the Exponential Discounting, Quasi-Hyperbolic Discounting, Hyperbolic Discounting, and CPF Complexity models described in 
                        Appendix \ref{APP:structural_time}. For these estimates, each time period is 24 days.
                        Completeness and Restrictiveness measures are defined in Appendix \ref{APP:completeness}. 
                        Standard errors for Restrictiveness estimates in parentheses.}} 
\end{tabular}
\end{table}

\clearpage

\begin{center}
\Large{\textbf{ONLINE APPENDIX}}
\end{center}

\section{Appendix: Additional Theoretical Results}
\label{APP:theory}
Proofs of all results stated in this Appendix are compiled in \href{https://jeffreyyang97.github.io/personalwebsite/CC_OA.pdf}{Supplemental Appendix G}. 

\subsection{Characterization Results}
\label{APP:theory_axioms}

\subsubsection{Multiattribute Choice: Nonlinear Preferences}

We consider a more general multiattribute domain. Each option in $X\equiv X_1\times X_2\times ...\times X_n$ is defined on $n$ attributes, where each $X_i$ is a connected and separable topological space. Preferences are additively separable in each attribute, where the value of each $x\in X$ is given by $U(x)=\sum_{k} u_k(x_k)$. Say that $u_k$ is \textit{non-trivial} if there exist $x_k,x'_k\in X_k$ such that $u_k(x_k)\neq u_k(x'_k)$. Consider the following representation:

\begin{definition}
    $\tau$ has an additively separable $L_1$-\textit{complexity representation} if there exist continuous, non-trivial $u_i:X_i\to\mathbb{R}$ such that for $U(x)=\sum_{k}u_k(x_k)$ and $d_{L1}(x,y)=\sum_{k}|u_k(x_k)-u_k(y_k)|$, whenever $d_{L1}(x,y)\neq 0$, 
\begin{align*}
    \tau_{xy}=H\left(\frac{|U(x)-U(y)|}{d_{L1}(x,y)}\right)
\end{align*}
for $H$ continuous, increasing with $H(0)=0$, and $\tau_{xy}=0$ otherwise. Similarly, a binary choice rule $\rho$ has an additively separable $L_1$-\textit{complexity representation} if there exist continuous, non-trivial $u_i:X_i\to\mathbb{R}$ such that whenever $d_{L1}(x,y)\neq 0$,
\begin{align*}
    \rho(x,y)=G\left(\frac{U(x)-U(y)}{d_{L1}(x,y)}\right).
\end{align*}
for $G$ continuous, strictly increasing, and $\rho(x,y)=1/2$ otherwise. 
\end{definition} 

This representation mirrors the linear $L_1$-complexity representation discussed in the main text, except each utility-weighted attribute value $\beta_kx_k$ is replaced by its potentially non-linear counterpart $u_k(x_k)$. We provide an axiomatic characterization for this more general representation, in which Linearity is relaxed and replaced with two axioms. 

First some definitions. For $E\subseteq  I$, let $x_Ey$ denote the option that replaces the value of option $y$ along attributes $k\in E$ with $x_k$. Say that comparisons $(x,y),(w,z)\in \mathcal{D}$ are \textit{congruent} if for all $i\in I$, either $\rho(x_{\set{i}}y,y)\geq 1/2$ and $\rho(w_{\set{i}}z,z)\geq 1/2$, or $\rho(x_{\set{i}}y,y)\leq 1/2$ and $ \rho(w_{\set{i}}z,z)\leq 1/2$. That is, if $(x,y)$ and $(w,z)$ are congruent, the advantages and disadvantages in the two comparisons are located in the same attributes. 

\begin{enumerate}[label={M\arabic*}.]
  \setcounter{enumi}{5}
    \item \textbf{Separability:} $\rho(x_{E}z,y_{E}z)=\rho(x_Ez',y_Ez')$ for all $x,y,z,z'\in X$, $E\subseteq  I$. 
    \item \textbf{Tradeoff Congruence}: Suppose that $(x,y)$ is congruent to $(y,z)$, and $\rho(x,y),\rho(y,z)\geq 1/2$. Then $\rho(x,z)\leq \max\set{\rho(x,y),\rho(y,z)}$.
\end{enumerate}

Separability is the stochastic analog of the 
familiar coordinate independence axiom in deterministic choice, which says that $x_{E}z\succeq y_{E}z\implies x_{E}z'\succeq y_{E}z'$ for all $E\subseteq  I$, $x,y,z,z'\in X$. The interpretation of  Tradeoff Congruence is as follows: consider the attribute-wise tradeoffs involved in comparing $z$ to $y$ and $x$ to $y$, where $x$ is in fact better than $y$, and $y$ is better than $z$. The condition says that if replacing $y$ with $x$ in the first comparison and replacing $y$ with $z$ in the second only increases the magnitude of these tradeoffs---i.e., if $(x,y)$ and $(y,z)$ are congruent---then $(x,z)$ cannot be an easier comparison than both of the intermediate comparisons $(x,y)$ and $(y,z)$. Intuitively, both of these replacements only increase the size of the tradeoffs the DM must contend with, and so as revealed by choice probabilities, the DM cannot find the comparison $(x,z)$ easier than both $(x,y)$ and $(y,z)$.  

The following result states that Continuity, Moderate Transitivity, Dominance, Simplification, Separability, and Tradeoff Congruence characterize the additively separable representation, and that its primitives are identified from choice data. 

\begin{thm}
\label{THM:representation_additive} 
Suppose that $n>2$ and that all attributes are non-null. Then a binary choice rule $\rho$ satisfies M1, M3--M7 if and only if it has an additively separable $L_1$-complexity representation. Moreover, suppose that at least two attributes are non-null. If $\rho$ has  additively separable $L_1$ complexity representations $((u_i)_{i=1}^n,G)$ and $((u'_i)_{i=1}^n,G')$, then there exists $C>0$, $b_i\in \mathbb{R}$ such that $u_i'=Cu_i+b_i$ for all $i$, and $G'=G$.
\end{thm}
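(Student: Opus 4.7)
The plan is to prove both directions. Necessity is routine: one checks that a binary choice rule with an additively separable $L_1$-complexity representation satisfies each of M1, M3--M5, M7--M8. Continuity, Dominance, and Simplification follow from the monotonicity of $G$ and the fact that $d_{L1}(x,y)$ in the transformed coordinates is minimized (relative to $|U(x)-U(y)|$) under coordinate-wise dominance and under the simplification operation. Moderate Transitivity and Separability are immediate from the functional form. Tradeoff Congruence follows from the key algebraic observation that when $(x,y)$ and $(y,z)$ are congruent, the signs of $u_i(x_i)-u_i(y_i)$ and $u_i(y_i)-u_i(z_i)$ agree for each $i$, so $d_{L1}(x,z)=d_{L1}(x,y)+d_{L1}(y,z)$ and $U(x)-U(z)=(U(x)-U(y))+(U(y)-U(z))$, whence the ratio for $(x,z)$ is a weighted average of the two intermediate ratios and so cannot exceed the larger.

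For sufficiency, the plan proceeds in four steps. First, I define the deterministic preference $x\succeq y \Leftrightarrow \rho(x,y)\geq 1/2$. M3 (Moderate Transitivity) makes $\succeq$ transitive, M7 (Separability) implies coordinate independence, M1 yields continuity, and together with non-nullness of all attributes and $n>2$ I invoke the classical additive conjoint measurement theorem (Debreu; Krantz--Luce--Suppes--Tversky) to obtain continuous $u_i:X_i\to\mathbb{R}$ with $U(x)=\sum_i u_i(x_i)$ representing $\succeq$, unique up to a common positive affine transformation (same $C>0$, attribute-specific $b_i$). Reparametrize by $\tilde x = (u_1(x_1),\ldots,u_n(x_n))\in \tilde X\subseteq\mathbb{R}^n$ and define $\tilde\rho(\tilde x,\tilde y)=\rho(x,y)$; in these coordinates $\tilde U$ is linear and the candidate denominator is the $L_1$ distance.

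Second, I show that $\tilde\rho(\tilde x,\tilde y)$ depends only on the ratio $r(\tilde x,\tilde y) = (\tilde U(\tilde x)-\tilde U(\tilde y))/d_{L1}(\tilde x,\tilde y)$. Fix two comparisons $(\tilde x,\tilde y)$ and $(\tilde w,\tilde z)$ with the same ratio $r$, and suppose WLOG $r\geq 0$. Using M7 (Separability), I can shift coordinates freely along attributes where the two options agree, and using M5 (Simplification) I can concentrate the value advantage and disadvantage into a minimal number of attributes without decreasing $\tilde\rho$, and symmetrically (since M5 weakly increases choice rates and $G$ must be monotone in $r$) cannot strictly increase it when the ratio is held constant. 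The congruence condition in M8 then lets me chain together comparisons with identical ratios: given $(\tilde x,\tilde y)$ and $(\tilde w,\tilde z)$ at the same ratio, I construct an intermediate $\tilde y'$ such that both $(\tilde x,\tilde y')$ and $(\tilde y',\tilde z)$ are congruent to the desired comparisons, forcing $\tilde\rho$ to take the same value on each by the two-sided inequality obtained from applying M8 to $(\tilde x,\tilde y)$ vs.\ $(\tilde y,\tilde x)$-type reversals. Combined with M4 (Dominance), which pins down the maximal value at $r=1$, and the ``reverse'' axiom $\rho(x,y)=1-\rho(y,x)$, this yields a well-defined function $G:[-1,1]\to[0,1]$ with $G(-r)=1-G(r)$; M5 and continuity (M1) give strict monotonicity and continuity of $G$.

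Third, for uniqueness, suppose $((u_i),G)$ and $((u_i'),G')$ both represent $\rho$. Both representations induce the same deterministic preference $\succeq$, so by the additive conjoint measurement uniqueness, $u_i' = Cu_i+b_i$ for some common $C>0$ and constants $b_i$. The ratio $(U(x)-U(y))/d_{L1}(x,y)$ is invariant under this common scaling (the $C$ cancels between numerator and denominator, and the $b_i$ cancel in each difference), so $G'=G$ on the common range of ratios. The hardest step, and the genuine novelty beyond Theorem~\ref{THM:representation}, is the second step: without Linearity (M2) one cannot translate comparisons freely, and the transformed space $\tilde X$ need not be all of $\mathbb{R}^n$, so the well-definedness of $G$ must be established via the interaction of Simplification and Tradeoff Congruence in a potentially restricted connected domain. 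The key technical lemma will be that any two comparisons in $\tilde X$ with equal ratios can be linked by a finite chain of moves each of which is either congruent (and handled by M8) or a Simplification step (handled by M5), which requires a careful connectedness argument using the continuity and non-triviality of each $u_i$.
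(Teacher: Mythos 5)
Your skeleton is the right one, and it parallels the paper's proof of Theorem~\ref{THM:representation}: extract additive utility from the deterministic core $\succeq$ (here via Debreu's additive conjoint measurement theorem, in place of the linear representation theorem used there), then show that choice probabilities depend only on the value--dissimilarity ratio, then read off uniqueness. Your necessity direction is essentially complete; in particular, the observation that congruence forces $d_{L1}(x,z)=d_{L1}(x,y)+d_{L1}(y,z)$, so that the composite ratio is a mediant of the two intermediate ratios, is exactly the right reason M8 holds under the representation.

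The genuine gap is in your Step 2, and it sits precisely where the proof of Theorem~\ref{THM:representation} leans on Linearity. There, the reverse inequality in the concentration step comes from superadditivity, which is derived from Linearity plus Moderate Transitivity (Lemma~\ref{LEM:superadditivity}) by writing an arbitrary difference vector as a mixture of concentrated ones, and the reduction of all comparisons to a one-parameter family $te_1-e_2$ vs.\ $0$ comes from scale invariance --- again Linearity. You correctly note that both tools are unavailable, but your substitutes are not yet arguments: the parenthetical ``since M5 weakly increases choice rates and $G$ must be monotone in $r$'' assumes the representation you are trying to prove, and M8 plus MST only yields that two \emph{congruent} comparisons with \emph{equal} probabilities concatenate to a comparison with that same probability. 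Before that concatenation property can deliver scale invariance of the two-attribute map $f(\alpha,\beta)$, you must first establish (i) translation invariance --- that $\rho(x,y)$ depends only on the vector $(u_i(x_i)-u_i(y_i))_i$ and not on the base points; Separability only lets you move coordinates on which the options \emph{agree}, so this has to be obtained by routing each advantage through a third attribute with two opposite applications of Simplification (this is where $n>2$ does real work) --- and (ii) that de-concentration, i.e.\ splitting an advantage across two attributes at fixed ratio, preserves the probability, which is not a Simplification move and needs its own functional-equation argument using continuity and the monotonicity supplied by Lemma~\ref{LEM:monotonicity}. None of this looks fatal, but it is the theorem, and your sketch currently replaces it with a named ``key technical lemma.'' Separately, your uniqueness step invokes conjoint-measurement uniqueness of $\succeq$ alone, which requires three essential coordinates (or a Thomsen-type condition), whereas the theorem's uniqueness clause assumes only two non-null attributes; there you must instead recover the ratios $|u_i(x_i)-u_i(y_i)|/|u_j(x_j)-u_j(y_j)|$ directly from the two-attribute choice probabilities, which the full $\rho$ (unlike $\succeq$) does pin down.
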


\subsubsection{Lottery Choice}
Consider the lottery choice domain, where $X$ is the set of finite state lotteries over $\mathbb{R}$. The CDF-complexity representation for $\tau$ implies the following binary choice representation:
\begin{definition}\label{def:CDF_complexity_rho}
A binary choice rule $\rho$ has a \textit{CDF}-Complexity representation if there exists $u:\mathbb{R}\to\mathbb{R}$ strictly increasing such that 
\begin{align*}
    \rho(x,y)=G\left(\frac{EU(x)-EU(y)}{d_{CDF}(x,y)}\right)
\end{align*}
for $G$ continuous, strictly increasing. 
\end{definition}

Let $\geq$ denote the partial order corresponding to first-order stochastic dominance. Let $S_x=\set{w\in \mathbb{R}: f_x(w)>0}$ denote the support of $x$. Consider the following axioms: 

\begin{enumerate}[label={L\arabic*}.]
    \item \textbf{Continuity:} $\rho(x,y)$ is continuous on its domain.
    \item \textbf{Independence:} $\rho(x,y)=\rho(\lambda x+(1-\lambda)z,\lambda y+(1-\lambda)z)$ for $\lambda\in (0,1)$.
    \item \textbf{Moderate Transitivity:} If $\rho(x,y)\geq1/2$ and $\rho(y,z)\geq1/2$, then either $\rho(x,z)> \min\set{\rho(x,y),\rho(y,z)}$ or $\rho(x,z)=\rho(x,y)=\rho(y,z)$.
    \item \textbf{Dominance:} $x\geq y$, then $\rho(x,y)\geq \rho(w,z)$ for any $w,z\in X$, where the inequality is strict if $w\not\geq z$.
    
    \item \textbf{Simplification}: If $\rho(x,y)\geq 1/2$: for any $x'\in X$ with support in $S_x\cup S_y$ satisfying 
    \begin{enumerate}[label={(\arabic*}), leftmargin=1.5cm, itemsep=1mm]
        \item $F_{x'}(w^*)=F_{y}(w^*)$ for some $w^*\in S_x\cup S_y$,
        \item $F_{x'}(w)\neq F_x(w)$ for at most one $w\in S_x\cup S_y/\set{w^*}$,
    \end{enumerate}
    such that $\rho(x',x)=1/2$, we have $\rho(x',y)\geq\rho(x,y)$. 
\end{enumerate}

Axioms L1--L4 are direct analogs of M1--M4 in the characterization of $L_1$ complexity. Axiom L5 says that concentrating value differences the same region of the payoff distribution makes lotteries easier to compare, and is an analog of the Simplification property (Axiom M5) for $L_1$ complexity. L1--L5 exhaust the behavioral content of CDF complexity. 

\begin{thm}
    \label{THM:representation_risk}
    A binary choice rule $\rho$ satisfies L1-L5 if and only if it has a CDF-Complexity representation $(G,u)$. Moreover, if $(G',u')$ also represents $\rho$, then $G'=G$ and there exists $C>0,b\in\mathbb{R}$ such that $u'=Cu+b$.
\end{thm}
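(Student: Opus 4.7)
Necessity is a direct axiom-by-axiom verification; the only non-routine check is Independence, which follows once $d_{CDF}(x,y)$ is rewritten as $\int |\tilde F_x(v)-\tilde F_y(v)|\,dv$ with $\tilde F_x(v)=F_x(u^{-1}(v))$ and one observes that both numerator and denominator of the ratio scale linearly under common mixtures. For sufficiency I would proceed in four steps: (i) recover expected-utility preferences from L1--L4; (ii) upgrade to a moderate-utility representation using L2 and L3; (iii) identify the distance with $d_{CDF}$ using L4 and L5; and (iv) read off uniqueness. Step (i) is close to standard: the induced preference $x\succeq y \Leftrightarrow \rho(x,y)\ge 1/2$ is complete and transitive by Moderate Transitivity (L3) plus $\rho(x,y)+\rho(y,x)=1$, continuous by L1, and satisfies vNM independence by L2, so a Bernoulli utility $u$ exists and is strictly increasing by Dominance (L4). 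For step (ii) I would apply the He--Natenzon (2023) moderate utility result on every finite subfamily of lotteries to obtain a representation
\begin{align*}
\rho(x,y) = G\!\left(\frac{EU(x) - EU(y)}{d(x,y)}\right),
\end{align*}
and then use Independence (L2) to force $d(\lambda x + (1-\lambda) z,\lambda y + (1-\lambda) z) = \lambda\, d(x,y)$, so $d$ is mixture-homogeneous; continuity and a standard nested-subdomain argument make $G$ global.

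\textbf{Identifying $d = d_{CDF}$.} I would normalize so that $G$ is maximized at argument $+1$. By Dominance (L4) this maximum is attained precisely on FOSD pairs, and comonotonicity of quantile functions makes $d_{CDF}(x,y) = EU(x) - EU(y)$ there, so $d = d_{CDF}$ on FOSD pairs. To extend the identification to arbitrary pairs I would invoke the quantile-utility embedding sketched in Appendix~\ref{APP:theory_cpf_L1}: any finite family of lotteries can be represented by piecewise-constant utility profiles $\phi_x(q) = u(F_x^{-1}(q))$ on a common refinement of $[0,1]$, and under this embedding $EU$ becomes the inner product with the Lebesgue-measure weights on the refinement intervals while $d_{CDF}$ becomes exactly the corresponding $L_1$ distance. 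Axioms L1--L5 then translate, on this finite-dimensional subdomain, into axioms M1--M5 on the induced multiattribute binary choice rule, so Theorem~\ref{THM:representation} applies and forces $d$ to coincide with the $L_1$ distance, i.e.\ with $d_{CDF}$, up to a positive scalar that can be absorbed into $u$. Refining the partition and invoking Continuity (L1) extends the identification to all finite-support pairs.

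\textbf{Uniqueness and main obstacle.} Uniqueness of $u$ up to positive affine transformation comes from the vNM theorem; positive rescalings of $u$ scale numerator and denominator of the ratio by the same factor and additive constants cancel in each, so $G$ is invariant. $G$ itself is then pinned down on the image of the ratio, which spans $[-1,+1]$: the endpoints come from FOSD pairs, the interior values from mixtures via L2. The main obstacle lies in Step (iii): the quantile-utility embedding maps lotteries into the cone of non-decreasing sequences, a strict subset of $\mathbb{R}^n$, so I would need to verify (a) that Simplification (L5) corresponds cleanly, on this cone, to the multiattribute simplification axiom M5, and (b) that the restricted domain is still rich enough for Theorem~\ref{THM:representation} to pin down the distance uniquely. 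Both checks hinge on choosing refinements of $[0,1]$ fine enough that every multiattribute simplification invoked in the $L_1$ proof can be realized by a valid lottery operation that preserves the monotonicity of the quantile function, and this construction is where the bulk of the technical work in the sufficiency direction would lie.
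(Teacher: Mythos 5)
Your high-level architecture is reasonable, and your necessity argument is essentially right: the rewriting $d_{CDF}(x,y)=\int|\tilde F_x(v)-\tilde F_y(v)|\,dv$ with $\tilde F_x(v)=F_x(u^{-1}(v))$ is exactly what makes Independence hold, because mixtures act affinely on CDFs. Your uniqueness argument (vNM pins down $u$ up to positive affine transformation, the ratio is invariant under such transformations, and its range is all of $[-1,1]$) is also fine. The problem is in the sufficiency direction, specifically Step (iii).

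You propose to reduce to Theorem \ref{THM:representation} via the quantile--utility embedding $\phi_x(q)=u(F_x^{-1}(q))$ on a common refinement of $[0,1]$. Under that embedding $EU$ and $d_{CDF}$ do take the advertised inner-product and $L_1$ forms, but the vNM mixture operation does \emph{not} act linearly on quantile functions: $F^{-1}_{\lambda x+(1-\lambda)z}\neq \lambda F_x^{-1}+(1-\lambda)F_z^{-1}$ (mix two point masses to see this). Consequently Independence (L2) does not translate into Linearity (M2) for the induced multiattribute rule, and Theorem \ref{THM:representation} cannot be invoked as you claim; the obstacle you flag (the image being the cone of nondecreasing profiles) is real but secondary to this. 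The embedding compatible with mixtures is the CDF one: fix a common finite support $v_1<\dots<v_K$ in utility units and represent each lottery by its CDF values $\tilde F_x(v_k)$ weighted by the gaps $v_{k+1}-v_k$; then $EU$ is affine in these coordinates, $d_{CDF}$ is the weighted $L_1$ distance, mixtures become convex combinations, FOSD becomes coordinatewise dominance, and L5 --- which the paper deliberately states in terms of $F_{x'}(w^*)=F_y(w^*)$ at support points --- maps onto M5. Even then you cannot cite Theorem \ref{THM:representation} off the shelf: the embedded domain is the compact polytope of monotone vectors in $[0,1]^{K-1}$ rather than $\mathbb{R}^n$, and the proof of Theorem \ref{THM:representation} uses the vector-space structure essentially (Lemma \ref{LEM:superadditivity} takes additive inverses; Claim 1 concentrates arbitrary advantages and disadvantages into single free coordinates). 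So the concentration-via-Simplification and superadditivity arguments must be rerun with operations realizable by actual lotteries on the simplex --- that is where the real work is, and your write-up locates it in the wrong place. Separately, your Step (ii) (He--Natenzon on finite subfamilies plus patching) becomes redundant once Step (iii) is done correctly, and the paper itself notes that the finite-domain moderate-utility theorem does not apply to these domains.
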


\subsubsection{Lottery Choice: Relationship to \citet{fishburn_probabilistic_1978}}
\citet{fishburn_probabilistic_1978} axiomatizes the \textit{incremental EU advantage} model, in which 
\begin{align*}
    \rho(x,y)=G\left(\frac{EU(x)-EU(y)}{d_{CDF}(x,y)}\right)
\end{align*}
for some strictly increasing $u$ and $G$ strictly increasing, with $G(1)=G(-1)=1$. It is identical to our Definition \ref{def:CDF_complexity_rho} except here, $G$ need not be continuous, and $G(1)=G(-1)=1$, which implies that a FOSD-dominant lottery is chosen with probability 1. Below, we state Fishburn's axioms and discuss their relationship to ours.

\citet{fishburn_probabilistic_1978} characterizes this model using 8 axioms. The first four are similar or identical to L1--L4: 1) a continuity axiom, 2) Moderate Transitivity, 3) Independence, and 4) a strengthening of Dominance that requires $\rho(x,y)=1$ if and only if $x>y$. The remaining four take the place of our Simplification axiom. To state these axioms, write $x>_0y$ if and only if $x$ payoff-dominates $y$---that is, all possible outcomes in $x$ are greater than all possible outcomes in $y$, and write $(w_1,w_2,w_3)C(x,y)$ if and only if $w_1>w_2>w_3$, $f_x(w_i)+f_y(w_i)>0$ for each $i$, $f_x(w_2)=0$, $f_x(w)+f_y(w)=0$ for all $w\in(w_3,w_1)\setminus\set{w_2}$, and either $F_x(w_1)\geq F_y(w_1)$ or $F_y(w_2)\geq F_x(w_1)$. These axioms hold for $x,y,x',y'\in X$, $w,w',w''\in\mathbb{R}$, and $\lambda\in (0,1)$: 

\begin{enumerate}[label={\arabic*})]
\setcounter{enumi}{4}
    \item If $x>_0 w>_0y$ and $\rho(w,\frac{1}{2}x+\frac{1}{2}y)$, then $\rho(\lambda x+(1-\lambda)y,w)=\rho(w,\lambda y+(1-\lambda)x)$.
    \item If $x>_0w>_0y$, $x'>_0w'>_0y'$, and $\rho(w,\frac{1}{2}x+\frac{1}{2}y)=\rho(w',\frac{1}{2}x'+\frac{1}{2}y')$, then $\rho(w,\lambda x+(1-\lambda)y)=\rho(w',\lambda x'+(1-\lambda)y')$.
    \item If $\frac{1}{2}x+\frac{1}{2}y>_0\frac{1}{2}x'+\frac{1}{2}y'$, $\rho(x,y)=\rho(x',y')$, then $\rho(\frac{1}{2}x+\frac{1}{2}x',\frac{1}{2}y+\frac{1}{2}y')=\rho(x,y)$. 
    \item If $(w,w',w'')C(x,y)$ and $\rho(w',\lambda w+(1-\lambda)w'')=\frac{1}{2}$, then with $y=(1-f_y(w'))y'+f_y(w')w'$, $\rho(x,y)=\rho(x,(1-f_y(w'))y'+f_y(w')[\lambda w+(1-\lambda)w''])$. 
\end{enumerate}
See \citet{fishburn_probabilistic_1978} for an interpretation of these axioms. 

One important difference between this axiom system and ours is that Fishburn's additional axioms do not admit straightforward translations to multiattribute choice: 5), 6), and 7) involve a payoff dominance notion for lotteries that does not have an analog in multiattribute choice, and 8) involves the replacement of lottery outcomes, which also does not have a multiattribute analog. Fishburn's result therefore cannot be directly extended to multiattribute choice, and so our characterization for $L_1$ complexity is to our knowledge novel to the literature. 

Another distinction is that Fishburn's additional axioms involve mixture operations, whereas our Simplification axiom does not. This means that Fishburn's axiom system cannot as easily be adapted to characterize more general models that weaken Independence/Linearity. On the other hand, by weakening Linearity, our axiom system can be adapted to characterize a generalized non-linear representation in multiattribute choice, as Theorem \ref{THM:representation_additive} demonstrates.

\subsubsection{Intertemporal Choice}
Consider intertemporal choice, where $X$ is the set of finite payoff streams. For a payoff flow $x\in X$, let $T_x=\set{t:m_x(t)\neq 0}$ denote the \textit{support} of $x$, and for $x,y\in X$ let $T_{xy}=T_x\cup T_y\cup\set{0,\infty}$ denote the \textit{joint support} of $x$ and $y$. We consider the following extension of Definition \ref{def:CPF_complexity} to general time discounting. Call $d:\mathbb{R}^{+}\cup \set{+\infty}\to\mathbb{R}^{+}$ a \textit{discount function} if $d$ is strictly decreasing and $d(\infty)=0$. We will consider discounted utility preferences of the form $DU(x)=\sum_{t}d(t)m_x(t)$. Note that $d$ need not be continuous, and so can capture discontinuous time preferences such as quasi-hyperbolic discounting. 

\begin{definition}
\label{def:CPF_complexity_general}
    $\tau$ has a generalized CPF complexity representation if there exists a discount function $d$ such that 
    \begin{align*}
        \rho(x,y)=G\left(\frac{DU(x)-DU(y)}{d_{CPF}(x,y)}\right)
    \end{align*}
for $H$ continuous, strictly increasing with $H(0)=0$, where $d_{CPF}(x,y)=\sum_{k=0}^{n-1}|M_x(t_k)-M_y(t_k)|\cdot(d(t_k)-d(t_{k+1}))$, for $t_0<t_1<...<t_{n}$ enumerating $T_{xy}$. Similarly, a binary choice rule $\rho$ has a generalized CPF complexity representation if 
\begin{align*}
        \rho(x,y)=G\left(\frac{DU(x)-DU(y)}{d_{CPF}(x,y)}\right)
    \end{align*}
for some continuous, strictly increasing $G$.
\end{definition}

Note that if $d$ is differentiable, $d_{CPF}$ takes the form $d_{CPF}(x,y)=\int_{0}^{\infty}|M_x(t)-M_y(t)|\cdot (-d'(t))\,dt$. In the case where $d(t)=\delta^t$, generalized CPF complexity reduces to Definition 4. Let $\geq$ denote the partial order $X$ corresponding to temporal dominance (i.e., $x \geq y$ iff at every time $t \in \mathbb{R}^+\cup\{+\infty\}$, $M_x(t) \geq M_Y(t))$.
Consider the following axioms: 

\begin{enumerate}[label={T\arabic*}.]
    \item \textbf{Continuity:} $\rho(x,y)$ is continuous on its domain.
    \item \textbf{Linearity:} $\rho(x,y)=\rho(\lambda x+(1-\lambda)z,\lambda y+(1-\lambda)z)$ for $\lambda\in (0,1)$.
    \item \textbf{Moderate Transitivity:} If $\rho(x,y)\geq1/2$ and $\rho(y,z)\geq1/2$, then either $\rho(x,z)> \min\set{\rho(x,y),\rho(y,z)}$ or $\rho(x,z)=\rho(x,y)=\rho(y,z)$.
    \item \textbf{Dominance:} $x\geq y$, then $\rho(x,y)\geq \rho(w,z)$ for any $w,z\in X$, where the inequality is strict if $w\not\geq z$. 
    \item \textbf{Simplification}: If $\rho(x,y)\geq 1/2$, for any $x'\in X$ with support in $T_x\cup T_y$ satisfying 
    \begin{enumerate}[label={(\arabic*}), leftmargin=1.5cm, itemsep=1mm]
        \item $M_{x'}(t^*)=M_{y}(t^*)$ for some $t^*\in T_x\cup T_y$,
        \item $M_{x'}(t)\neq M_x(t)$ for at most one $t\in T_x\cup T_y/\set{t^*}$,
    \end{enumerate}
    such that $\rho(x',x)=1/2$, we have $\rho(x',y)\geq\rho(x,y)$. 
\end{enumerate}

Axioms T1--T4 are direct analogs of M1--M4 in the characterization of $L_1$ complexity. Axiom T5 says that concentrating value differences between payoff flows in the time period makes them easier to compare, and is an analog of the Simplification property (Axiom M5) for $L_1$ complexity. T1--T5 exhaust the behavioral content of CPF complexity. 

\begin{thm}
\label{THM:representation_time}
A binary choice rule $\rho$ satisfies T1---T5 iff it has a generalized CPF-Complexity Representation $(G,d)$. Moreover, if $\rho$ is also represented by $(G',d')$, then $G'=G$, and there exists $C>0$ such that $d'=Cd$. 
\end{thm}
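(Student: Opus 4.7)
The plan is to verify necessity directly from the form of the representation, and to obtain sufficiency by reducing the intertemporal setting to the multiattribute characterization in Theorem \ref{THM:representation}.

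For necessity, axiom T1 is immediate from continuity of $G$, $DU$, and $d_{CPF}$. T2 holds because both $DU(x)-DU(y)$ and $d_{CPF}(x,y)$ depend only on the pointwise differences $m_x - m_y$, which are preserved under mixing with any common $z$. T3 follows from the general observation that any representation of the form $\rho(x,y) = G\!\left(\tfrac{v_x - v_y}{d(x,y)}\right)$ with $G$ strictly increasing and $d$ a metric satisfies moderate transitivity (see \citealp{he_moderate_2023}). For T4, temporal dominance $x \geq y$ is exactly $M_x(t) \geq M_y(t)$ for all $t$, so each summand of $d_{CPF}(x,y) = \sum_{k}|M_x(t_k) - M_y(t_k)|(d(t_k)-d(t_{k+1}))$ equals its signed counterpart in $DU(x)-DU(y)$; the ratio equals $1$, attaining the maximum $G(1)$, and strict dominance gives strict inequality somewhere. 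T5 reduces to a step-function calculation: the constraint $\rho(x',x) = 1/2$ forces $DU(x') = DU(x)$ so that the signed value difference is preserved, while the two-point modification consolidates signed $M_{x'} - M_y$ increments in a common direction, which weakly reduces the $L_1$-type sum defining $d_{CPF}(x',y)$.

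For sufficiency, the key step is that CPF-complexity on payoff flows supported on a finite time grid is isomorphic to $L_1$-complexity in the cumulative-payoff coordinates. Fix a finite grid $T = \{t_0 = 0 < t_1 < \cdots < t_n = \infty\}$ and let $X_T \subseteq X$ denote the subspace of payoff flows supported on $T \setminus \{\infty\}$. The map $\Psi_T : X_T \to \mathbb{R}^n$ defined by $\Psi_T(x)_k = M_x(t_{k-1})$ is a linear bijection, and under it T1, T3--T5 translate directly into M1, M3--M5 on $\mathbb{R}^n$, while T2 combined with linearity of $\Psi_T$ yields M2. Applying Theorem \ref{THM:representation} (or its $n=2$ analog, Theorem \ref{THM:representation_n2}) produces an $L_1$-complexity representation of the induced binary choice rule with transformation $G^T$ and attribute weights $\alpha^T_1, \ldots, \alpha^T_n$. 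Axiom T4 forces all $\alpha^T_k$ to be positive, since temporal dominance coincides with coordinate-wise dominance in cumulative-payoff space. Setting $d^T(t_{k-1}) - d^T(t_k) := \alpha^T_k$ together with $d^T(\infty) := 0$ then defines a strictly decreasing discount function on $T$ matching Definition \ref{def:CPF_complexity_general} on $X_T$.

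The main obstacle is gluing these grid-specific representations into a globally consistent $(G,d)$. For $T \subset T'$, the restriction of $(G^{T'}, d^{T'})$ to $X_T$ must also represent $\rho$ on $X_T$, so the uniqueness clause of Theorem \ref{THM:representation} forces $G^{T'} = G^T$ and agreement of the weights up to a common positive scalar. When $T' = T \cup \{t^*\}$ with $t^* \in (t_k, t_{k+1})$, every $x \in X_T$ has $M_x$ constant on $[t_k, t_{k+1}]$, so collapsing the $T'$-representation on $X_T$ forces the two new sub-interval weights to sum (after rescaling) to the original weight on $(t_k, t_{k+1}]$. Fixing the scale once---say by normalizing the weight on one reference interval---yields a well-defined $d$ on $\bigcup_T T$, which extends to all of $\mathbb{R}^+ \cup \{\infty\}$ via T1 and the density of finite-support flows. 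Uniqueness of $(G,d)$ up to a positive rescaling of $d$ follows from the uniqueness part of Theorem \ref{THM:representation} on any single $X_T$, combined with the anchor $d(\infty) = 0$.
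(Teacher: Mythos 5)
Your overall strategy---verify necessity by direct computation, then obtain sufficiency by pushing the problem through the cumulative-payoff embedding $\Psi_T$ onto $\mathbb{R}^n$ and invoking Theorem \ref{THM:representation}, gluing across grids via the uniqueness clause---is the natural one given that the paper presents T1--T5 as analogs of M1--M5 and explicitly relates CPF-complexity to $L_1$-complexity over a common attribute representation (Proposition \ref{PROP:attributes_time}). The necessity half is essentially right: the Abel-summation identity $DU(x)-DU(y)=\sum_k (M_x(t_k)-M_y(t_k))(d(t_k)-d(t_{k+1}))$ does make T4 and T5 routine, and your consistency argument for assembling $d$ from the grid-level weights (common refinements, sub-interval weights summing to the parent weight, anchoring $d(\infty)=0$) is the right gluing mechanism---though the appeal to ``density of finite-support flows'' is superfluous, since every time point already lies in some finite grid and only cross-grid consistency is at stake.

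The substantive gap is the claim that ``T1, T3--T5 translate directly into M1, M3--M5'' on a fixed grid. T5 is \emph{support-restricted}: it only licenses modifications $x'$ supported in $T_x\cup T_y$, with $t^*$ and the single perturbed point drawn from $T_x\cup T_y$. M5 on $\mathbb{R}^n$ carries no such restriction, and the proof of Theorem \ref{THM:representation} genuinely uses the unrestricted version---Claim 1 repeatedly relocates value differences to coordinates (times) at which neither option has any payment, in order to show $\rho(z^{ij},0)=\rho(z^{12},0)$ for arbitrary index pairs. For a pair $x,y\in X_T$ whose joint support is a proper subset of $T$, T5 simply does not deliver those instances of M5, so Theorem \ref{THM:representation} cannot be applied as stated. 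The gap is bridgeable---use T2 to mix both options with a flow carrying small payments at the missing time points (which leaves $\rho$ unchanged but enlarges the joint support), apply T5 to the perturbed pair, and pass to the limit via T1---but this padding argument is a necessary step, not a formality, and your proposal omits it. Two smaller points: invoking Theorem \ref{THM:representation_n2} for two-interval grids would require verifying Exchangeability (M6), which T1--T5 do not obviously supply; you should instead always refine to grids with at least three intervals and recover small grids by restriction. You should also record that T4 makes every grid coordinate non-null, since Theorem \ref{THM:representation} assumes this.
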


To characterize CPF complexity with exponential discounting preferences, an additional standard stationarity axiom is needed.

\begin{enumerate}[label={T\arabic*}.]
  \setcounter{enumi}{5}
    \item \textbf{Stationarity}. If $\rho(x,y)>1/2$: for $x',y',k>0$ s.t. $m_{x'}(t)=m_{x}(t-k)$, $m_{y'}(t)=m_{y}(t-k)$ for all $t\geq k$ and $m_{x'}(t)=m_{y'}(t)=0$ for all $t<k$, $\rho(x',y')\geq 1/2$. 
\end{enumerate}

\subsection{Menu Sequences and Tiebreaking in Multinomial Choice}
\label{APP:theory_tiebreaking}

\textbf{\textit{Menu Sequence Extension}}. Consider a \textit{menu sequence} $A^1,A^2,...,A^n\in \mathcal{A}$ in a choice context $C$. Here, the DM generates signals $s$ for each pairwise comparison in $A^1\cup A^2\cup...A^n\cup C$, and chooses the option from each menu with the highest posterior expected value (randomizing in the case of ties; see below), yielding joint choice frequencies
\begin{align*}
    \rho((x^1,...x^n),(A^1,...,A^n)|C)=\mathbb{P}\left(\bigcap_{i=1}^n\set{s: \mathbb{E}[v_{x^i}|s]>\mathbb{E}[v_y|s]\,\forall\, y\in A^i/\set{x^i} }\,\big|\,v\right).
\end{align*}
Here, $\rho((x^1,...x^n),(A^1,...,A^n)|C)$ records the frequency of choosing $x^i\in A^i$ for $i=1,...,n$. Given an option $x$ and a price list $Z$, a valuation task $(x,Z)$ is simply the binary menu sequence $A^1,...,A^n=\set{x,z^1},...,\set{x,z^n}$.\\

\noindent\textbf{\textit{Tiebreaking}}. Fix any choice problem $(A,C)$. If $s$ induces a tie among options that maximize posterior expected value, we assume a symmetric tiebreaking rule in which the DM randomizes between the maximal options. In particular, for any option $x\in A$ and signal realization $s$, let $\mathcal{N}(x,s)\equiv |\set{y\in A: \mathbb{E}[v_y|s]=\mathbb{E}[v_x|s]}|$ denote the number of options in $A$ with the same posterior expected value as $x$, and define the random variable
\begin{align*}
    c(x,s)\equiv 
    \begin{cases}
        1/\mathcal{N}(x,s) & \mathbb{E}[v_x|s]\geq \mathbb{E}[v_y|s]\,\forall\, y\in A\\
        0 & \text{otherwise}
    \end{cases}
\end{align*}
Choice probabilities are given by $\rho(x,A|C)=\mathbb{E}[c(x,s)\,|\, v]$. 

We assume the same tiebreaking rule in our extension to menu sequences wherein the DM independently randomizes between the maximal options in each menu. In particular, fix a menu sequence $((A^1,...,A^n),C)$. For any option $x\in A^i$ and signal realization $s$, let $\mathcal{N}^i(x,s)\equiv |\set{y\in A^i: \mathbb{E}[v_y|s]=\mathbb{E}[v_{x}|s]}|$ denote the number of options in $A^i$ with the same posterior expected value as $x$, and define 
\begin{align*}
    c^i(x,s)\equiv 
    \begin{cases}
        1/\mathcal{N}^i(x,s) & \mathbb{E}[v_x|s]\geq \mathbb{E}[v_y|s]\,\forall\, y\in A^i\\
        0 & \text{otherwise}
    \end{cases}
\end{align*}
Choice probabilities are given by $\rho((x^1,...,x^n),(A^1,...,A^n)|C)=\mathbb{E}\left[\prod_{i=1}^n c^i(x^i,s)\,\bigg|\, v\right].$

\subsection{Identification in Multinomial Choice}
\label{APP:theory_multinomial_id}
Call $\rho:X\times \mathcal{A}\times\mathcal{C}\to[0,1]$ a multinomial choice rule if $\sum_{x\in A}\rho(x,A|C)=1$ for all $(A,C)\in \mathcal{A}\times\mathcal{C}$. Our multinomial choice model is parameterized by the prior distribution $Q$, the value function $v:X\to \mathbb{R}$, and the signal precisisons $\tau:\mathcal{D}\to \mathbb{R}^{+}$, where we make the additional assumption that $\tau(x,y)=0$ if $v(x)=v(y)$. The following result states that $v$ is ordinally identified and $\tau$ is exactly identified. 

\begin{prop}
\label{PROP:multinomial_id}
    Suppose that a multinomial choice rule $\rho$ is represented by $(Q,v,\tau)$ and $(Q',v',\tau')$. Then $\tau'=\tau$ and there exists $\phi:\mathbb{R}\to\mathbb{R}$ strictly increasing such that $v'=\phi\circ v$. 
\end{prop}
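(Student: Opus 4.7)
My plan is to reduce the proposition to the identification content of binary choice data, where the multinomial model collapses to the binary model of Section 2.1. The first step is to derive the familiar binary formula $\rho(x, \{x,y\} \mid \emptyset) = \Phi(\text{sgn}(v_x - v_y)\sqrt{\tau_{xy}})$. The argument goes: (a) under the symmetric, i.i.d.\ prior $Q$ and the sign-only signal $s_{xy}$, the posterior probability that $v_x > v_y$ depends on the data only through a function of $s_{xy}$ and $\tau_{xy}$; (b) by symmetry of $Q$, this probability exceeds $1/2$ iff $s_{xy} > 0$; and (c) this implies $\mathbb{E}[v_x \mid s_{xy}] > \mathbb{E}[v_y \mid s_{xy}]$ iff $s_{xy} > 0$, so that $\rho(x,y) = P(s_{xy} > 0 \mid v) = \Phi(\text{sgn}(v_x - v_y)\sqrt{\tau_{xy}})$. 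A key point is that this formula is invariant to the specific choice of $Q$ within the class of symmetric distributions, which explains why the proposition is silent about $Q$.

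Second, I would invert this formula to recover both $\tau$ and the ordinal ranking of $v$ from binary choice probabilities. The sign $\text{sgn}(v_x - v_y)$ is read off from whether $\rho(x,y) \gtrless 1/2$, and $\tau_{xy}$ is pinned down by $\tau_{xy} = (\Phi^{-1}(\max\{\rho(x,y), 1-\rho(x,y)\}))^2$. Since both $(Q,v,\tau)$ and $(Q',v',\tau')$ produce the same $\rho$ and agree on the binary restriction in particular, they must deliver the same values for these identified quantities: $\tau'_{xy} = \tau_{xy}$ for all pairs, and $\text{sgn}(v_x - v_y) = \text{sgn}(v'_x - v'_y)$ for every $x,y$. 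This already gives $\tau' = \tau$ and weak-order preservation of $v$.

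Third, I would lift the ordinal equivalence to a strictly increasing $\phi : \mathbb{R} \to \mathbb{R}$ with $v' = \phi \circ v$. Define $\phi_0 : v(X) \to v'(X)$ by $\phi_0(v(x)) = v'(x)$; this is well defined because $v_x = v_y \Rightarrow v'_x = v'_y$, and strictly increasing by the order equivalence established above. Any strictly increasing function on a subset of $\mathbb{R}$ extends to a strictly increasing function on all of $\mathbb{R}$ by a standard construction (e.g.\ setting $\phi(r) = \sup\{\phi_0(s) : s \in v(X), s \leq r\}$ on the relevant range and patching via linear interpolation across gaps), which yields the desired $\phi$.

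The main obstacle I anticipate is the edge case $\rho(x,y) = 1/2$, which without further discipline could arise either from $v_x = v_y$ or from $v_x \neq v_y$ with $\tau_{xy} = 0$. The stated normalization $\tau(x,y) = 0$ whenever $v(x) = v(y)$ handles one direction; to close the identification, I would combine this with the implicit convention that $\tau_{xy} > 0$ whenever $v_x \neq v_y$ (which is natural given the interpretation of $\tau$ as signal precision), making the two cases mutually exclusive so that $\rho(x,y) = 1/2$ unambiguously encodes indifference. If that convention is unavailable, one would have to use signals in larger menus to triangulate the value ordering, invoking the richer multinomial structure rather than binary choice alone.
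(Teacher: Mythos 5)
Your proposal is correct and takes what is essentially the paper's route: the paper itself remarks immediately after the proposition that ``this identification result relies only on binary choice data,'' and your argument — derive $\rho(x,y)=\Phi(\mathrm{sgn}(v_x-v_y)\sqrt{\tau_{xy}})$ on binary menus, read off $\tau_{xy}=(\Phi^{-1}(\max\{\rho(x,y),1-\rho(x,y)\}))^2$ and the sign of $v_x-v_y$ from whether $\rho(x,y)\gtrless 1/2$, then extend the induced order isomorphism $v(X)\to v'(X)$ to a strictly increasing $\phi$ on $\mathbb{R}$ — is exactly that inversion. The edge case you flag is genuine and worth keeping in your write-up: the stated normalization ($\tau_{xy}=0$ \emph{if} $v_x=v_y$) is only one direction, and without the converse a two-option example with $v_x=v_y$ in one representation and $v'_x\neq v'_y$, $\tau'_{xy}=0$ in the other generates identical choice data while violating ordinal equivalence, so the biconditional convention (or an appeal to richer menus) is needed for the conclusion to hold as stated.
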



\subsection{Decoy Effects}
\label{APP:decoy_effects}

We first state an implication of Proposition \ref{PROP:context} for multiattribute choice. 

\begin{customcor}
\label{COR:context}
Consider options from $X = \mathbb{R}^n$, with $v_x=\sum \beta_kx_k$, and suppose $\tau$ has an $L_1$-complexity representation. Let $v_x,v_y>v_z$.
\begin{enumerate}[label = (\roman*)]
    \item If $v_x=v_y$, then $d_{L1}(x,z)>d_{L1}(y,z)$ implies $\rho(y,x|\set{z})>1/2$.
    \item For any value difference $\Delta = |v_x-v_y|$, there exists $\underline{d} \in \mathbb{R}^+$ such that if $d_{L1}(x,y)>\underline{d}$, there exists $z\in X$ with $d_{L1}(x,z)>d_{L1}(y,z)$ such that $\rho(y,x|\set{z})>1/2$.
\end{enumerate}
\end{customcor} 

(i) says that if $x$ and $y$ are indifferent, then introducing an inferior phantom option $z$ that is more $L_1$-similar to $y$ than $x$ distorts choice in favor of $y$. (ii) says that if $x$ and $y$ are sufficiently $L_1$-dissimilar, there exists a decoy that distorts choice in favor of $y$. Importantly, the $L_1$ complexity measure predicts the comparability of options on the basis of their features, and thus the resulting context effects. As a result, our framework not only accommodates documented decoy and asymmetric dominance effects, as in related models \citep{natenzon_random_2019}; it \textit{predicts} precisely these patterns, as discussed below. 

\begin{ex}\normalfont (Classic decoy effects). Consider a setting where options have two attributes, where $\beta=(1,1)$, and where $\tau$ has an $L_1$ complexity representation. Consider two indifferent choice options $x=(1,2)$, $y=(2,1)$, and consider the effect of including a phantom option on choice shares between $x$ and $y$. 
\\
\\
\textit{Case 1}: $z=(1.8,0.8)$. Since $d_{L1}(x,z)<d_{L1}(y,z)$, we have $\rho(y,x|\set{z})>0.5$. We recover the classic asymmetric dominance effect: the addition of an option that is dominated by the target option $y$ but not by the competitor $x$ distorts choice in favor of $y$.  \\
\\
\textit{Case 2}: $z'=(1.5,1.1)$. Since $d_{L1}(x,z')<d_{L1}(y,z')$, we have $\rho(y,x|\set{z'})>0.5$. Here, the model predicts a ``good deal'' effect---$z'$ is not dominated by either $x$ or $y$, but its proximity to $y$ makes the target option seem like a ``good deal'' relative to $z$, whereas its distance to $x$ prevents the DM from drawing the same inference about $y$.
\\
\\
\textit{Case 3}: $z''=(0.8,0.5)$. Here, $d_{L1}(x,z'')=d_{L1}(y,z'')$, and so Corollary \ref{COR:context} implies that $\rho(y,x|\set{z''})=0.5$. That is, the model predicts that the addition of a mutually dominated option does not affect choice shares.
\end{ex}
\noindent\textbf{\textit{Comparison to other context-dependent models}}.
Though some of the choice patterns above can be explained by existing context-dependent models, our model is distinct in simultaneously explaining all three.
The salience \citep{bordalo_salience_2013} and focusing models \citep{koszegi_model_2013} cannot rationalize the decoy effects in Cases 1 and 2. 
The relative thinking model \citep{bushong_model_2021}, in which the DM weighs a given change along an attribute by less when there is a larger range of values along that attribute, can rationalize the decoy effect in Case 1 as a result of option $z$ extending the range of attribute 2 more than attribute 1, but not Case 2, where $z'$ has no effect on attribute ranges. The pairwise normalization model \citep{landry_pairwise_2021} predicts that $z$ increases the relative of $y$ relative to $x$ whenever $z_1/z_2$ is closer to $y_1/y_2$ than it is to $x_1/x_2$, and so can rationalize the decoy effects in both Cases 1 and 2, but also delivers the prediction that the addition of a mutually dominated option $z''$ will also distort choice in favor of $x$. Furthermore, all of these models are formulated in multiattribute choice, and so cannot easily explain documented asymmetric dominance/decoy effects in lottery \citep{soltani_range-normalization_2012} or intertemporal choice \citep{marini_decoy_2019}. Because our theory of comparison complexity extends to lottery and intertemporal choice, our model can also be applied to study decoy effects in these domains. 


We also make a conceptual distinction from these models. In our model, as in \cite{natenzon_random_2019}, context effects do not arise from a mechanical bias, but instead as a response to imperfect comparability: decoy options distort choice between $x$ and $y$ only when $x$ and $y$ are hard to compare. This is consistent with the fact that the attraction effect is muted when consumers face familiar choice contexts or have clear prior preferences \citep{huber_lets_2014}.

\subsection{Intertemporal Preference Reversals: Details}
\label{APP:theory_time_reversals}
 Consider the intertemporal domain, where $v_x = \sum_t \delta^t m_x(t)$ and $\tau_{xy} = \tau_{xy}^{CPF} = H\left(\frac{|PV(x) - PV(y)|}{d_{CPF}(x,y)}\right)$, with $H(1) = \infty$.
\begin{ex}
\normalfont
\label{EX:pve_reversals}
(Intertemporal reversals). Consider a DM with a monthly $\delta\leq 0.95$.
\begin{align*}
    &x:\quad \$27 \text{ in 750 days}\\
    &y:\quad \$8.25 \text{ in 30 days}
\end{align*}
Since $v_y\geq v_x$, the DM is more likely to choose $y$ over $x$ in direct choice.  However, when the DM values these options, the differential ease of comparing $x$ and $y$ to money today can cause $x$ to be valued higher than $y$, following the same logic as in Example \ref{EX:cequiv_reversals}.

Formally, the DM faces a valuation task $(\upsilon,Z)$, where $\upsilon=(m_{\upsilon},t_{\upsilon})$ is a delayed payment that pays out $m_{\upsilon}>0$ at time $t_{\upsilon}>0$, valued against a price list of immediate payments $Z=\set{z^1,...,z^n}$, where $z^k=(m_k,0)$.  Call $Z$ \textit{adapted} to a delayed payment $\upsilon$ if $m_k-m_{k+1}$ is constant in $k$ and $m_1=m_{\upsilon}$, $m_n=0$; we restrict attention to adapted price lists.  Let $PVE(\upsilon,Z)=1/2[m_{R(\upsilon,Z)-1}+m_{R(\upsilon,Z)}]$ denote the distribution over the DM's present value equivalents obtained from assigning each switching point to a valuation at the midpoint of the adjacent prices. Figure \ref{fig:pve_sims_pr} plots the present value equivalents simulated from our model for delayed payments $\upsilon$ with the same present value as $x$ and $y$, assuming  $\delta = 0.95$. Here, the high-delay option $x$ has a higher valuation than $y$, even though $\rho(y,x)\geq 1/2$.\footnote{Using choice vignettes, \citet{tversky_causes_1990} document similar intertemporal preference reversals.}

 \begin{figure}[t!]
    \centering
    \small
    \begin{subfigure}[t]{0.48\textwidth}
        \includegraphics[width=\linewidth]{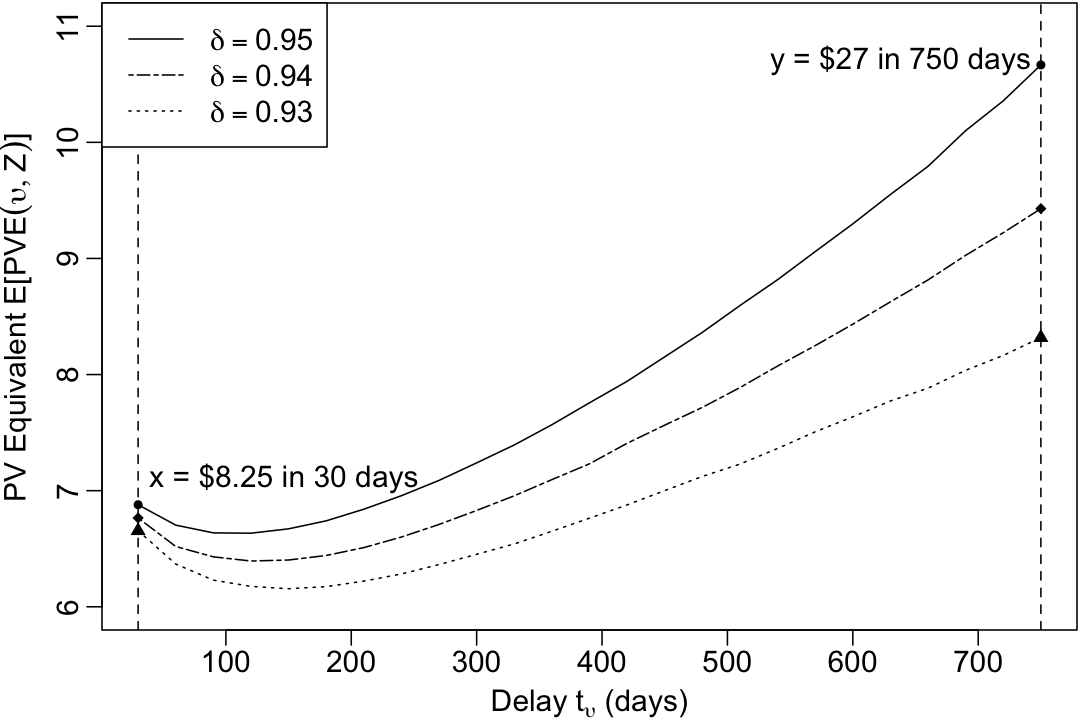}
        \caption{Present Value Equivalents $\mathbb{E}[PVE(\upsilon,Z)]$}
        \label{fig:pve_sims_pr}
    \end{subfigure}
    \hspace{1em}
    \begin{subfigure}[t]{0.48\textwidth}
        \includegraphics[width=\linewidth]{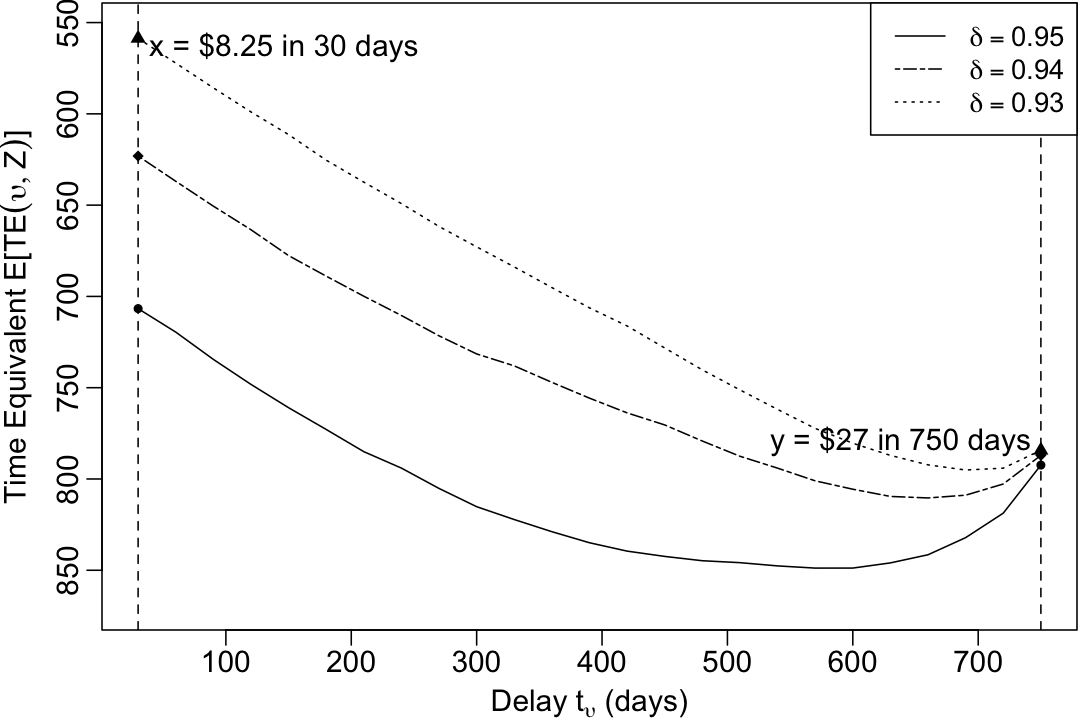}
        \caption{Time Equivalents $\mathbb{E}[T(\upsilon,Z)]$}
        \label{fig:te_sims_pr}
    \end{subfigure}
    \caption{Simulated average present value equivalents and time equivalents for delayed payments $\upsilon=(m_\upsilon, t_\upsilon)$ with present value equal to that of $x=(8.25, 30)$ for $\delta = 0.95$. For PVEs, $Z$ is adapted to $\upsilon$ with $|Z|=15$. For TEs, $Z=\set{z^1,...,z^n}$, where $z^k=(27.5,t_{\upsilon}+t_k)$, for $(t_1,...,t_n)=(0,7,30,60,120,180,240,360,480,600,720,900,1080,1260,1440)$ days. $\tau$ has a CPF-complexity representation with $H(r)=(\Phi^{-1}(G(r)))^2$, for $G$ given by \eqref{eq:G_param} with $\kappa=0,\gamma=0.5$. Priors are $Q\sim U[0,1]$.}
    \label{fig:time_reversal}
\end{figure}

As we saw in lotteries, the direction of these distortions can be reversed by changing the units of valuation. Suppose the DM values options in terms of \textit{time equivalents}: the time $t$ that makes the delayed payment $(\$27.50, t)$ indifferent to $x$ or $y$.  Formally, the DM faces a valuation task $(\upsilon,Z)$, where $\upsilon=(m_{\upsilon},t_{\upsilon})$ is a delayed payment to be valued against a price list $Z=\set{z^1,...,z^n}$, where now each $z^k=(27.5,t_{\upsilon}+t_k)$ is a delayed payment; we restrict attention to time lists with $t_1=0$. We let 
\begin{align*}
 TE(c,Z) & = 
 \begin{cases}
  1/2[t_{R(c,Z)-1}+t_{R(c,Z)}] & R(c,Z)<n+1\\
  t_{n}+1/2 (t_n-t_{n-1}) & R(x,Z)=n+1
 \end{cases}
\end{align*}
denote the distribution over the DM's time equivalents obtained from assigning each switching point to a valuation at the midpoint of the adjacent delays. This change in numeraire flips the relative ease of comparing each option to prices, thereby reversing the direction of the valuation distortions, as Figure \ref{fig:te_sims_pr} illustrates. Specifically, we have $\mathbb{E}[TE
(y,Z)] < \mathbb{E}[TE(x,Z)]$ and so the ``reversal'' relative to direct choice disappears. 
\end{ex}

\subsection{Alternative Explanations for Behavioral Regularities}
\label{APP:alt_models}

Here, we contrast our model with existing formal explanations of the behavioral regularities discussed in the paper. \\

\noindent \textbf{\textit{Decoy and Asymmetric Dominance Effects}}. See the discussion in Appendix \ref{APP:decoy_effects}. \\

\noindent \textbf{\textit{Preference Reversals}}. A classic explanation for lottery preference reversals is preference intransitivities implied by regret theory \citep{loomes_rationale_1983}. In this model, pairwise lottery choices are driven by anticipated regret from comparing the chosen outcome to what would have occurred under the rejected alternative. Under restrictions on the model, this can lead to preference cycles between lotteries and sure amounts, generating reversals. However, under these same restrictions, the model predicts that the same reversals persist when lotteries are valued using probability equivalents as in Example \ref{EX:pequiv_reversals} \citep{butler_imprecision_2007}. As such, the model cannot explain why preference reversals are eliminated by such changes in the numeraire. Moreover, since this model is developed in risky choice, it does not provide an explanation for why similar reversals occur in intertemporal choice. 

Salience theory \citep{bordalo_salience_2012} offers another explanation for lottery preference reversals. In this model, the choice context determines which lottery states are salient, and salient states are overweighted when assessing value. \citet{bordalo_salience_2012} assume that when pricing a lottery in isolation, the choice context is the lottery itself and a sure outcome of 0; this leads to a pricing bias in favor of the riskier lottery which can rationalize reversals between pricing and choice. If we are committed to the paper's assumption that the difference between pricing and choice is that a sure outcome of 0 enters the choice context in pricing, the same pricing bias in favor of the riskier lottery persists regardless of the unit of valuation. Therefore, the model cannot rationalize why preference reversals are eliminated under probability equivalents, and as with regret theory, does not provide a ready explanation for intertemporal reversals. 

\citet{blavatskyy_preference_2009} offers an explanation of preference reversals that, like ours, is rooted in noise. In this model, the primitive is a binary choice rule $\rho(x,y)$; the certainty equivalent of a lottery $x$ is defined as a random variable with the cumulative distribution function $F(w)=\rho(w,x)$, where we use $w$ to denote a degenerate lottery paying $w$ for sure. \citet{blavatskyy_preference_2009} shows that for a range of specifications of $\rho$, due to skewness in the resulting certainty equivalents, the average certainty equivalent for a riskier lottery $x$ can be higher than that of the safer lottery $y$ even if $\rho(x,y)<1/2$. However, this model cannot explain why we observe preference reversals also when comparing the median valuations of $x$ and $y$, as we observe in the data, and which our model predicts.\\

\noindent\textbf{\textit{Probability Weighting, Hyperbolic Discounting, and Instability}}. There are many models of probability weighting, including standard ``preference-based'' accounts that fix these distortions as part of the DM's value function, most notably cumulative prospect theory \citep{kahneman_prospect_1979}, and more recent accounts based on cognitive frictions \citep{steiner_perceiving_2016,vieider_decisions_2024,frydman_source_2025,enke_cognitive_2023}. Neither explanation offers a ready explanation for the reversal of probability weighting in probability equivalents vs. certainty equivalents.

Within the class of prospect-theoretic explanations, \citet{sprenger_endowment_2015} shows that a model of stochastic reference points and loss aversion predicts a difference between probabilities and certainty equivalents: specifically higher risk aversion in probability vs. certainty equivalents. However, this account cannot explain why probability equivalents are simultaneously more risk seeking (for large probabilities) and more risk averse (for small probabilities) than certainty equivalents. \citet{feldman_certain_2024} show that by allowing utility curvature to differ across valuation procedures, a model of stochastic reference points can generate a difference between probability vs. certainty equivalents similar to our model predictions. Instead, our framework explains these patterns as expressions of the \textit{same} underlying preference, which we see as more parsimonious than assuming elicitation-specific preferences. Moreover, as these accounts are formulated in risky choice, they cannot explain why hyperbolic discounting can be similarly reversed. 

Accounts based on cognitive frictions, which model how valuations of binary lotteries are distorted as a Bayesian response to imprecise perception of lottery features, also struggle to explain the sensitivity of behavior to the elicitation context. Assuming that cognitive imprecision distorts the valuation of risky lotteries and certain payments in the same way across elicitation formats, it is irrelevant from the perspective of these models which format is used. In order to accommodate documented reversal of probability weighting, one would need to model cognitive imprecision in a context-dependent way---precisely the approach taken in this paper.\footnote{This literature has studied a different form of context-dependence: dependence on the history of observed lotteries, which influences the DM's prior \citep{frydman_source_2025}. Our approach to modeling context-dependent cognitive noise is  complementary.} 

Similarly, preference-based accounts of hyperbolic discounting, which appeal to non-stationary time preferences, struggle to rationalize why valuations of delayed payments are invariant to front-end delays \citep{cohen_measuring_2020} and why hyperbolic discounting should reverse under alternative elicitation methods.

\subsection{Relationship to Linear Differentiation Model}
\label{APP:L_2}

In this section, we contrast the $L_1$-Complexity model (Definition \ref{def:L1_complexity_rho}) against the Linear Differentiation Model (LDM) proposed in \citet{he_random_2023}. We state the LDM representation below, and discuss two axes along which the models differ. 

\begin{definition} \citep{he_random_2023}.
    \label{def:L2_complexity}
    A binary choice rule $\rho$ has a linear differentiation representation if there exists $\beta\in \mathbb{R}^n$, an $n\times n$ symmetric positive-definite matrix $\Sigma$, and a continuous, strictly increasing function $G$, such that 
    \begin{align*}
        \rho(x,y)=G\left(\frac{\beta'(x-y)}{\sqrt{(x-y)'\Sigma(x-y)}}\right).
    \end{align*}
\end{definition}

\noindent \textit{\textbf{Dominance}}. As Theorem \ref{THM:representation} implies, the $L_1$-Complexity model satisfies a dominance property (M4), wherein if $x$ attribute-wise dominates $y$ (written $x>_{D} y$), the choice probability $\rho(x,y)$ is maximal. Consider a weaker dominance notion, which requires only that if $x>_{D} y$ and $w\not>_{D} z$, then $\rho(x,y)\geq \rho(w,z)$. When there are three or more attributes, the LDM violates this dominance notion. 

\begin{prop}
\label{PROP:L2_dom}
Suppose $\rho$ has a linear differentiation representation and at least 3 attributes are non-null. There exists $x,y,w,z\in\mathbb{R}$, such that $x>_{D}y$ and $w\not>_{D}z$, such that $\rho(x,y)<\rho(w,z)$.
\end{prop}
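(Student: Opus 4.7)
The plan is to reduce to the 3-attribute case and derive a contradiction, exploiting that the positive-definiteness of $\Sigma$ prevents the LDM ``ratio'' from being constant on the boundary of the dominance cone. First I reduce to $n=3$: pick any three non-null indices (they exist by hypothesis) and restrict attention to $u:=x-y$ and $v:=w-z$ supported on these coordinates, so that the relevant parameters are the $3\times 3$ principal submatrix $\tilde\Sigma$ of $\Sigma$ (positive-definite as a principal submatrix of a positive-definite matrix) and the corresponding subvector $\tilde\beta$ (componentwise nonzero). Flipping the signs of any attributes with $\tilde\beta_k<0$---a reparameterization that preserves both the LDM form and the dominance relation---further lets me assume $\tilde\beta>0$, so the dominance cone becomes the positive octant. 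Writing $r(u):=\tilde\beta'u/\sqrt{u'\tilde\Sigma u}$ (which is scale-invariant), the goal reduces to finding $u\in D$ and $v\in N:=S^2\setminus D$ with $r(u)<r(v)$, where $D\subset S^2$ is the closed positive octant on the sphere.

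Suppose, for contradiction, that weak dominance holds: $\inf_{u\in D}r(u)\geq\sup_{v\in N}r(v)$. I will show this forces $r$ to be constant on $\partial D$, the union of three spherical arcs where some $u_k=0$. First, by continuity of $r$ and the fact that every point of $\partial D$ is a limit of points in $N$, $\sup_{v\in N}r(v)\geq\sup_{u\in\partial D}r(u)$. Second, a direct Lagrange-multiplier computation shows the only critical directions of $r$ on $\mathbb{R}^3\setminus\{0\}$ are $\pm\tilde\Sigma^{-1}\tilde\beta$, the global max and min on $S^2$; since the minimum lies in the negative octant (disjoint from $D$), $r|_D$ has no interior minimizer and hence $\inf_{u\in D}r(u)=\inf_{u\in\partial D}r(u)$. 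Chaining the inequalities yields $\sup_{\partial D}r\leq\sup_N r\leq\inf_D r=\inf_{\partial D}r$, so $r$ is constant on $\partial D$.

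To obtain the contradiction, I restrict this constancy to a single arc, e.g.\ $\{u\in S^2:u_3=0,\,u_1,u_2\geq 0\}$. Setting $r(u)^2\equiv c^2$ on this quarter-circle gives the polynomial identity
\begin{equation*}
(\tilde\beta_1 u_1+\tilde\beta_2 u_2)^2=c^2\left(\tilde\Sigma_{11}u_1^2+2\tilde\Sigma_{12}u_1u_2+\tilde\Sigma_{22}u_2^2\right),
\end{equation*}
and matching coefficients of $u_1^2$, $u_1 u_2$, $u_2^2$ yields $\tilde\beta_1^2=c^2\tilde\Sigma_{11}$, $\tilde\beta_1\tilde\beta_2=c^2\tilde\Sigma_{12}$, $\tilde\beta_2^2=c^2\tilde\Sigma_{22}$. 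Multiplying the first and third and comparing with the square of the second forces $\tilde\Sigma_{11}\tilde\Sigma_{22}=\tilde\Sigma_{12}^2$, contradicting $\det\tilde\Sigma_{\{1,2\}}>0$ (which holds because $\tilde\Sigma_{\{1,2\}}$ is a principal submatrix of the positive-definite $\tilde\Sigma$). Hence weak dominance fails; applying the strictly increasing $G$ to the resulting $r(u)<r(v)$ yields $\rho(x,y)<\rho(w,z)$, as desired.

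The main obstacle is the Lagrange calculation pinning down $\inf_D r=\inf_{\partial D} r$: one must verify that the only critical directions of the scale-invariant $r$ are $\pm\tilde\Sigma^{-1}\tilde\beta$, so that $r|_D$'s only possible interior extremum is the global max and cannot be a minimizer. The remaining pieces---continuity of $r$ on $S^2$, the inclusion $\partial D\subset\overline{N}$, the sign-flipping reduction, and the final polynomial coefficient match on the arc---are routine.
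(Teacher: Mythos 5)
Your argument is correct and, as far as I can tell, self-contained: the reduction to a $3\times 3$ principal submatrix, the sign normalization putting the dominance cone at the positive octant, the derivation that a failure of the proposition forces $r$ to be constant on $\partial D$, and the coefficient-matching step contradicting $\det\tilde\Sigma_{\{1,2\}}>0$ all go through. (The paper's main text only exhibits the violation for the special case $\beta=(1,1,1)$, $\Sigma=I$ and defers the general proof to its online appendix, so I am judging your proof on its own terms.) One justification is stated too strongly, though the conclusion you need survives: the global minimizer of $r$ on $S^2$, namely $-\tilde\Sigma^{-1}\tilde\beta$ normalized, need \emph{not} lie in the negative octant even when $\tilde\beta>0$, because $\tilde\Sigma^{-1}$ can have negative off-diagonal entries (take $\tilde\Sigma$ with $\tilde\Sigma_{12}=0.9$ and $\tilde\beta=(1,0.1,1)$, for which $\tilde\Sigma^{-1}\tilde\beta$ has a negative second coordinate). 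What your argument actually requires is only that $-\tilde\Sigma^{-1}\tilde\beta\notin D$, and this is true: since $\tilde\beta'\tilde\Sigma^{-1}\tilde\beta>0$ and $\tilde\beta>0$, at least one coordinate of $\tilde\Sigma^{-1}\tilde\beta$ is strictly positive, so $-\tilde\Sigma^{-1}\tilde\beta$ has a strictly negative coordinate and lies outside the closed positive octant. With that one-line repair (and the trivial observation that if $r$ happened to be constant on $D$ then $\inf_D r=\inf_{\partial D}r$ holds anyway, so the possibility that the interior critical point is the global maximizer causes no trouble), the chain $\sup_{\partial D}r\leq\sup_N r\leq\inf_D r=\inf_{\partial D}r$, the resulting polynomial identity on the arc $\{u_3=0,\ u_1,u_2\geq 0\}$, and the final application of the strictly increasing $G$ are all sound.
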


\noindent\textit{\textbf{Monotonicity}}. Say that a binary choice rule $\rho$ is \textit{weakly monotonic} if $x'>_{D}x$ implies $\rho(x',y)\geq \rho(x,y)$ for any $y\in\mathbb{R}^n$, i.e. improving a choice option along each attribute cannot decrease its probability of being chosen over some other choice option. The $L_1$-complexity model satisfies such a monotonicity property, whereas in general, the LDM violates monotonicity.  

\begin{prop}
\label{PROP:L2_monotonicity}
If $\rho$ has an $L_1$-complexity representation, $\rho$ is weakly monotonic. If instead $\rho$ has a linear differentiation representation, and at least 2 attributes are non-null, then there exists $x,x',y\in\mathbb{R}^n$ with $x'>_{D}x$ such that $\rho(x,y)>\rho(x',y)$.
\end{prop}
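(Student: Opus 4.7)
The plan is to handle the two parts separately: for part (i) I would change variables to the value-transformed differences $r_k = \beta_k(x_k-y_k)$ and show the argument of $G$ is weakly coordinate-monotone; for part (ii) I would construct an explicit local perturbation that decreases the argument of $G$.

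For part (i), observe that under the $L_1$-complexity representation the argument of $G$ becomes $\varphi(r) = \frac{\sum_k r_k}{\sum_k |r_k|}$ (with the convention $\rho(x,y)=1/2$ when $r=0$), and the dominance $x' >_D x$ unpacks (using the interpretation of $>_D$ under $\rho$) to $\beta_k x'_k \geq \beta_k x_k$ for every $k$, equivalently $r'_k \geq r_k$ coordinatewise. Since $G$ is strictly increasing, it suffices to show $\varphi$ is weakly nondecreasing in each coordinate. Fix $j \neq k$, let $S = \sum_{j \neq k} r_j$ and $T = \sum_{j \neq k} |r_j|$, and consider $\varphi$ as a function of $r_k$: for $r_k \geq 0$ it equals $(S + r_k)/(T + r_k)$, which is weakly increasing since $T \geq |S|$ gives $(T+r_k)(S+r_k+\delta) \geq (S+r_k)(T+r_k+\delta)$; for $r_k \leq 0$ it equals $(S + r_k)/(T - r_k)$, whose derivative in $r_k$ has the sign of $T + S \geq 0$. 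Continuity at $r_k=0$ pastes the two cases together, and then applying $G$ finishes part (i). (The case $r = 0$, $r' \neq 0$ with $r'_k \geq 0$ is immediate: $\varphi(r') \geq 0 = \varphi$-value at zero by convention, and $G(0)=1/2$.)

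For part (ii), relabel so that $\beta_1, \beta_2$ are non-null; by flipping the sign of each coordinate we may take $\beta_1, \beta_2 > 0$. Set $y = 0$ and $x = (t, s, 0, \ldots, 0)$ with $s, t > 0$ to be chosen, and consider perturbations $x' = (t + \tau, s, 0, \ldots, 0)$, so that $x' >_D x$ whenever $\tau > 0$. The argument of $G$ is
\begin{align*}
\Psi(t) = \frac{\beta_1 t + \beta_2 s}{\sqrt{\Sigma_{11} t^2 + 2 \Sigma_{12} t s + \Sigma_{22} s^2}},
\end{align*}
and a direct computation shows $\Psi'(t)$ has the sign of $s\bigl[(\beta_1 \Sigma_{12} - \beta_2 \Sigma_{11}) t + (\beta_1 \Sigma_{22} - \beta_2 \Sigma_{12}) s\bigr]$. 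Let $A = \beta_1 \Sigma_{12} - \beta_2 \Sigma_{11}$. If $A < 0$, fix $s > 0$ small and take $t$ large so the bracket is negative and hence $\Psi'(t) < 0$, yielding a counterexample for any small $\tau > 0$. If $A > 0$, perturb attribute 2 instead: the analogous derivative in $s$ (with $t$ fixed and large) is dominated by the negative coefficient $\beta_2 \Sigma_{11} - \beta_1 \Sigma_{12} = -A$. If $A = 0$, then $\Sigma_{12} = \beta_2 \Sigma_{11}/\beta_1$, and positive definiteness $\Sigma_{11}\Sigma_{22} > \Sigma_{12}^2$ yields $\beta_1^2 \Sigma_{22} > \beta_2^2 \Sigma_{11}$; the coefficient of $s$ in the attribute-2 derivative is $\beta_2 \Sigma_{12} - \beta_1 \Sigma_{22} = (\beta_2^2 \Sigma_{11} - \beta_1^2 \Sigma_{22})/\beta_1 < 0$, giving a counterexample via perturbing attribute 2.

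The main obstacle is the knife-edge $A = 0$, where perturbation in attribute 1 actually \emph{increases} $\Psi$; this is precisely where strict positive definiteness of $\Sigma$ is indispensable, forcing the attribute-2 perturbation to strictly decrease $\Psi$. Everything else is elementary: sign-chasing in the derivative and a reduction via relabeling and sign-flipping of coordinates.
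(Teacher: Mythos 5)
Your proof is correct, and it is essentially self-contained where the paper's argument leans on external machinery. For part (i), the paper's footnote and Lemma \ref{LEM:monotonicity} indicate that the intended route is abstract: the $L_1$-representation satisfies Moderate Transitivity and Dominance (by the necessity direction of Theorem \ref{THM:representation}), and Lemma \ref{LEM:monotonicity} shows these two axioms already imply monotonicity with respect to the dominance order. Your direct verification that $r \mapsto \frac{\sum_k r_k}{\sum_k |r_k|}$ is coordinatewise nondecreasing (splitting at $r_k=0$ and using $T \ge |S|$) reaches the same conclusion by elementary calculus; it is longer but makes the mechanism visible and does not require first establishing that the representation satisfies the axioms. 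For part (ii), the paper signals that its proof goes through Proposition 1 of \cite{he_random_2023}: the maximizers of $z \mapsto \beta' z/\sqrt{z'\Sigma z}$ lie on the one-dimensional ray spanned by $\Sigma^{-1}\beta$, so placing $x-y$ on that ray and perturbing one non-null coordinate off the ray (possible since at least two attributes are non-null, so at most one coordinate direction can be parallel to $\Sigma^{-1}\beta$) immediately yields $\rho(x',y) < \rho(x,y)$. Your argument instead differentiates $\Psi$ explicitly in the two-dimensional slice, splits on the sign of $A = \beta_1\Sigma_{12}-\beta_2\Sigma_{11}$, and uses positive definiteness of the principal $2\times 2$ minor only in the knife-edge case $A=0$; I checked the sign computation for $\Psi'(t)$ and the three cases, and they are right (including that the relabeling/sign-flip preserves symmetry and positive definiteness of $\Sigma$ and the dominance order). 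The He--Natenzon route is shorter and more conceptual; yours avoids citing their result and exhibits concrete numerical counterexamples, at the cost of some case analysis. One cosmetic point: in part (i) you write ``Fix $j \neq k$'' while also using $j$ as the summation index in $S$ and $T$; and you should note explicitly that the degenerate all-zero intermediate point in the coordinatewise path corresponds to $x=y$, which is outside the domain $\mathcal{D}$, so the pasting argument never actually needs the $\varphi(0)$ convention except in the trivial case you already flagged.
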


The intuition for the monotonicity violations produced by the LDM stem stems from a property of the model formalized in Proposition 1 from \citet{he_random_2023}, which states that any $x$ that solves $\max_{x}\rho(x,y)$ lies along the same one-dimensional subset of $\mathbb{R}^n$ containing $y$. An immediate implication of this property is that for any $x'$ that dominates $x$, we will have $\rho(x',y)<\rho(x,y)$ so long as $x'$ does not also lie in that subset. Finally, notice that this property implies that LDM and $L_1$ complexity are disjoint, as under the latter, $\arg\max_{x}\rho(x,y)$ is a multidimensional subset of $\mathbb{R}^n$.

\section{Appendix: Proofs of Main Text Results}
\label{APP:proofs}

\subsection{Characterization of $L_1$-Complexity}
Begin with some basic observations. Let $X$ be a space of options, and let $\mathcal{D}=\set{(x,y)\in X\times X:x\neq y}$. Say  $\rho:\mathcal{D}\to [0,1]$ is a \textit{binary choice rule} on $X$ if $\rho(x,y)=1-\rho(y,x)$. 

Call a (complete) binary relation $\succeq$ on $X$ the \textit{stochastic order} induced by a binary choice rule $\rho$ if for all $x\neq y$, $x\succeq y$ if $\rho(x,y)\geq 1/2$, and for all $x\in X$, $x\succeq x$. Say that a binary choice rule $\rho$ satisfies \textit{moderate transitivity} if for $\rho(x,y),\rho(y,z)\geq 1/2$, then $\rho(x,z)> \min\set{\rho(x,y),\rho(y,z)}$ or $\rho(x,z)=\rho(x,y)=\rho(y,z)$. Say that a binary choice rule $\rho$ satisfies \textit{weak transitivity} if for $\rho(x,y),\rho(y,z)\geq 1/2$, $\rho(x,y)\geq 1/2$. Consider a partial order $\geq_X$ on $X$. Say that $\rho$ satisfies \textit{monotonicity} with respect to $\geq_X$ if $x'\geq_X x$ implies $\rho(x',y)\geq \rho(x,y)$, strict whenever $x\not\geq_X x'$, $x\not\geq_X y$ and $y\not\geq_X x$. Say that $\rho$ satisfies \textit{dominance} with respect to $\geq_X$ if $x\geq_X y$ implies $\rho(x,y)\geq \rho(w,z)$ for all $w,z\in X$, where the inequality is strict if $w\not\geq_X z$. 

\begin{lemma}
    \label{LEM:monotonicity}
   If $\rho$ defined on $X$ satisfies moderate transitivity and dominance with respect to a partial order $\geq_X$, then it satisfies monotonicity with respect to $\geq_X$.
\end{lemma}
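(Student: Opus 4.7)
The plan is to reduce monotonicity to two applications of moderate transitivity, with dominance controlling the minimum term. From $x' \geq_X x$, dominance immediately gives both $\rho(x',x) \geq 1/2$ and, for any $w,z$, $\rho(x',x) \geq \rho(w,z)$, with strict inequality when $(w,z)$ is not a dominance pair.

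For the weak inequality I would split on whether $\rho(x,y) \geq 1/2$. If so, moderate transitivity applied to $x' \succeq x \succeq y$ gives $\rho(x',y) \geq \min\{\rho(x',x), \rho(x,y)\}$, and dominance forces this minimum to equal $\rho(x,y)$. If instead $\rho(x,y) < 1/2$, I would argue by contradiction: assuming $\rho(x',y) < \rho(x,y)$ yields $\rho(y,x') > \rho(y,x) > 1/2$, and moderate transitivity on $y \succeq x' \succeq x$ leaves two branches. The strict branch $\rho(y,x) > \min\{\rho(y,x'), \rho(x',x)\}$ is impossible because $\rho(x',x) \geq \rho(y,x)$ by dominance and $\rho(y,x') > \rho(y,x)$ by assumption; the equality branch forces $\rho(y,x) = \rho(y,x')$, again contradicting the assumption.

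For the strict form, the added hypotheses $x \not\geq_X x'$, $x \not\geq_X y$, $y \not\geq_X x$ ensure that neither $(x,y)$ nor $(y,x)$ is a dominance pair. Dominance then sharpens to $\rho(x',x) > \rho(x,y)$ and $\rho(x',x) > \rho(y,x)$, which rule out the ``all-equal'' branch of moderate transitivity in each case above (such equality would force $\rho(x',x)$ to coincide with either $\rho(x,y)$ or $\rho(y,x)$). Hence the strict branch of moderate transitivity applies in each case, yielding $\rho(x',y) > \rho(x,y)$.

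The main obstacle is not computational but bookkeeping: tracking which incomparability hypothesis is needed to invoke the strict clause of dominance at each step, and checking that the ``all-equal'' branch of moderate transitivity is indeed incompatible with the strengthened dominance inequalities. No genuinely hard calculations arise.
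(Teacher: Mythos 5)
Your proof is correct and uses essentially the same machinery as the paper's: moderate transitivity applied to the relevant triple, with dominance pinning the minimum term at $\rho(x,y)$ (or $\rho(y,x)$) and, for strictness, the strict clause of dominance killing the all-equal branch. Your two-case split on whether $\rho(x,y)\geq 1/2$ is a slightly more economical organization than the paper's (which first disposes of $x'=x$ and $x\geq_X y$, then splits into three cases according to where $y$ sits in the stochastic order); the only thing you leave implicit is the degenerate case $x'=x$, where $\rho(x',x)$ is undefined and the conclusion is trivial by antisymmetry of $\geq_X$.
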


\begin{proof}
Take any options $x,y$, and suppose $x'\geq_X x$. If $x\geq_X x'$, then $x'=x$ since $\geq_X$ is a partial order and is therefore antisymmetric, and we are done. Now consider the case where $x\not\geq_X x'$. Note that if $x\geq_X y$, since $\geq_X$ is transitive we also have $x'\geq_X y$, and so Dominance implies that $\rho(x',y)\geq \rho(x,y)$ and we are done. 

Now consider the case where $x\not\geq_X y$. Let $\succeq$ denote the stochastic order induced by $\rho$; since $\rho$ satisfies MST, $\succeq$ is complete and transitive. By dominance, we have $\rho(x',x)>\rho(x,x')\implies \rho(x',x)>1/2$ and so $x'\succ x$. There are three cases: 

\textit{Case 1}: $x'\succeq x\succeq y$. By moderate transitivity, $\rho(x',y)> \min\set{\rho(x',x),\rho(x,y)}$ or $\rho(x',y)=\rho(x',x)=\rho(x,y)$. But since $\rho(x',x)> \rho(x,y)$ by dominance, it must be the case that $\rho(x',y)> \rho(x,y)$. \textit{Case 2}: $x'\succeq y\succeq x$. By definition of $\succeq$, $\rho(x',y)\geq 1/2\geq \rho(x,y)$. Also, since $x'\succ x$, we must have one of $x'\succ y$ or $y\succ x$, and so by definition of $\succeq$ we must have one of $\rho(x',y)> 1/2$ or $1/2> \rho(x,y)$, which implies $\rho(x',y)>\rho(x,y)$. \textit{Case 3}: $y\succeq x'\succeq x$. Toward a contradiction, suppose $\rho(y,x')> \rho(y,x)$. By moderate transitivity, $\rho(y,x)>\min\set{\rho(y,x'),\rho(x',x)}$ which implies $\rho(y,x)>\rho(x',x)$, which contradicts dominance and so $\rho(y,x')\leq \rho(y,x)\implies \rho(x',y)\geq \rho(x,y)$.

All that remains is to show that $\rho(x',y)> \rho(x,y)$ when $x\not\geq_X y$ and $y\not\geq_X x$. We have already shown this in Cases 1 and 2; all that remains is to show that the inequality is strict in Case 3. Suppose $y\succeq x'\succeq x$. Toward a contradiction, suppose that $\rho(y,x')\geq \rho(y,x)$. Moderate transitivity then implies that either (i) $\rho(y,x)>\min\set{\rho(y,x'),\rho(x',x)}$ or (ii) $\rho(y,x)=\rho(y,x')=\rho(x',x)$. As we saw above, it cannot be the case that (i) holds. If (ii) holds, then dominance implies that $y\geq x$, a contradiction. 
\end{proof}

For the following result, we consider the case where $X$ is a convex set. Say $\rho$ is \textit{linear} if $\rho(x,y)=\rho(\lambda x+(1-\lambda)z,\lambda y+(1-\lambda)z)$ for all $x,y,z\in X$, $\lambda\in (0,1)$. Say $\rho$ is \textit{superadditive} if for any $x,y,x',y'$ with $\rho(x,y),\rho(x',y')\geq 1/2$, for any $\lambda\in [0,1]$ we have $\rho(\lambda x+(1-\lambda)x',\lambda y+(1-\lambda)y')\geq \min\set{\rho(x,y),\rho(x',y')}$. 

\begin{lemma}\label{LEM:superadditivity}
Let $X$ be a vector space. If $\rho$ defined on $X$ satisfies moderate transitivity and linearity, then $\rho$ is superadditive.
\end{lemma}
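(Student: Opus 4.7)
The strategy is to write the mixed options as the endpoints of a two-step chain and then appeal to moderate transitivity. Fix $x,y,x',y'$ with $\rho(x,y),\rho(x',y')\geq 1/2$ and $\lambda\in[0,1]$. Define
\begin{align*}
a &= \lambda x + (1-\lambda)x', \\
c &= \lambda y + (1-\lambda)x', \\
b &= \lambda y + (1-\lambda)y'.
\end{align*}
The key observation is that applying Linearity to the pair $(x,y)$ with mixing element $z=x'$ at weight $\lambda$ yields $\rho(x,y)=\rho(a,c)$, while applying Linearity to the pair $(x',y')$ with mixing element $z=y$ at weight $1-\lambda$ yields $\rho(x',y')=\rho(c,b)$. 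Thus we have a chain $a\succeq c\succeq b$ (in the stochastic order induced by $\rho$) with the two pairwise probabilities each at least $1/2$ and each equal to the corresponding hypothesis probability.

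\textbf{Finishing step.} With $\rho(a,c),\rho(c,b)\geq 1/2$ and $a,b,c$ distinct, Moderate Transitivity gives either $\rho(a,b)>\min\{\rho(a,c),\rho(c,b)\}$ or $\rho(a,b)=\rho(a,c)=\rho(c,b)$; in either case
\[
\rho(a,b)\ \geq\ \min\{\rho(a,c),\rho(c,b)\}\ =\ \min\{\rho(x,y),\rho(x',y')\},
\]
which is the superadditivity conclusion.

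\textbf{Edge cases.} The nontrivial content of the argument is the distinctness needed to invoke Linearity and Moderate Transitivity. If $\lambda\in\{0,1\}$ the claim is immediate since $(a,b)$ equals $(x',y')$ or $(x,y)$. For $\lambda\in(0,1)$, the relations $a=c$, $c=b$, $a=b$ reduce (after subtraction) to $\lambda(x-y)=0$, $(1-\lambda)(x'-y')=0$, and $\lambda(x-y)+(1-\lambda)(x'-y')=0$ respectively; the first two are ruled out because $\rho(x,y)$ and $\rho(x',y')$ require $x\neq y$ and $x'\neq y'$, and if the third holds then $a=b$ puts us outside the domain of $\rho$ and the conclusion is vacuous (or, under the convention $\rho(x,x)=1/2$, still holds since both hypothesis probabilities are $\geq 1/2$).

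\textbf{Main obstacle.} There is essentially no conceptual difficulty; the only thing to get right is the bookkeeping of which ``$z$'' and which mixing weight to plug into Linearity so that the two endpoints of the chain match $a$ and $b$ exactly. The choice $c=\lambda y+(1-\lambda)x'$ is the unique intermediate point that allows Linearity to identify both $\rho(a,c)$ and $\rho(c,b)$ with the given probabilities, after which Moderate Transitivity delivers the bound in one stroke.
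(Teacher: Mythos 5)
Your proof is correct and follows essentially the same route as the paper's: both arguments build a two-step chain through the intermediate point $\lambda y+(1-\lambda)x'$, identify the two links with $\rho(x,y)$ and $\rho(x',y')$ via Linearity, and close with Moderate Transitivity (the paper merely translates the chain so that the intermediate point is the origin). Your handling of the degenerate cases $\lambda\in\{0,1\}$ and $a=b$ is a small but welcome addition that the paper leaves implicit.
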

\begin{proof}
Since $X$ is a vector space, linearity implies $\rho(x,y)=\rho(C x,C y)$ and $\rho(x,y)=\rho(x-z,y-z)$ for any for any $C>0,x,y,z\in X$. Consider $x,y,x',y'$ with $\rho(x,y),\rho(x',y')\geq 1/2$, and $\lambda\in [0,1]$. By linearity, $\rho(\lambda(x-y),0)=\rho(x,y)\geq 1/2$, $\rho(0,-(1-\lambda)(x'-y'))=\rho(x',y')\geq 1/2$, and $\rho(\lambda x+(1-\lambda)x',\lambda y +(1-\lambda)y')=\rho(\lambda(x-y),-(1-\lambda)(x'-y'))$. This, in conjunction with moderate transitivity, implies that
\begin{align*}
\rho(\lambda x+(1-\lambda)x',\lambda y +(1-\lambda)y')&=\rho(\lambda(x-y),-(1-\lambda)(x'-y'))\\
&\geq \min\set{\rho(\lambda(x-y),0),\rho(0,-(1-\lambda)(x'-y'))}\\
&=\min\set{\rho(x,y),\rho(x',y')}
\end{align*}
\end{proof}
\subsubsection*{Proof of Theorem \ref{THM:representation}.}
Necessity of the axioms is immediate from the definition. We now show sufficiency. 

Assume that M1--M5 holds. Let $\succeq$ denote the stochastic order on $\mathbb{R}^n$  induced by $\rho$. By weak transitivity, $\succeq$ is transitive. 
Since $\rho$ satisfies Continuity and Linearity, $\succeq$ satisfies axioms D1--D3 of Theorem 9.1  of \citet{gilboa_theory_2009}. Invoking an intermediate step in the proof of this theorem, we conclude that there exists weights $\beta\in \mathbb{R}^n$ such that $U(x)=\sum_k\beta_kx_k$ represents $\succeq$. Since all attributes are non-null, we have that $\beta_k\neq 0$ for all $k$. For the remainder of the proof, we henceforth identify each option $x$ with its weighted attribute values, so that $U(x)=\sum_k x_k$. Since $\rho$ satisfies Dominance and MST, Lemma \ref{LEM:monotonicity} implies that $\rho$ satisfies monotonicity with respect to the component-wise dominance relation on $\mathbb{R}^n$.  

For $z\in \mathbb{R}^n$, Let $d^{+}(z)=\sum_{k:z_k\geq 0}z_k$ and $d^{-}(x)=\sum_{k:z_k< 0}|z_k|$ denote the summed advantages and disadvantages in the comparison between $z$ and $0$. Say that $z$ has \textit{no dominance relationship} if $d^{+}(z),d^{-}(z)>0$.\\
\\
\textbf{Claim 1.} For any $z\in\mathbb{R}^n$ satisfying $\sum_k z_k\geq 0$, $\rho(z,0)=\rho(d^+(z)e_1-d^-(z)e_2,0)$.

\begin{proof}
For $i,j\in \set{1,...,n}$, $i\neq j$, define $z^{ij}\in\mathbb{R}^n$ satisfying 
\begin{align*}
    z^{ij}_k=
    \begin{cases}
    d^{+}(z) & k=i\\
    -d^{-}(z) & k=j\\
    0 & \text{otherwise}
    \end{cases}
\end{align*}

Note that because we have normalized utility weights 1, for all $i\neq j$, $l\neq m$, we have $U(z^{ij})=d^{+}(z)-d^{-}(z)=U(z^{lm})$, and so $z^{ij}\sim z^{lm}$. We will first show that $\rho(z^{ij},0)=\rho(z^{lm},0)$ for all $i\neq j,l\neq m$. It is sufficient to show that for all $i,j$ $\rho(z^{ij},0)=\rho(z^{12},0)$. There are two cases to consider:\\ 
\\
\textbf{Case 1}: $j>i$. Since $z^{1j}\sim z^{ij}$, and since $z^{1j}_i=0$, $z^{1j}_k=z^{ij}_k$ for all $k\neq i,1$, Simplification implies that $\rho(z^{1j},0)\geq \rho(z^{ij},0)$. Also, since $z^{ij}_1=0$, and $z^{ij}_k=z^{1j}_k$ for all $k\neq 1,i$, Simplification implies $\rho(z^{1j},0)\leq \rho(z^{ij},0)$, and so $\rho(z^{1j},0)=\rho(z^{ij},0)$. A analogous argument yields $\rho(z^{12},0)=\rho(z^{1j},0)$, and so $\rho(z^{ij},0)=\rho(z^{12},0)$. \\
\\
\textbf{Case 2}: $j<i$. By analogous arguments as above, we have $\rho(z^{ij},0)=\rho(z^{nj},0)$, $\rho(z^{nj},0)=\rho(z^{n2},0)$ and $\rho(z^{n2},0)=\rho(z^{12},0)$, and so $\rho(z^{ij},0)=\rho(z^{12},0)$ as desired.  \\

Let $K^{+}=\set{i\in \set{1,2,...,n}: z_i\geq 0}$ and $K^{-}=\set{i\in \set{1,2,...,n}: z_i< 0}$. Defining 
$\lambda_i=\frac{z_i}{\sum_{k\in K^{+}}{z_k}}$ for $i\in K^{+}$, and $\gamma_j=\frac{z_j}{\sum_{k\in K^{-}}z_k}$ for $j\in K^{-}$, note that $z=\sum_{i\in K^{+}}\sum_{j\in K^{-}}\lambda_i\gamma_jz^{ij}$, and so $z$ can be expressed as a mixture of $z^{ij}$'s. Since $\rho$ satisfies superadditivity by Lemma \ref{LEM:superadditivity}, by inductive application of superadditivity, we have $\rho(z,0)\geq \rho(z^{ij},0)$ for all $i\neq j$, which in turn implies $\rho(z,0)\geq \rho(z^{12},0)$. 

Note that by repeated application of Simplification, we have $\rho(z,0)\leq \rho(z^{ij},0)$, for some $i$ where $z_i\geq 0$, and some $j$ where $z_j\leq 0$. Since $\rho(z^{ij},0)=\rho(z^{12},0)$, we have $\rho(z,0)\leq \rho(z^{12},0)$, and so $\rho(z,0)=\rho(z^{12},0)$ as desired. 
\end{proof}
\noindent\textbf{Claim 2}. For $z$ with $\sum_{k}z_k\geq 0$, $\rho(z,0)=\tilde{G}\left(\frac{d^+(z)-d^-(z)}{d^+(z)+d^-(z)}\right)$ for some strictly 
increasing, continuous $\tilde{G}:[0,1]\to \mathbb{R}$. 

\begin{proof}
Fix $z$ with $\sum_{k}z_k\geq 0$, and suppose $z$ has no dominance relationship, that is $d^{+}(z)>0$, $d^{-}(z)>0$. Claim 1 implies $\rho(z,0)=\rho(d^{+}(z)e_1-d^{-}e_2,0)$. Define $F:[1,\infty)\to [1/2,1)$ by $F(t)=\rho(te_1-e_2,0)$; by monotonicity of $\rho$, $F$ is strictly increasing. Linearity implies $\rho(d^{+}(z)e_1-d^{-}e_2,0)=\rho((d^{+}(z)/d^{-}(z))e_1-e_2,0)=F(d^{+}(z)/d^{-}(z))$.

Let $\varphi(z)=\frac{z-1}{z+1}$;  and define $\tilde{G}:[0,1)\to\mathbb{R}$ where $\tilde{G}(z)=F(\varphi^{-1}(z))$; since $\varphi$ and $F$ are strictly increasing, $\tilde{G}$ is strictly increasing. By construction, we have $F(z)=\tilde{G}\left(\frac{z-1}{z+1}\right)$, and so $\rho(z,0)=\rho(d^{+}(z)e_1-d^{-}e_2,0)=\tilde{G}\left(\frac{d^+(z)-d^-(z)}{d^+(z)+d^-(z)}\right)$. Since $\rho$ is continuous, $\tilde{G}$ is continuous on its domain $[0,1)$, and in particular is uniformly continuous since it is increasing and bounded. Take the continuous extension of $\tilde{G}$ to $[0,1]$.  

Now consider the case where $z$ has a dominance relationship; that is $d^{+}(z)>0$, $d^{-}(z)=0$. By Dominance, $\rho(z,0)=\rho(d^{+}(z)e_1-d^{-}e_2,0)$ takes on some constant value $q$ such that $q>\rho(z',0)$ for all $z'$ without a dominance relationship, which implies that $q>\tilde{G}(t)$ for all $t\in[0,1)$. Since $\rho$ is continuous, it must be the case that $q=\tilde{G}(1)$.
\end{proof}

Now, let $G:[-1,1]\to \mathbb{R}$ be the symmetric extension of $\tilde{G}$ satisfying $G(z)=\tilde{G}(z)$ for $z\geq 0$, and $G(z)=1-\tilde{G}(z)$ otherwise.\\

\noindent\textbf{Claim 3}. For any $z$, $\rho(z,0)=G\left(\frac{d^{+}(z)-d^{-}(z)}{d^{+}(z)+d^{-}(z)}\right)$.
\begin{proof}
Claim 1 implies that $\rho(z,0)=G\left(\frac{d^{+}(z)-d^{-}(z)}{d^{+}(z)+d^{-}(z)}\right)$ whenever $\sum_{k}z_k\geq 0$. Now consider the case where  $\sum_{k}z_k< 0$. By symmetry and linearity, $\rho(z,0)=1-\rho(-z,0)$, and Claim 2 implies $1-\rho(-z,0)=1-\tilde{G}\left(\frac{d^{-}(z)-d^{+}(z)}{d^{+}(z)+d^{-}(z)}\right)$. $1-\tilde{G}\left(\frac{d^{-}(z)-d^{+}(z)}{d^{+}(z)+d^{-}(z)}\right)=G\left(\frac{d^{+}(z)-d^{-}(z)}{d^{+}(z)+d^{-}(z)}\right)$ by construction, and so the desired result obtains. 
\end{proof}

Take any $x,y$, and let $z=x-y$. By linearity and Claim 3, we have $\rho(x,y)\rho(z,0)=G\left(\frac{d^{+}(z)-d^{-}(z)}{d^{+}(z)+d^{-}(z)}\right)$. By construction, $G\left(\frac{d^{+}(z)-d^{-}(z)}{d^{+}(z)+d^{-}(z)}\right)=G\left(\frac{\sum_{k}z_k}{\sum_{k}|z_k|}\right)$, and so $\rho(x,y)=G\left(\frac{U(x)-U(y)}{d_{L1}(x,y)}\right)$ as desired.

To show uniqueness, suppose $(G,\beta)$ and $(G',\beta')$ both represent $\rho$. Define the stochastic preference relation $\succeq$ as before. Since $G$ and $G'$ are both strictly increasing and symmetric around 0, $U(x)=\sum_{k}\beta_kx_k$ and $U'(x)=\sum_{k}\beta'_kx_k$ both represent $\succeq$, and so there exists $C>0$ such that $\beta'_k=C\beta_k$. This in turn implies that for all $z\in\mathbb{R}^n$, we have $G\left(\frac{\sum_{k}\beta_kz_k}{\sum_{k}|\beta_kz_k|}\right)=G'\left(\frac{\sum_{k}\beta'_kz_k}{\sum_{k}|\beta'_kz_k|}\right)\\
    =G'\left(\frac{\sum_{k}\beta_kz_k}{\sum_{k}|\beta_kz_k|}\right)$. Let $z=\alpha/\beta_1 e_1+\gamma/\beta_2e_2$. Note that for any $r\in [-1,1]$, there exists $\alpha,\gamma$ such that $\frac{\sum_{k}\beta_kz_k}{\sum_{k}|\beta_kz_k|}=\frac{\alpha-\gamma}{|\alpha+\gamma|}=r$, and so $G'(r)=G(r)$ for all $r\in[-1,1]$.\\
\rightline{$\square$}

\subsection{Multinomial Choice Results}
\label{APP:proofs_multinomial}
We prove our results for a more general signal structure, where $s_{xy}=\text{sgn}(v_x-v_y)+\frac{1}{\sqrt{\tau_{xy}}}e_{xy}$ where the $e_{xy}$ are distributed according to a continuous density $g$ that is symmetric around 0 and satisfies the monotone likelihood ratio property: that is $\frac{\partial }{\partial x}\frac{g(x-t)}{g(x)}>0$ for all $t>0$. 

\begin{lemma}
\label{LEM:MLRP_prop}
Let $g$ be a continuous density that is symmetric around $0$ and satisfies the monotone likelihood ratio property.
\begin{enumerate}
    \item $g'(x-t)g(x)-g(x-t)g'(x)> 0$ for all $t>0$, $x$
    \item $g$ is unimodal; that is $g'(x)=-g'(-x)\leq 0$ for all $x> 0$
    \item $g(t-x)> g(-t-x)$ for any $t,x> 0$
\end{enumerate}
\end{lemma}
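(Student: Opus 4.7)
The three properties are tightly linked, so I plan to prove them in the stated order, with each building on the previous.

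\textbf{Property 1} is essentially a restatement of MLRP. The MLRP condition says that for any $t>0$, the ratio $r(x) := g(x-t)/g(x)$ is strictly increasing in $x$ (wherever it is defined). I will compute $r'(x)$ via the quotient rule, getting
\[
r'(x) \;=\; \frac{g'(x-t)g(x) - g(x-t)g'(x)}{g(x)^2}.
\]
Since $g(x)^2 > 0$ on the support and $r'(x) > 0$ by MLRP, the numerator must be strictly positive, which is exactly Property 1.

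\textbf{Property 2} has two parts. The identity $g'(x) = -g'(-x)$ is immediate by differentiating the symmetry relation $g(x) = g(-x)$. To obtain $g'(x) \le 0$ for $x > 0$, I will specialize Property 1 by setting $t = 2x$ (which is positive when $x > 0$), so that $x - t = -x$. Property 1 then reads
\[
g'(-x)g(x) - g(-x)g'(x) > 0.
\]
Using the two symmetry identities $g(-x) = g(x)$ and $g'(-x) = -g'(x)$, this collapses to $-2 g(x) g'(x) > 0$. Since $g(x) > 0$ on the support, we conclude $g'(x) < 0$, which is stronger than the claimed weak inequality. Consequently $g$ is strictly decreasing on $(0,\infty)$ and, by symmetry, strictly increasing on $(-\infty,0)$, i.e. unimodal with peak at $0$.

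\textbf{Property 3} is then a straightforward consequence of Property 2 together with symmetry. Using $g(t-x) = g(x-t) = g(|t-x|)$ and $g(-t-x) = g(t+x)$, the desired inequality becomes $g(|t-x|) > g(t+x)$. Since $t,x > 0$, both arguments lie in $[0,\infty)$, and $t+x > |t-x|$. Unimodality of $g$ (Property 2) on the nonnegative half-line yields the strict inequality $g(|t-x|) > g(t+x)$.

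The only delicate point is the implicit assumption that $g$ is strictly positive and smooth on the relevant region, so that the MLRP derivative calculation and the step $-2g(x)g'(x) > 0 \Rightarrow g'(x)<0$ are both valid. For the Gaussian signal distribution used in the main text this is automatic; more generally I would state Lemma~\ref{LEM:MLRP_prop} as applying wherever $g$ is positive and differentiable, which is the only regime relevant for the posterior-mean computations in Propositions~\ref{PROP:context} and \ref{PROP:valuation}.
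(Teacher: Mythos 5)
Your proof is correct, and for Properties 2 and 3 it takes a genuinely different route from the paper's. Property 1 is handled identically (quotient rule on the MLRP ratio). For Property 2, the paper argues by contradiction: assuming $g'(x)>0$ at some $x>0$, it propagates the sign of $g'$ via Property 1 and symmetry until the density fails to be integrable. Your substitution $t=2x$ in Property 1, which collapses to $-2g(x)g'(x)>0$ after applying the symmetry identities, is more direct and yields the strict inequality $g'(x)<0$ rather than the weak one claimed in the lemma; it does rely on $g$ being positive and differentiable at $\pm x$, but that is already implicit in the paper's formulation of MLRP as a derivative condition on the ratio $g(x-t)/g(x)$. For Property 3, the paper works directly with MLRP: it observes that $g(t-x)/g(-t-x)$ equals $1$ at $x=0$ by symmetry and exceeds $1$ for $x>0$ by monotonicity of the likelihood ratio (with shift $2t$), whereas you route the argument through unimodality, rewriting the claim as $g(|t-x|)>g(t+x)$ and using strict monotonicity of $g$ on $(0,\infty)$. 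Both are valid; the paper's version of Property 3 is independent of Property 2, while yours chains the three properties together, which is tidier but means the strictness of the conclusion in Property 3 depends on the strict decrease you established in Property 2 (including the boundary case $t=x$, where continuity of $g$ at $0$ closes the gap, as you note).
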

\begin{proof}
    1) follows directly from the definition of MLRP. To see 2), towards a contradiction suppose $g'(x)> 0$ for some $x> 0$. Then for any $t>0$, 1) implies $g'(x-t)\geq \frac{g(x-t)g'(x)}{g(x)}\geq 0$. So for any $y>x$, $g'(y)>0$. Symmetry implies that for any $y<-x$, $g'(y)<0$, and so $g$ is not integrable, a contradiction.  To see 3), note that by symmetry, $\frac{g(t-x)}{g(-t-x)}=1$ for $x=0$. MLRP of $g$ implies that $\frac{g(t-x)}{g(-t-x)}> 1$ for all $x> 0$ as desired. 
\end{proof}

The following observations pertain to a finite set of options $A$. Enumerate $A$ by $1,2,...,N$ and let $s=(s_{ij})_{i<j}$ collect all pairwise signals in $A$. Let $X_{(k)}^N$ denote the $k$th order statistic among $N$ draws from the prior distribution $q$. Let $V_{(k)}^N=\mathbb{E}[X_{(N+1-k)}]$, that is, $V_{(k)}^N$ gives the expected value of an option if it is ranked $k$th. Let $\pi:A\to A$ denote a permutation function; let $\Pi$ denote the set of permuation functions on $A$. With some abuse of notation, associate each $\pi$ with the event that the $v_i$'s are ordered according to $\pi$: that is $\pi(i)=n$ means that option $i$ is ranked $n$th in the ordering. The posterior expected value of an option $i$ given signal $s$ is then given by $\mathbb{E}[v_i|s]=\sum_{n=1}^NV_{(n)}^N \cdot Pr(\pi(i)=n|s)$, where $Pr(\pi(i)=n|s)\propto \sum_{\pi\in \Pi:\pi(i)=n} \prod_{k =1}^N\prod_{j<k} g\left(\sqrt{\tau_{jk}}(s_{jk}-\text{sgn}(\pi(k)-\pi(j))\right)$.

\begin{lemma}
\label{LEM:sigorder_monotonicity}
Take any permutation $\pi$ satisfying $\pi(i)<\pi(j)$. Then $\frac{\partial}{\partial e_{ij}}Pr(\pi|s)> 0$.
\end{lemma}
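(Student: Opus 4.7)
The natural approach is to differentiate the normalized likelihood directly and reduce the claim to a monotone-likelihood-ratio inequality. Since $s_{ij}=\text{sgn}(v_i-v_j)+\tau_{ij}^{-1/2}e_{ij}$, the chain rule gives $\partial_{e_{ij}}=\tau_{ij}^{-1/2}\,\partial_{s_{ij}}$, so it suffices to show $\partial_{s_{ij}}\Pr(\pi\mid s)>0$. Writing $\sigma_{jk}(\pi')\equiv\text{sgn}(\pi'(k)-\pi'(j))$, set
\[
 f(\pi';s)=\prod_{k=1}^{N}\prod_{j<k}g\!\left(\sqrt{\tau_{jk}}\bigl(s_{jk}-\sigma_{jk}(\pi')\bigr)\right),\qquad N(s)=\sum_{\pi'\in\Pi}f(\pi';s),
\]
so that $\Pr(\pi\mid s)=f(\pi;s)/N(s)$.

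The plan is to take the log-derivative. Writing $\ell=\log g$, only the factor indexed by $(i,j)$ depends on $s_{ij}$, so
\[
 \Lambda(\pi')\;\equiv\;\partial_{s_{ij}}\log f(\pi';s)\;=\;\sqrt{\tau_{ij}}\;\ell'\!\left(\sqrt{\tau_{ij}}\bigl(s_{ij}-\sigma_{ij}(\pi')\bigr)\right).
\]
A short computation with the quotient rule gives
\[
 \partial_{s_{ij}}\Pr(\pi\mid s)\;=\;\Pr(\pi\mid s)\cdot\Bigl(\Lambda(\pi)\;-\;\mathbb{E}_{\pi'\sim \Pr(\cdot\mid s)}\!\left[\Lambda(\pi')\right]\Bigr),
\]
so the lemma reduces to showing $\Lambda(\pi)>\mathbb{E}[\Lambda(\pi')]$.

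The key ingredient is that $\ell'=g'/g$ is strictly decreasing: item~1 of Lemma~\ref{LEM:MLRP_prop} rewrites as $\ell'(x-t)>\ell'(x)$ for $t>0$, which is precisely strict decreasingness. Now $\sigma_{ij}(\pi')\in\{-1,+1\}$ partitions $\Pi$ into two classes, and since $\pi(i)<\pi(j)$ we have $\sigma_{ij}(\pi)=+1$. Within the class $\{\pi':\sigma_{ij}(\pi')=+1\}$ every element produces the same value of $\Lambda$ as $\pi$, namely $\sqrt{\tau_{ij}}\,\ell'(\sqrt{\tau_{ij}}(s_{ij}-1))$; within the class $\{\pi':\sigma_{ij}(\pi')=-1\}$ every element produces $\sqrt{\tau_{ij}}\,\ell'(\sqrt{\tau_{ij}}(s_{ij}+1))$, which is strictly smaller by the strict decreasingness of $\ell'$. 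Hence $\Lambda(\pi)\geq\Lambda(\pi')$ for every $\pi'\in\Pi$, with strict inequality whenever $\pi'(i)>\pi'(j)$.

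To upgrade this to $\Lambda(\pi)>\mathbb{E}[\Lambda(\pi')]$ I need the class with $\sigma_{ij}(\pi')=-1$ to receive strictly positive posterior mass. This class is nonempty (swap the ranks of $i$ and $j$ in $\pi$), and since $g$ has MLRP it is positive on all of $\mathbb{R}$ (for instance, strict log-concavity implied by strict decreasingness of $\ell'$ together with integrability rules out zeros), so every $f(\pi';s)>0$ and hence $\Pr(\pi'\mid s)>0$ for every $\pi'$. Therefore $\mathbb{E}[\Lambda(\pi')]$ is a strict convex combination of the two values, giving $\Lambda(\pi)>\mathbb{E}[\Lambda(\pi')]$ and proving the lemma. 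There is no real obstacle here; the only point to be careful about is justifying full support of $g$ so that the two classes both contribute to the expectation, but this is immediate in the Gaussian benchmark of the main text and follows from MLRP more generally.
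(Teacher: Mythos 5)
Your proof is correct and rests on the same two ingredients as the paper's: the observation that only the single factor $g\!\left(\sqrt{\tau_{ij}}(s_{ij}\mp 1)\right)$ of the likelihood depends on $e_{ij}$, taking one of exactly two values according to whether $\pi'$ ranks $i$ above $j$, together with the MLRP of $g$ (your strict decreasingness of $(\log g)'$ is precisely item 1 of Lemma \ref{LEM:MLRP_prop}). The paper packages this slightly differently --- it collapses the denominator into $\alpha\, g(e_{ij}+\eta-\sqrt{\tau_{ij}})+\beta\, g(e_{ij}+\eta+\sqrt{\tau_{ij}})$ and differentiates $\lambda/\bigl(\alpha+\beta\, g(\cdot+\sqrt{\tau_{ij}})/g(\cdot-\sqrt{\tau_{ij}})\bigr)$ directly --- whereas your score identity $\partial_{s_{ij}}\log \Pr(\pi\mid s)=\Lambda(\pi)-\mathbb{E}[\Lambda(\pi')]$ reaches the same conclusion and, if anything, is more careful about the one point strictness requires, namely that the class $\{\pi':\pi'(i)>\pi'(j)\}$ carries positive posterior mass.
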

\begin{proof}
    We have
    \begin{align*}
        Pr(\pi|s)&=\frac{\prod_{k<l} g\left(\sqrt{\tau_{kl}}(s_{kl}-\text{sgn}(\pi(l)-\pi(k)))\right)}{\sum_{\pi'\in\Pi}\prod_{k<l} g\left(\sqrt{\tau_{kl}}(s_{kl}-\text{sgn}(\pi(l)-\pi(k)))\right)}\\
        &=\frac{\lambda g\left(\sqrt{\tau_{ij}}s_{ij}-\sqrt{\tau_{ij}}\right)}{\alpha g\left(\sqrt{\tau_{ij}}s_{ij}-\sqrt{\tau_{ij}}\right)+\beta g\left(\sqrt{\tau_{ij}}s_{ij}+\sqrt{\tau_{ij}}\right)}\\
        &=\frac{\lambda g\left(e_{ij}+\eta-\sqrt{\tau_{ij}}\right)}{\alpha g\left(e_{ij}+\eta-\sqrt{\tau_{ij}}\right)+\beta g\left(e_{ij}+\eta+\sqrt{\tau_{ij}}\right)}
    \end{align*}
where the $\lambda,\alpha,\beta,\eta$ are non-negative and do not depend on $e_{ij}$. MLRP implies
\begin{align*}
    \frac{\partial }{\partial e_{ij}}Pr(\pi|s)&=\frac{\partial }{\partial e_{ij}}\left(\frac{\lambda}{\alpha+\beta \frac{g\left(e_{ij}+\eta+\sqrt{\tau_{ij}}\right)}{g\left(e_{ij}+\eta-\sqrt{\tau_{ij}}\right)}}\right)>0
\end{align*}
\end{proof}

\begin{lemma}
\label{LEM:sigcomp_monotonicity}
$1\set{\mathbb{E}[v_i|s]>\mathbb{E}[v_j|s]}$ is increasing in $e_{ij}$. 
\end{lemma}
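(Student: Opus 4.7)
The plan is to establish the stronger continuous statement that $\mathbb{E}[v_i|s] - \mathbb{E}[v_j|s]$ is itself strictly increasing in $e_{ij}$ (with all other $e_{kl}$ held fixed), from which monotonicity of the indicator follows immediately. First I would observe that since only $s_{ij}$ depends on $e_{ij}$, the likelihood of each permutation factors cleanly as
\begin{equation*}
L(\pi|s) \;=\; A_\pi \cdot g\!\left(\sqrt{\tau_{ij}}\bigl(s_{ij} - \text{sgn}(\pi(j) - \pi(i))\bigr)\right),
\end{equation*}
where $A_\pi > 0$ collects the remaining signal factors and is independent of $e_{ij}$.

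Next I would partition the permutations into $\Pi^+ = \{\pi : \pi(i) < \pi(j)\}$ and $\Pi^- = \{\pi : \pi(i) > \pi(j)\}$, and write $g_\pm(s) = g(\sqrt{\tau_{ij}}(s \pm 1))$, so every $\pi \in \Pi^+$ contributes $A_\pi g_-(s_{ij})$ to the unnormalized likelihood and every $\pi \in \Pi^-$ contributes $A_\pi g_+(s_{ij})$. Defining the aggregated coefficients $B_\pm = \sum_{\pi \in \Pi^\pm} A_\pi$ and $C_\pm = \sum_{\pi \in \Pi^\pm} (V_{(\pi(i))}^N - V_{(\pi(j))}^N)\, A_\pi$, the posterior gap collapses to
\begin{equation*}
\mathbb{E}[v_i|s] - \mathbb{E}[v_j|s] \;=\; \frac{C_+ + C_-\, r(s_{ij})}{B_+ + B_-\, r(s_{ij})}, \qquad r(s) \equiv \frac{g_+(s)}{g_-(s)}.
\end{equation*}
Since $V_{(n)}^N$ is strictly decreasing in $n$, every summand entering $C_+$ is strictly positive and every summand entering $C_-$ is strictly negative, giving $C_- < 0 < C_+$, while $B_\pm > 0$ trivially.

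Finally, I would appeal to MLRP (Lemma \ref{LEM:MLRP_prop}), which is equivalent to log-concavity of $g$ and implies that $r(s_{ij})$ is strictly decreasing in $s_{ij}$, hence in $e_{ij}$. A direct differentiation in $r$ gives
\begin{equation*}
\frac{\partial}{\partial r}\, \frac{C_+ + C_-\, r}{B_+ + B_-\, r} \;=\; \frac{C_- B_+ - C_+ B_-}{(B_+ + B_-\, r)^2} \;<\; 0,
\end{equation*}
because both $C_- B_+$ and $-C_+ B_-$ are strictly negative. Composing the two monotonicities yields $\mathbb{E}[v_i|s] - \mathbb{E}[v_j|s]$ strictly increasing in $e_{ij}$, so the indicator is (weakly) monotone as claimed.

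The main technical obstacle will be the bookkeeping needed to separate the $e_{ij}$-dependence from the remaining likelihood factors and to verify the sign pattern of $(C_+, C_-)$: the weights $A_\pi$ can be arbitrarily heterogeneous across permutations, so the argument relies on aggregating within each group $\Pi^\pm$ before invoking the definite sign of $V_{(\pi(i))}^N - V_{(\pi(j))}^N$ on that group. Once the factorization is in place, MLRP closes the argument in essentially one line.
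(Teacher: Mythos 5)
Your proof is correct, and it targets the same strengthened claim the paper proves---that $\mathbb{E}[v_i|s]-\mathbb{E}[v_j|s]$ is itself monotone in $e_{ij}$---but executes it by a genuinely different route. The paper splits the sum over permutations into $\{\pi:\pi(i)<\pi(j)\}$ and $\{\pi:\pi(i)>\pi(j)\}$ exactly as you do, but then finishes term by term: it invokes Lemma \ref{LEM:sigorder_monotonicity} (each $Pr(\pi|s)$ with $\pi(i)<\pi(j)$ is increasing in $e_{ij}$, and symmetrically decreasing on the other group) and pairs the sign of each derivative with the sign of $V^N_{(\pi(i))}-V^N_{(\pi(j))}$. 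You bypass Lemma \ref{LEM:sigorder_monotonicity} entirely: aggregating the $e_{ij}$-independent weights into $B_\pm$ and $C_\pm$ collapses the posterior gap into a single fractional-linear function of the likelihood ratio $r(s_{ij})=g_+(s_{ij})/g_-(s_{ij})$, whose monotonicity in $r$ is a one-line sign check, with MLRP supplying the monotonicity of $r$. The mathematical content is identical (everything reduces to MLRP of the location family, since only the binary orientation of $(i,j)$ within $\pi$ enters the $e_{ij}$-dependence), but your version is self-contained and arguably more transparent, whereas the paper reuses machinery it needs anyway for Lemma \ref{LEM:sigorder_monotonicity}. Two minor caveats, neither fatal: your \emph{strict} monotonicity requires $\tau_{ij}>0$ (if $\tau_{ij}=0$ the expression is constant in $e_{ij}$, which still yields the weak monotonicity the lemma asserts, and the paper's own strict claims carry the same implicit restriction); and the decreasingness of $r$ follows directly from the MLRP definition in the setup rather than from Lemma \ref{LEM:MLRP_prop} as you cite.
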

\begin{proof}
We show the stronger result that $\mathbb{E}[v_i|s]-\mathbb{E}[v_j|s]$ is increasing in $s_{ij}$. Note that 
\begin{align*}
    \mathbb{E}[v_i|s]-\mathbb{E}[v_j|s]&=\sum_{\pi\in \Pi}\left(V^N_{(\pi(i))}-V^N_{(\pi(j))}\right)Pr(\pi|s)\\
    &=\sum_{\pi\in \Pi: \pi(i)<pi(j)}\left(V^N_{(\pi(i))}-V^N_{(\pi(j))}\right)Pr(\pi|s)+\sum_{\pi\in \Pi: \pi(i)>pi(j)}\left(V^N_{(\pi(i))}-V^N_{(\pi(j))}\right)Pr(\pi|s)
\end{align*}
Since $V^N_{(\pi(i))}-V^N_{(\pi(j))}> 0$ if $\pi(i)<\pi(k)$ and $V^N_{(\pi(i))}-V^N_{(\pi(j))}<0$ otherwise, Lemma \ref{LEM:sigorder_monotonicity} implies that $\frac{\partial}{\partial e_{ij}}\left[\mathbb{E}[v_i|s]-\mathbb{E}[v_j|s]\right]>0$. 
\end{proof}

We now observe a result that will be useful for the proof of Proposition \ref{PROP:context}. 

\begin{lemma}
    \label{LEM:context}
    Consider any two options $x,y$ with $\tau_{xy}=0$. Then, for any $z$ with $v_z>\max\set{v_y,v_x}$  $\rho(x,y|\set{z})$ is decreasing in $\tau_{yz}$ and increasing in $\tau_{xz}$. Likewise, if $v_z<\min\set{v_y,v_x}$  $\rho(x,y|\set{z})$ is increasing in $\tau_{yz}$ and decreasing in $\tau_{xz}$.
\end{lemma}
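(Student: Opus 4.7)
The plan is to exploit the fact that when $\tau_{xy}=0$ the signal $s_{xy}$ is uninformative and drops out of the posterior, so $\rho(x,y|\{z\})$ reduces to a probability over the two independent noise terms $(\epsilon_{xz},\epsilon_{yz})$, which can then be analyzed via MLRP of $g$ (Lemma \ref{LEM:MLRP_prop}).

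I would first note that with $\tau_{xy}=0$, the factor $g(\sqrt{\tau_{xy}}(s_{xy}-\text{sgn}(\pi(y)-\pi(x))))=g(0)$ is constant across all six permutations of $\{x,y,z\}$ and cancels from the normalized posterior $P(\pi|s)$. Enumerating the permutations and setting $a=\sqrt{\tau_{yz}}$, $b=\sqrt{\tau_{xz}}$, a direct computation in the case $v_z>\max\{v_x,v_y\}$ (so that both $\text{sgn}(v_y-v_z)$ and $\text{sgn}(v_x-v_z)$ equal $-1$) yields, after algebraic simplification,
\[
\mathbb{E}[v_x|s]-\mathbb{E}[v_y|s] \;=\; \frac{\bigl(V_{(1)}^3-V_{(3)}^3\bigr)\bigl[g(\epsilon_{xz}-2b)\,g(\epsilon_{yz})\,-\,g(\epsilon_{xz})\,g(\epsilon_{yz}-2a)\bigr]}{(\text{positive normalizer})}.
\]
Since $V_{(1)}^3>V_{(3)}^3$, the DM picks $x$ iff $L(\epsilon_{xz},b)>L(\epsilon_{yz},a)$, where $L(\epsilon,c)\equiv g(\epsilon-2c)/g(\epsilon)$.

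Next, I would use MLRP of $g$ (Lemma \ref{LEM:MLRP_prop}(1)) to conclude that $L(\cdot,c)$ is strictly increasing on $\mathbb{R}$ for each $c>0$, so the event can be rewritten as $\{\epsilon_{xz}>L(\cdot,b)^{-1}(L(\epsilon_{yz},a))\}$. Integrating over $\epsilon_{xz}$ first expresses $\rho(x,y|\{z\})$ as an expectation over $\epsilon_{yz}$ of a tail probability, whose dependence on $a$ and $b$ is controlled by how the random variable $L(\epsilon_{yz},a)$ shifts with $a$ and how the threshold $L(\cdot,b)^{-1}$ shifts with $b$. The symmetric case $v_z<\min\{v_x,v_y\}$ follows from the parallel computation in which the true signs flip: the formula becomes $g(\epsilon_{xz}+2b)\,g(\epsilon_{yz})-g(\epsilon_{xz})\,g(\epsilon_{yz}+2a)$, and symmetry of $g$ about $0$ reverses each comparative static.

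The main obstacle will be establishing the stochastic ordering of $L(\epsilon_{yz},a)$ in $a$: although MLRP gives clean monotonicity of $L(\cdot,c)$ in $\epsilon$ for fixed $c$, the pointwise partial derivative $\partial L(\epsilon,c)/\partial c$ changes sign at $\epsilon=2c$ (by unimodality of $g$), so differentiation under the integral sign does not deliver a sign-definite kernel. The cleanest workaround I foresee is to work directly at the level of permutation weights, coupling noise realizations and applying the monotonicity of $P(\pi|s)$ in the underlying $e_{ij}$'s from Lemma \ref{LEM:sigorder_monotonicity}, which already encodes the required shift in the distribution of $\mathbb{E}[v_x|s]-\mathbb{E}[v_y|s]$ and bypasses the sign ambiguity of the direct derivative.
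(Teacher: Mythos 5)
Your setup coincides with the paper's own: with $\tau_{xy}=0$ the direct signal drops out, $\mathbb{E}[v_x|s]-\mathbb{E}[v_y|s]$ is proportional to $g(e_{xz}-2\sqrt{\tau_{xz}})\,g(e_{yz})-g(e_{xz})\,g(e_{yz}-2\sqrt{\tau_{yz}})$, and the choice event reduces to comparing the likelihood ratios $L(e_{xz},\sqrt{\tau_{xz}})$ and $L(e_{yz},\sqrt{\tau_{yz}})$ with $L(\epsilon,c)=g(\epsilon-2c)/g(\epsilon)$. You also correctly locate the central difficulty: $\partial L(\epsilon,c)/\partial c=-2g'(\epsilon-2c)/g(\epsilon)$ changes sign at $\epsilon=2c$, so neither pointwise monotonicity nor a naive stochastic-dominance argument signs $\partial\rho/\partial\tau_{xz}$.

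The gap is that your proposed workaround does not close this. Lemma \ref{LEM:sigorder_monotonicity} delivers monotonicity of the posterior permutation probabilities in the realizations $e_{ij}$ at \emph{fixed} precisions; the comparative static the lemma requires is in the precisions themselves, which is exactly the object whose sign is ambiguous under the natural coupling (fixing $e_{xz}\sim g$ while $\tau_{xz}$ varies). Appealing to that lemma therefore restates the problem rather than solving it. The paper's proof supplies two ingredients absent from your proposal: (i) the indifference threshold $e^*_{yz}(\tau_{xz},e_{xz})$ defined implicitly by $g(e_{xz}-2\sqrt{\tau_{xz}})\,g(e^*_{yz})=g(e_{xz})\,g(e^*_{yz}-2\sqrt{\tau_{yz}})$, together with three properties (strictly increasing in $e_{xz}$; decreasing in $\tau_{xz}$ on the region $e_{xz}\le\sqrt{\tau_{xz}}$; equal to $\sqrt{\tau_{yz}}$ at $e_{xz}=\sqrt{\tau_{xz}}$ and bounded above by $\sqrt{\tau_{yz}}$ on that region); and (ii) the reflection change of variables $e'_{xz}=2\sqrt{\tau_{xz}}-e_{xz}$, $e'_{yz}=2\sqrt{\tau_{yz}}-e_{yz}$, which folds the part of the integral with $e_{xz}>\sqrt{\tau_{xz}}$ onto the part with $e_{xz}<\sqrt{\tau_{xz}}$. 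After the fold, the boundary term in $\partial\rho/\partial\tau_{xz}$ vanishes precisely because $e^*_{yz}(\tau_{xz},\sqrt{\tau_{xz}})=\sqrt{\tau_{yz}}$, and the remaining interior terms are signed by combining MLRP and unimodality of $g$ with the threshold's monotonicity, which is only known on the folded region. Without this reflection-plus-threshold step, the sign ambiguity you yourself identify is fatal, so the proposal as written does not constitute a proof.
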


\begin{proof}

    Suppose that $v_z>\max\set{v_y,v_x}$; the proof for the case where $v_z<\max\set{v_y,v_x}$ is identical. For options $i,j,k$ let $\pi_{ijk}$ denote the permutation that ranks $i$ first, $j$ second, and $k$ last. We have $\mathbb{E}[v_y|s]-\mathbb{E}[v_x|s]=\left(V_{(1)}^3-V_{(3)}^3\right)(Pr(\pi_{yzx}|s)-Pr(\pi_{xzy}|s))$. Lemma \ref{LEM:sigorder_monotonicity} implies that $\mathbb{E}[v_y|s]-\mathbb{E}[v_x|s]$ is strictly increasing in $e_{yz}$ and decreasing in $e_{xz}$. 
    
    This implies that for $e^*_{yz}(\tau_{xz},e_{xz})$ defined implicitly by $g(e_{xz}-2\sqrt{\tau_{xz}})g(e^*_{yz}(\tau_{xz},e_{xz}))=g(e_{xz})g(e^*_{yz}(\tau_{xz},e_{xz})-2\sqrt{\tau_{yz}})$, for any realization of $e_{xz}$, we have $\mathbb{E}[v_y|s]-\mathbb{E}[v_x|s]=0$ when $e_{yz}=e^*_{yz}(\tau_{xz},e_{xz})$, and so $\mathbb{E}[v_y|s]-\mathbb{E}[v_x|s]>0$ whenever $e_{yz}>e^*_{yz}(\tau_{xz},e_{xz})$, and $\mathbb{E}[v_y|s]-\mathbb{E}[v_x|s]\leq 0$ otherwise. Here we note three properties of $e^*_{yz}(\tau_{xz},e_{xz})$: 1) $e^*_{yz}(\tau_{xz},e_{xz})$ is strictly increasing in $e_{xz}$; 2) $e^*_{yz}(\tau_{xz},e_{xz})$ is decreasing in $\tau_{xz}$ whenever $e_{xz}\leq \sqrt{\tau_{xz}}$; 3) $e^*_{yz}\left(\tau_{xz},\sqrt{\tau_{xz}}\right)=\sqrt{\tau_{yz}}$, and $e^*_{yz}(\tau_{xz},e_{xz})\leq  \sqrt{\tau_{yz}}$ whenever $e_{xz}\leq \sqrt{\tau_{xz}}$.
    
    Property 1 follows by implicitly differentiating the equality $\frac{g(e_{xz}-2\sqrt{\tau_{xz}})}{g(e_{xz})}=\frac{g(e^*_{yz}(\tau_{xz},e_{xz})-2\sqrt{\tau_{yz}})}{g(e^*_{yz}(\tau_{xz},e_{xz}))}$ and MLRP. Property 2 follows from differentiating the same equality, MLRP, and part 2 of Lemma \ref{LEM:MLRP_prop}.  Property 3 follows from symmetry of $g$ and Property 1. We have 
\begin{align*}
    \rho(y;x|z)=&\int_{e_{xz}=-\infty}^{\sqrt{\tau_{xz}}}\int_{e_{yz}=-\infty}^{\infty}1\left\{\mathbb{E}[v_y|s]-\mathbb{E}[v_x|s]\geq 0\right\}dg(e_{xz})dg(e_{yz})\\
    &+\int_{e_{xz}=\sqrt{\tau_{xz}}}^{\infty}\int_{e_{yz}=-\infty}^{\infty}1\left\{\mathbb{E}[v_y|s]-\mathbb{E}[v_x|s]\geq 0\right\}dg(e_{xz})dg(e_{yz})
\end{align*}
Note that
\footnotesize
\begin{align*}
    &\int_{e_{xz}=\sqrt{\tau_{xz}}}^{\infty}\int_{e_{yz}=-\infty}^{\infty}1\left\{\mathbb{E}[v_y|s]-\mathbb{E}[v_x|s]\geq 0\right\}dg(e_{xz})dg(e_{yz})\\
    &=\int_{e_{xz}=\sqrt{\tau_{xz}}}^{\infty}\int_{e_{yz}=-\infty}^{\infty}1\left\{g(e_{xz}-2\sqrt{\tau_{xz}})g(e_{yz})-g(e_{xz})g(e_{yz}-2\sqrt{\tau_{yz}})\geq  0\right\}dg(e_{xz})dg(e_{yz})\\
    &=\int_{e'_{xz}=-\infty}^{\sqrt{\tau_{xz}}}\int_{e'_{yz}=-\infty}^{\infty}1\left\{g(e'_{xz})g(e'_{yz}-2\sqrt{\tau_{yz}})-g(e'_{xz}-2\sqrt{\tau_{xz}})g(e'_{yz})\geq  0\right\}dg(e'_{xz}-2\sqrt{\tau_{xz}})dg(e'_{yz}-2\sqrt{\tau_{yz}})\\
    &=\int_{e'_{xz}=-\infty}^{\sqrt{\tau_{xz}}}\int_{e'_{yz}=-\infty}^{\infty}1\left\{\mathbb{E}[v_y|s]-\mathbb{E}[v_x|s]\leq 0\right\}dg(e'_{xz}-2\sqrt{\tau_{xz}})dg(e'_{yz}-2\sqrt{\tau_{yz}})
\end{align*}
\normalsize
where the third line uses the change of variables $e'_{xz}=2\sqrt{\tau_{xz}}-e_{xz}$, $e'_{yz}=2\sqrt{\tau_{yz}}-e_{yz}$. This implies that 
\footnotesize
\begin{align*}
    \rho(y;x|z)=&\int_{e_{xz}=-\infty}^{\sqrt{\tau_{xz}}}\int_{e_{yz}=e^*_{yz}(\tau_{xz},e_{xz})}^{\infty}g(e_{xz})g(e_{yz})\,de_{yz}de_{xz}+\int_{e'_{xz}=-\infty}^{\sqrt{\tau_{xz}}}\int_{e_{yz}=-\infty}^{e^*_{yz}(\tau_{xz},e_{xz})}g(e_{xz}-2\sqrt{\tau_{xz}})g(e_{yz}-2\sqrt{\tau_{yz}})\,de_{yz}de_{xz}\\
    =&\int_{e_{xz}=-\infty}^{\sqrt{\tau_{xz}}}\int_{e_{yz}=e^*_{yz}(\tau_{xz},e_{xz})}^{\infty}g(e_{xz})g(e_{yz})\,de_{yz}de_{xz}+\int_{e_{xz}=-\infty}^{-\sqrt{\tau_{xz}}}\int_{e_{yz}=-\infty}^{e^*_{yz}(\tau_{xz},e_{xz}+2\sqrt{\tau_{xz}})}g(e_{xz})g(e_{yz}-2\sqrt{\tau_{yz}})\,de_{yz}de_{xz}
\end{align*}
\normalsize
and so $\frac{\partial}{\partial\tau_{xz}}\rho(y;x|z)=A+B+C$, where
\small
\begin{align*}
    A&=\frac{g(\sqrt{\tau_{xz}})}{2\sqrt{\tau_{xz}}}\left[G\left(-\sqrt{\tau_{yz}}\right)-G\left(e_{yz}^*(\tau_{xz},\sqrt{\tau_{xz}}\right)-2\sqrt{\tau_{yz}})\right]\\
    B&=\int_{e_{xz}=-\infty}^{\sqrt{\tau_{xz}}}-\frac{\partial}{\partial\tau_{xz}}e^*_{yz}(\tau_{xz},e_{xz})\left[g(e_{xz})g(e^*_{yz}(\tau_{xz},e_{xz}))-g(e_{xz}-2\sqrt{\tau_{xz}})g(e^*_{yz}(\tau_{xz},e_{xz})-2\sqrt{\tau_{yz}})\right]\,de_{xz}\\
    C&=\int_{e_{xz}=-\infty}^{-\sqrt{\tau_{xz}}}\frac{1}{\sqrt{\tau_{xz}}}\frac{\partial}{\partial e_{xz}}e^*_{yz}(\tau_{xz},e_{xz})g(e_{xz})g(e^*_{yz}(\tau_{xz},e_{xz}+2\sqrt{\tau_{xz}})-2\sqrt{\tau_{yz}})\,de_{xz}
\end{align*}
\normalsize

$A=0$ since $e^*(\tau_{xz},\sqrt{\tau_{xz}})=\sqrt{\tau_{yx}}$. To see that $B\geq0$, note that on the domain of integration, $\frac{\partial}{\partial\tau_{xz}}e^*_{yz}(\tau_{xz},e_{xz})\leq 0$ (Property 2), and $e^*_{yz}(\tau_{xz},e_{xz})\leq \sqrt{\tau_{yz}}$ (Property 3) and so applying part 3) of Lemma \ref{LEM:MLRP_prop}, $g(e^*_{yz}(\tau_{xz},e_{xz}))\geq g(e^*_{yz}(\tau_{xz},e_{xz})-2\sqrt{\tau_{yz}})$ and $g(e_{xz})>g(e_{xz}-2\sqrt{\tau_{xz}})$. To see that $C>0$, note that $\frac{\partial}{\partial e_{xz}}e^*_{yz}(\tau_{xz},e_{xz})> 0$ (Property 1).  We therefore have $\frac{\partial}{\partial\tau_{xz}}\rho(y,x|\set{z})> 0$. A symmetric argument shows that $\frac{\partial}{\partial\tau_{yz}}\rho(y,x|\set{z})< 0$. 
\end{proof}

\subsubsection*{Proof of Proposition \ref{PROP:context}}
Suppose $v_x,v_y>v_z$, and $\tau_{yz}>\tau_{xz}$. 
Lemma \ref{LEM:context} implies that if $\tau_{xy}=0$, $\rho(y,z|\set{z})>1/2$. The desired result follows from the fact that $\rho(y,z|\set{z})$ is continuous in $\tau_{xy}$. \qed

\subsubsection*{Proof of Proposition \ref{PROP:valuation}}
Let $\pi_k$ denote the ordering over $x,z^1,...,z^n$ in which $x$ is ranked $k$th and the $z^j$ are ordered correctly, and let $p_k(s)$ denote the DM's posterior belief over $\pi_k$ given signal $s$, where $p(s)=(p_1(s),...,p_{n+1}(s))$. Note that 
\begin{align*}
    \mathbb{E}[v_x|s]&=\sum_{k=1}^{n+1} V_{(k)}^N p_k(s)\\
    \mathbb{E}[v_j|s]&=\left(\sum_{k=1}^{j}p_k(s)\right) V_{(j+1)}^N +\left(\sum_{k=j+1}^{n+1}p_k(s) \right) V_{(j)}^N \quad\forall j=1,...,n
\end{align*}
where $V_{(k)}^N$ is the expectation of the $k$th order statistic of $N=n+1$ draws over the prior $Q$. 

To show (i), it suffices to show that when $\tau=0$, $\mathbb{E}[R(x,Z)]=\frac{n+2}{2}$. Suppose $\tau=0$. We have that with probability 1, $p_k(s)=1/{n+1}$ for all $k\in \set{1,...,n+1}$. Let $\mu$ denote the expectation of $Q$. By symmetry of $Q$, we have $V^N_{(k)}=2\mu - V^{N}_{(N+1-k)}$ for all $k=1,...,N$, and so $\mathbb{E}[v_x|s]=\mu$ with probability 1. 

First consider the case where $n$ is odd, and let $j^*=\frac{n+1}{2}$. We have $\mathbb{E}[v_{j^*}|s]=\frac{1}{2} V^N_{j^*}+\frac{1}{2} V^N_{j^*+1}=\mu$ with probability 1, and so $\mathbb{E}[v_x|s]=\mathbb{E}[v_{j^*}|s]$ and $\mathbb{E}[v_x|s]\neq \mathbb{E}[v_{k}|s]$ for any $k\neq j^*$ with probability 1. This implies that $\mathbb{P}(R(x,Z)=j^*)=\mathbb{P}(R(x,Z)=j^*+1)=1/2$, and so $\mathbb{E}[R(x,Z)]=\frac{n+2}{2}$ as desired. Now consider the case where $n$ is even. Let $j^*=n/2$, $k^*=n/2+1$. Since $V^N_{(k)}=2\mu - V^{N}_{(N+1-k)}$, we have $V_{(j^*)}^N>V_{(j^*+1)}^N=\mu=V_{(k^*)}^N>V_{(k^*+1)}^N$. This implies that with probability 1, $\mathbb{E}[v_{j^*}|s]=\frac{n/2
}{n+1}V_{(j^*)+1}^N+\frac{n/2+1}{n+1}V_{(j^*)}^n>\mu$, and $\mathbb{E}[v_{k^*}|s]=\frac{n/2+1
}{n+1}V_{(k^*)+1}^N+\frac{n/2}{n+1}V_{(k^*)}^n<\mu$, which in turn implies that $\mathbb{E}[v_{k^*}|s]<\mathbb{E}[v_x|s]<\mathbb{E}[v_{j^*}|s]$, and so we have $R(x,Z)=k^*=\frac{n+2}{2}$ with probability 1. This implies that $\mathbb{E}[R(x,Z)]=\frac{n+2}{2}$.

To show (2), Let $R(s)$ denote the DM's switching point given the signal $s$: that is, $R(s)=R$ if $\mathbb{E}[v_x|s]> \mathbb{E}[v_k|s]$ for all $k\geq R$ and $\mathbb{E}[v_x|s]<\mathbb{E}[v_k|s]$ for all $k<R$. Note that $R(s)$ is well defined for any $\tau>0$ since ties in posterior expected values occur with probability 0 if $\tau>0$. Note that there exists $\epsilon>0$ such that whenever $p_k(s)>1-\epsilon$, $R(s)=k$. Since $p_{R^*(x,Z)}(s)\to_p 1$ as $\tau\to\infty$, $R(s)\to_p R^*(x,Z)$ as $\tau\to\infty$.\qed

\section{Appendix: Experiments}
\label{APP:experiments}
Here we provide details on the design of our choice and valuation experiments. Screenshots of  experimental instructions, comprehension checks, and sample choice interfaces for these experiments are compiled in \href{https://jeffreyyang97.github.io/personalwebsite/CC_OA.pdf}{Supplemental Appendix J}.

\subsection{Multi-Attribute Binary Choice}

In our multiattribute choice experiments, we collected data on 662 choice problems in total: 582 problems in the \textit{main} problem sample, and 80 problems in a \textit{robustness} problem sample.

The main sample consists of 80 two-attribute problems, 432 three-attribute problems, and 104 four-attribute problems. The three-attribute choice options consist of a monthly fee, a per-GB usage rate (where the fictional consumer has a monthly usage of 6 GB), and an annual device cost; the two-attribute choice options consist of a monthly fee and usage rate, and the four-attribute choice options additionally contain a quarterly wi-fi charge. The two-attribute problems are generated by drawing a value difference (in bonus payment terms) from $\set{\$3.84,\$5.76}$ and an $L_1$-ratio from  $\set{1.00, 0.94, 0.89, 0.84, 0.80, 0.76, 0.70, 0.59, 0.48, 0.39, 0.30, 0.20}$.\footnote{Due to rounding in the attribute values, the actual $L_1$ ratios deviate slightly from these values.} The three- and four-attribute problems are generated by drawing a value difference and $L_1$ ratio value, which determines the summed attribute-wise advantages and disadvantages in the comparison, and randomizing how advantages and disadvantages are split across attributes. 

The robustness sample consists of 10 two-attribute problems, 60 three-attribute problems, and 10 four-attribute problems that are identical in structure to those main sample except for the attribute weights: in the robustness sample, the fictional consumer has a monthly usage of 12 GB. Each problem in the robustness sample is constructed to match the utility-weighted attribute values of a corresponding problem in the main sample. 

We collect data from the two problem samples in separate experiments. In the main experiment, each subject completes 50 choice problems in total: 30 randomly drawn unique three-attribute problems, 10 repeat problems drawn from these 30 unique problems, and 10 randomly drawn unique two- or four-attribute problems. Subjects first complete the 40 three-attribute problems; for their last 10 problems, they will see either two- or four-attribute problems, with 30\% of subjects randomly assigned to the two-attribute problems and the remaining subjects assigned to the four-attribute problems. The robustness experiment follows an identical structure, except that 50\% of subjects are randomly assigned  to the two-attribute problems with the remaining subjects assigned to the four-attribute problems. 

Subjects for both the main and robustness experiments were recruited from Prolific, screening for subjects based in the U.S. with a Prolific approval rating greater than or equal to 98\% and with 500 or more completes. Subjects who failed a comprehension check were screened out of the study. As pre-registered, data for both the main and robustness experiment were collected in waves to reach a pre-specified number of subjects who did not report using a calculator in the experiment: 350 for the main experiment and 48 in the robustness experiment. In total, 428 subjects were recruited for the main experiment (357 non-calculator users) and 65 subjects were recruited for the robustness experiment (50 non-calculator users). The pre-registration for these experiments can be accessed at \url{https://aspredicted.org/TNQ_XBQ}.

\subsection{Intertemporal Binary Choice}
In our intertemporal choice experiments, we collected data on 1097 choice problems in total: 900 problems in the \textit{broad} problem sample, and 197 problems in a \textit{targeted} problem sample.

In the broad problem sample, choice options contain either one or two payouts; in total, there are 300 1-payout vs. 1-payout choice problems, 300 1-payout vs. 2-payout choice problems, and 300 2-payout vs. 2-payout choice problems. For each choice problem, the options are generated by sampling payout amounts and payout delays. The delays of each payout (in days) are drawn from \{0, 12, 24, 48, 72, 108, 144, 180, 216, 264, 312, 360, 420, 480, 540, 600, 660, 720\}, and the monetary amount of each payout is drawn from $\set{\$0,\$0.50,...,\$20}$ for two-payout options and $\set{\$0,\$0.50,...,\$40}$ for one-payout options. Rather than uniformly sampling from these ranges, we employ a sampling procedure that 1) undersamples dominance problems, 2) excludes problems involving very large value differences and problems near indifference, and 3) stratifies by CPF ratio and value difference (computed using a benchmark discount factor). 
 
In the selected problem sample, problems are generated from sampling the same payout amounts and delays as for the broad problem sample, but are generated using a sampling procedure that holds fixed the threshold discount rate that makes the two options in the choice problem indifferent for a DM with exponential time preferences. In particular, 100 problems in the selected sample involve a threshold monthly discount rate of 1 (meaning that \textit{any} individual with exponential time preferences should prefer the option that pays off earlier), and 97 problems involve a threshold monthly discount rate of 0.747. Within each of these subsamples of problems, approximately half involve 1-payout vs 2-payout options, and the other half involves 2-payout vs. 2-payout options. The sampling procedure for the selected problem sample was additionally designed to stratify by CPF ratio and to reduce variation in the value difference. 

In the main experiment, each subject completes 50 choice problems in total: 40 unique problems randomly drawn from the combined sample of 1100 problems, and 10 repeat problems randomly drawn from these 40 unique problems. Subjects were recruited from Prolific, screening for subjects based in the U.S. with a Prolific approval rating greater than or equal to 98\% and with 500 or more completes. Subjects who failed a comprehension check were screened out of the study. 829 subjects in total were recruited. The pre-registration for this experiment can be accessed at  
\url{https://aspredicted.org/QCJ_S81}.

\subsection{Preference Reversal and Valuation Experiments}
\label{APP:experiments_val}
\subsubsection{Lottery Preference Reversals: Experimental Details}
The experiment concerns 12 lotteries: 6 ``base'' options consisting of 3 high-risk and 3 low-risk options as described in Table \ref{tab:reversals}, and 6 ``scaled-up'' options constructed by multiplying the base option payments by a factor of 1.6. The experiment contains two parts, the order of which is randomized: \textit{Binary choice} and \textit{Valuation}. 

In \textit{Binary Choice}, subjects make 16 binary choices between lotteries, 12 of which are drawn from the 18 possible high/low risk lottery comparisons within the base and scaled-up options, and 6 of which are filler problems included to limit repetition. Problem order is randomized, subject to the constraints that 1) no single choice option appears in consecutive choice problems, 2) no consecutive choice problems contain the same set of payoff probabilities, and 3) no consecutive choice problems are both filler problems. 

In \textit{Valuation}, subjects value all 12 options using one of two randomly assigned valuation modes: certainty equivalents or probability equivalents. For certainty equivalents,  each lottery $(\$\lambda w,p)$ is valued against a price list $Z=(z^1,..,z^n)$ of certain payments with evenly-spaced payoffs, i.e., $z^k=(\$\lambda [w-(k-1)d),1)$ and $n=\lfloor w/d\rfloor$, where  $\lambda\in\set{1,1.6}$ is the scale factor. We use $d=0.25$ for the low-risk lotteries and $d=1$ for the lottery $(\$\lambda\cdot 19.5,0.23)$, and $d=1.25$ for the remaining high-risk lotteries. For probability equivalents, $Z=(z^1,..,z^n)$ contains lotteries that pay off $\$\lambda\cdot 24$ with evenly-spaced payoff probabilities, i.e., $z^k=(\$\lambda\cdot 24, p-(k-1)d$) and $n=\lfloor p/d\rfloor$. We use $d=0.05$ for the low-risk lotteries and $d=0.01$ for the high-risk lotteries. Task order is randomized subject to the constraint that no consecutive valuation tasks involve lotteries with the same payoff probability or scale factor. 

If a participant is selected for a bonus (1 in 5 chance), one part of the study (\textit{Valuation} or \textit{Binary Choice}) is selected at random. If \textit{Binary Choice} is selected, subjects receive the option they chose in a randomly selected decision; otherwise subjects receive the option they chose in a randomly selected decision within a randomly selected price list. 

Subjects were recruited from Prolific, screening for subjects based in the U.S. with a Prolific approval rating greater than or equal to 98\% and with 500 or more completes. Subjects who failed a comprehension check were screened out of the study. 151 subjects in total were recruited. The median completion time was 20 minutes. The pre-registration for this experiment can be accessed at  
\url{https://aspredicted.org/C62_GRK}.

\subsubsection{Intertemporal Preference Reversals: Experimental Details}
The experiment concerns 12 delayed payments: 6 ``base'' options consisting of 3 high-delay and 3 low-delay options as described in Table \ref{tab:reversals}, and 6 ``scaled-up'' options constructed by multiplying the base options payments by a scale factor of 1.6. The study consists of two parts, the order of which is randomized: \textit{Binary choice} and \textit{Valuation}. 

In \textit{Binary Choice}, subjects make 16 binary choices between delayed payments, 12 of which are drawn from the 18 high/low delay comparisons within the base and scaled-up options, and 6 of which are filler problems included to limit repetition. Problem order is randomized, subject to the constraints that 1) no single choice option appears in consecutive choice problems, 2) no consecutive choice problems contain the same set of payoff delays, and 3) no consecutive choice problems are both filler problems. 

In \textit{Valuation}, subjects value all 12 options using one of two randomly assigned valuation modes: present value equivalents or time equivalents. For present value equivalents,  each option $(\$\lambda m,t)$ is valued against a price list $Z=(z^1,..,z^n)$ of evenly-spaced immediate payments, i.e., $z^k=(\$\lambda [w-(k-1)d],0)$ and $n=\lfloor w/d\rfloor$, where $\lambda\in\set{1,1.6}$ is the scale factor. We use $d=0.5$ for the low-delay options and $d=1.25$ for the high-delay options. For time equivalents, $Z=(z^1,..,z^n)$ contains delayed payments $z^k=(\$\lambda\cdot 27.5, \tau_k$) where $\tau_k=t+d_k$, for $\allowdisplaybreaks (d_1,...,d_n) = (0,7, 15, 30, 45, 60, 90, 120, 180, 240, 300, 360, 420, 480, 540, 600, 660, 720, 840, 960, 1080)$. 
Task order is randomized subject to the constraint that no consecutive valuation tasks involve lotteries with the same payoff delay or scale factor. 

If a subject  is selected for a bonus (1 in 10 chance), one part of the study (\textit{Valuation} or \textit{Binary Choice}) is selected at random. If \textit{Binary Choice} is selected, subjects receive the option they chose in a randomly selected decision; otherwise, subjects receive the option they chose in a randomly selected decision within a randomly selected price list. 

Subjects were recruited from Prolific, screening for subjects based in the U.S. with a Prolific approval rating greater than or equal to 98\% and with 500 or more completes. Subjects who failed a comprehension check were screened out of the study. 152 subjects in total were recruited. The median completion time was 19 minutes. The pre-registration for this experiment can be accessed at  
\url{https://aspredicted.org/C62_GRK}.

\subsubsection{Valuation Experiments: Experimental Details}

Subjects complete two parts of the experiment, in random order: \textit{Risk} and \textit{Time}. In \textit{Risk} (\textit{Time}), subjects complete 12 multiple price list valuation tasks using one of two randomly selected valuation modes: certainty equivalents and probability equivalents (present value equivalents and time equivalents). 

{
    \def\OldComma{,}
    \catcode`\,=13
    \def,{%
      \ifmmode%
        \OldComma\discretionary{}{}{}%
      \else%
        \OldComma%
      \fi%
    }%
For certainty equivalents, subjects value a simple lottery $l = (\overline{w},p_l)$, against a price list $Z=\set{z^1,...,z^n}$ of certain payments adapted to $l$, with $n=19$. For probability equivalents, subjects value a certain payment $c=(w_c,1)$ against a probability list $Z=\set{z^1,...,z^n}$ of yardstick lotteries $z^k=(\overline{w},p_k)$ adapted to $c$, with $n=21$. We draw $\overline{w}$ from $\{\$9, \$18, \$27\}$. For certainty equivalents, we draw $p_l$ from $\{0.03, 0.05, 0.10, 0.25, 0.5, 0.75, 0.90, 0.95, 0.97\}$. For probability equivalents, we draw $w_c$ so that $w_c/\overline{w}\in \{0.033, 0.056, 0.11, 0.25, 0.5, 0.75, 0.89, 0.944, 0.967\}$. Subjects complete 12 price lists randomly selected from the 27 possible price lists; the order of the price lists is randomized, subject to the constraint that no consecutive price list contains the same payoff probability $p_l$ (normalized payment $w_c/\overline{w}$) for certainty equivalents (probability equivalents). 

For present value equivalents, subjects value a delayed payment $\upsilon = (\$\overline{m},t_{\upsilon})$, against a price list $Z=\set{z^1,...,z^n}$ of immediate payments adapted to $l$, with $n=21$. For time equivalents, subjects value a certain payment $c=(w_c,1)$ against a probability list $Z=\set{z^1,...,z^n}$ of yardstick delayed payments $z^k=(\overline{m},t_k)$, with $(t_1,...,t_n)=(0, 7, 15, 30, 45, 60, 90, 120, 150, 180, 240, 300, 360, 420, 480, 540, 600, 720, 840, 960, 1080)$. We draw $\overline{m}$ from $\{25, 30, 35\}$. For present value equivalents, we draw $t_{\upsilon}$ from $\{7, 30, 60, 120, 240, 360, 480, 720, 1080\}$ (in days). For time equivalents, we draw $m_c$ so that $m_c/\overline{m}\in \{0.20, 0.35, 0.50, 0.65, 0.75, 0.85, 0.90, 0.95, 0.97\}$. Subjects complete 12 price lists randomly selected from the 27 possible price lists; the order of the price lists is randomized, subject to the constraint that no consecutive price list contains the same payoff delay $t_{\upsilon}$ (normalized payment $m_c/\overline{m}$) for present value equivalents (time equivalents).

If a participant is selected to win a bonus (1 in 8 chance), one part of the study (\textit{Risk} or \textit{Time}) is selected at random, and a price list within that part is randomly selected; subjects receive the option they chose in a randomly selected decision in that  list.

Subjects were recruited from Prolific, screening for subjects based in the U.S. with a Prolific approval rating greater than or equal to 98\% and with 500 or more completes. Subjects who failed a comprehension check were screened out of the study. 302 subjects in total were recruited.  The median completion time was 24 minutes. The pre-registration for this experiment can be accessed at  
\url{https://aspredicted.org/D8R_552}.

\section{Appendix: Additional Analyses}
\subsection{Axiom Tests}

\label{APP:axiom_tests}

We conduct tests of the axioms developed in Section \ref{SEC:theory}  and Appendix \ref{APP:theory_axioms} on our binary choice data. As these choice experiments were not designed with the explicit goal of testing these axioms, they do not permit tests of some of our axioms, specifically Moderate Transitivity and Linearity. However, some of our datasets do allow for tests of Dominance and Simplification, and Linearity, axioms that characterize our complexity measures. 

\subsubsection{Tests of Dominance}
Dominance can be decomposed into two implications: 1) if $x>_Dy$ and $w\not >_D z$, $\rho(x,y)\geq \max\set{\rho(w,z),\rho(z,w)}$, and 2) if $x>_Dy$ and $w >_D z$, then $\rho(x,y) = \rho(w,z)$.\\

\noindent \textbf{\textit{Test of Implication 1}}. In our multiattribute, intertemporal, and lottery datasets, there are 38, 51, and 1780 choice problems that involve dominance, and 624, 1046, and 9140 choice problems that do not, respectively. This makes for 23712, 53346, and 16269200 comparisons between choice rates of dominance problems (denoted $\rho^D$) and the maximal choice rates of non-dominance problems (denoted $\rho^{nD}$) in our three domains. 

We observe directional dominance violations ($\rho^D<\rho^{nD}$) in 15.62\%, 13.95\%, and 7.66\% of the multiattribute, intertemporal, and lottery comparisons, respectively. We quantify how many of these directional violations achieve statistical significance using the following procedure. For each dominance problem $i$: we conduct one-sided Fisher's exact tests of the null hypothesis $\rho_i^D\geq\rho_j^{nD}$ for each non-dominance problem $j$, adjusting $p$-values for multiple testing using the Benjamini-Hochberg procedure\footnote{This procedure has been shown to be less conservative than the standard Bonferroni correction and is robust under positive dependence across tests \citep{benjamini_control_2001}.}, and report the proportion of tests that are significant at the 5\% level. No dominance problems in our multiattribute and intertemporal datasets exhibit any significant dominance violations with respect to non-dominance problems. In lottery choice, 10\% of dominance problems exhibit any significant dominance violations, with less than 2\% of dominance problems exhibiting significant dominance violations at a proportion higher than 5\%.\\ 

\noindent\textbf{\textit{Tests of Implication 2}}. 
Figure \ref{fig:dom_rates} plots the distribution of choice rates for dominance problems across our domains. Choice rates are tightly concentrated around the median in our multiattribute and intertemporal datasets, with more dispersion in lottery choice, although even here over 75\% of choice rates fall within a 10 percentage point range. 

\begin{figure}[b!]
    \small
    \begin{subfigure}[t]{0.325\textwidth}
        \caption{\centering Multiattribute}
        \vspace{0.5em}
        {\includegraphics[width=\linewidth]{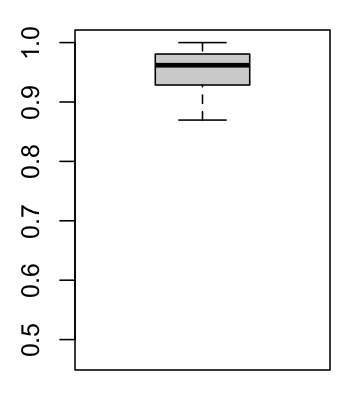}}
    \end{subfigure}
    \begin{subfigure}[t]{0.325\textwidth}
        \caption{\centering Intertemporal}
        \vspace{0.5em}
        {\includegraphics[width=\linewidth]{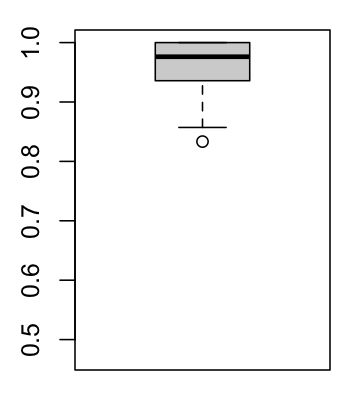}}
    \end{subfigure}
    \begin{subfigure}[t]{0.325\textwidth}
        \caption{\centering Lottery}
        \vspace{0.5em}
        {\includegraphics[width=\linewidth]{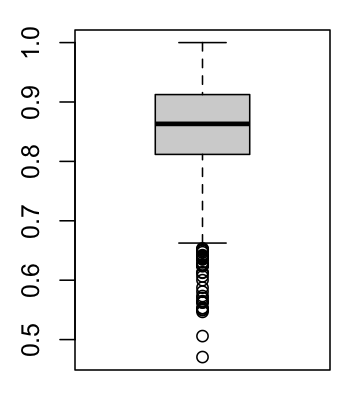}}
    \end{subfigure}
    \captionsetup{font=small}\caption{Choice rates for dominance problems. Boxes indicate the IQR of the distribution ($Q_3-Q_1$), and whiskers indicate $Q_1-1.5\times IQR$ and $Q_3+1.5\times IQR$.}
    \label{fig:dom_rates}
\end{figure}

We quantify the proportion of dominance problems that statistically significantly deviate from the average choice rate as follows. For each dominance problem $i$, we conduct a two-sided Fisher's exact test for the null hypothesis $\rho_i^D\neq \bar{\rho}^D_{-i}$, where $\bar{\rho}^D_{-i}=\frac{1}{\sum_{j\neq i}n_j}\sum_{j\neq i}n_j\rho_{-j}^D$ is the leave-out estimate of the average choice rate among dominance problems, adjusting $p$-values for multiple testing using the Benjamini-Hochberg procedure. In our multiattribute and intertemporal datasets, we cannot reject the null at the 5\% level for any dominance problem, whereas in our lottery dataset, we can reject the null at the 5\% level for 18.65\% of the dominance problems. \\

In sum, we cannot reject Dominance in our multiattribute and intertemporal data, but we do find evidence of dominance violations in lottery choice. These violations are driven by heterogeneity in choice rates among dominance problems, specifically the left tail of dominance problems with lower choice rates than the average. A natural question is what drives these this heterogeneity in choice rates---that is, whether there is predictable and systematic variation in choice rates among dominance problems.

One way to quantify the extent of systematic variation is to compute the completeness index (see Appendix \ref{APP:completeness}) of our estimated CDF-complexity model on the subset of dominance problems. Intuitively, if there is a large degree of systematic variation in choice rates among dominance problems, this variation should be learned by our completeness benchmark; this leads to lower completeness of the CDF-complexity model, which constraints choice rates to be the same for all dominance problems. The completeness of the CDF-complexity model is virtually 1 on the subset of dominance problems, suggesting that in our lottery dataset, there is little predictable variation in the choice rates of dominance problems. That said, in other datasets, there is evidence that first-order stochastic dominance is systematically easier to recognize in some lottery comparisons than others \citep[e.g.][]{birnbaum_new_2008}. In light of this evidence, we leave developing suitable relaxations of Dominance to future work.  

\subsubsection{Tests of Simplification}
We test Simplification in our induced-values multiattribute data under the identifying assumption that the stochastic preference relation coincides with subjects' induced preferences: that $\rho(x,x')=1/2$ whenever $x$ and $x'$ have the same induced preference. Under this assumption, that utilities $U$ are observable, Simplification reduces to the first condition (S-a) below. We will also test an analogous implication of the $L_1$ model (S-b) that bears the same structure and interpretation as the Simplification axiom.\footnote{Condition S-b is jointly implied by Simplification and Linearity, assuming that utilities are observable.}\\

\noindent\textit{Condition S-a)}: If $U(x)\geq U(y)$, for any $x'$ with $U(x')=U(x)$ satisfying 1) $x'_k=y_k$ for some $k$, and 2) $x'_j\neq x_j$ for at most one $j\neq k$, we have $\rho(x',y)\geq \rho(x,y)$.\\

\noindent\textit{Condition S-b)}: If $U(x)\geq U(y)$, for any $y'$ with $U(y')=U(y)$ satisfying 1) $y'_k=x_k$ for some $k$, and 2) $y'_j\neq y_j$ for at most one $j\neq k$, we have $\rho(x,y')\geq \rho(x,y)$.\\

Our multiattribute choice dataset contains 99 non-trivial tests of Conditions S-a and S-b, all of which involve comparisons with the same structure, illustrated in Figure \ref{fig:simpl_test_ex}: there is a \textit{baseline} comparison where the superior option has an advantage (or a disadvantage) along 2 attributes and a disadvantage (or an advantage) along the remaining attribute, and a \textit{merged} comparison is formed by editing one of the options in the baseline comparison so as to merge both advantages (or both disadvantages) into a single attribute. Let $\rho^b$ and $\rho^{s}$ denote the probability of choosing the superior option in the baseline and merged comparisons respectively; Conditions S-a and S-b imply $\rho^s\geq \rho^m$.


\begin{figure}[htbp!]
\begin{align*}
{\small
\begin{array}{ll}
\multicolumn{2}{c}{\text{Baseline Comparison}}\\
    x&=(\$17.58\text{/mo},\,\$2.14\text{/GB},\,\$179.28\text{/yr})\\
    y&=(\$16.32\text{/mo},\,\$2.92\text{/GB},\,\$207.36\text{/yr})
\end{array}
\qquad
\begin{array}{ll}
\multicolumn{2}{c}{\text{Merged Comparison}}\\
    x'&=(\$17.58\text{/mo},\,\$1.75\text{/GB},\,\$207.36\text{/yr})\\
    y'&=(\$16.32\text{/mo},\,\$2.92\text{/GB},\,\$207.36\text{/yr})
\end{array}
}
\end{align*}
\captionsetup{font=small}
\caption{Example of choice problems that test Simplification. Recall that the value of each plan is their annual cost, assuming data usage of 6 GB/month.}
\label{fig:simpl_test_ex}
\end{figure}

We conduct one-sided Fisher's exact tests of the null hypothesis $\rho^s\geq \rho^m$ for all 99 comparisons, adjusting $p$-values for multiple testing using the Benjamini-Hochberg procedure. We fail to reject the null at the 5\% level in all but 1 comparison.

While we fail to statistically reject Simplification in our data, there is suggestive aggregate evidence of a specific violation of the axiom. The leftmost panel of Figure \ref{fig:simpl_rates} plots the distribution of $\rho^s-\rho^m$ across all 99 tests of simplification. This distribution is centered at 0, which is consistent with the prediction made by $L_1$ complexity that $\rho^b=\rho^m$. However, comparing the distributions of $\rho^s-\rho^m$ for tests that involve merging advantages of the superior option (middle panel) versus tests that involve merging disadvantages of the superior option (rightmost panel) reveals heterogeneity: $\rho^s-\rho^m$ is on average negative when merging advantages, and positive when merging disadvantages. This is consistent with an attribute-counting heuristic \citep{bushong_model_2021}, wherein all else equal, an option is perceived as more attractive if it is advantaged along more attributes and disadvantaged along fewer attributes. This heuristic can produce violations of Simplification, and we view the development of generalizations of $L_1$ complexity that allow for this heuristic as an interesting avenue for future research.

\begin{figure}[t!]
    \small
    \centering
    {\includegraphics[width=0.7\linewidth]{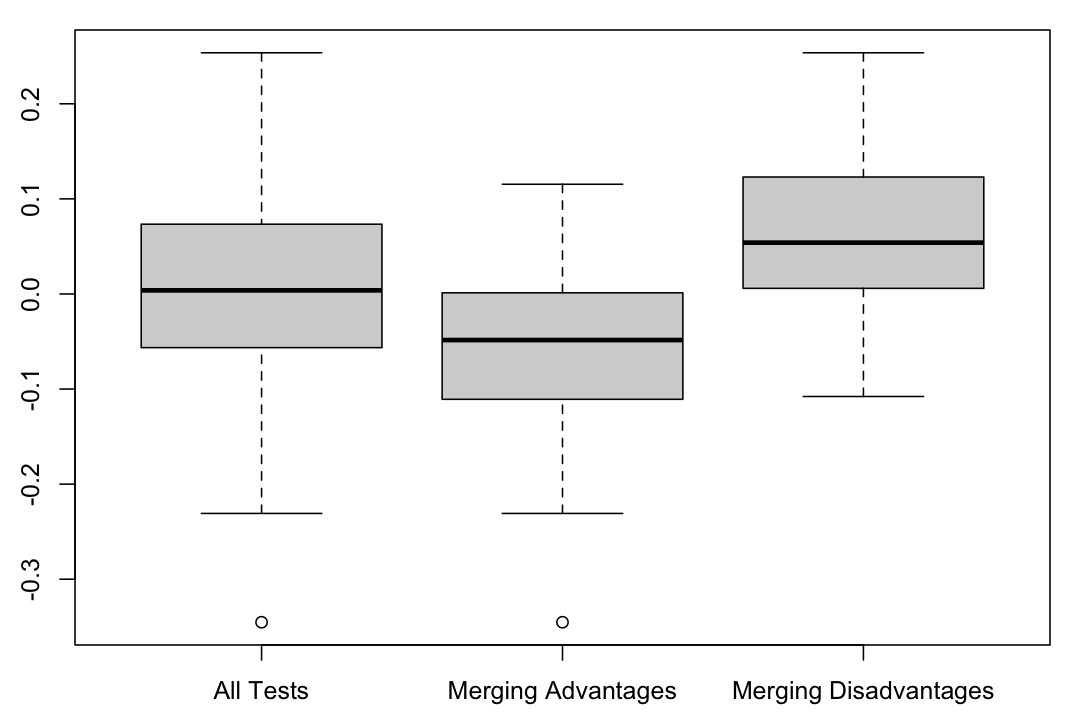}}

    \captionsetup{font=small}\caption{$\rho^b-\rho^m$ for simplification tests. Boxes indicate the IQR of the distribution ($Q_3-Q_1$), and whiskers indicate $Q_1-1.5\times IQR$ and $Q_3+1.5\times IQR$.}
    \label{fig:simpl_rates}
\end{figure}

\subsubsection{Tests of Linearity} Our multiattribute dataset allows for 81 pairwise tests of Linearity. Each involves comparisons of the form $(x,y)$, $(x',y')$, where $x'=x+z$ and $y'=y+z$ for some $z\in \mathbb{R}^n$. Linearity implies $\rho(x,y)=\rho(x',y')$.  We conduct two-sided Fisher's exact tests of the null $\rho(x,y)=\rho(x',y')$ for all 91 comparisons, adjusting $p$-values  using the Benjamini-Hochberg procedure. We fail to reject the null at the 5\% level in all comparisons. 

\subsection{Comparisons to Alternative Noise Specifications}
\label{APP:alternative_metrics}
We compare the $L_1$ complexity model to alternative heteroskedastic noise specifications in our multiattribute choice data. We consider a class of moderate utility specifications where the distance metric is given by the $L_p$ distance: 
\begin{align*}
    \rho(x,y)=G\left(\frac{U(x)-U(y)}{d_{{Lp}}(x,y)}\right)
\end{align*}
for $p\geq 1$, $G$ strictly increasing with $G(t)=1-G(-t)$, $U(x)=\sum_kx_k$, and $d_{Lp}(x,y)=\left(\sum_{k}|x_k-y_k|^p\right)^{1/p}$, and where each attribute is denoted in terms of its induced monetary value. Note that when $p=1$, this model reduces to the $L_1$ complexity model, and when $p=2$, the model is a linear differentiation model \citep{he_random_2023}. \\

\noindent \textbf{\textit{Goodness of Fit.}} Given a value of $p$, we flexibly estimate $G$ on our data using a shape-constrained additive model \citep{pya_shape_2015}, which represents $G$ with a monotone-increasing spline basis; we use a basis with 4 degrees of freedom in our estimation.  

Table \ref{tab:lp_fit} reports the predictive power of different noise specifications, including the $L_1$ model ($p=1$). Goodness of fit is very similar across all models. One reason for this similarity is performance is that because on our dataset, the value-dissimilarity ratio $\frac{U(x)-U(y)}{d_{{Lp}}(x,y)}$ is highly correlated across the different noise specifications: for instance, the correlation between of the ratio between $p=1$ and $p=2$ is 0.89. 

\begin{table}[h!]
\centering
\caption{Goodness of fit, $L_p$ ratio models} 
\label{tab:lp_fit}
\begin{tabular}{l|ccccccccc}
  \hline
 & $p=1$ & $p=1.25$ & $p=1.5$ & $p=2$ & $p=3$ & $p=5$ & $p=10$ & $p=20$ & $p=50$ \\ 
  \hline
$R^2$ & 0.357 & 0.365 & 0.366 & 0.362 & 0.361 & 0.362 & 0.362 & 0.361 & 0.361 \\ 
   \hline 
 \multicolumn{10}{p{0.8\linewidth}}
                        {\scriptsize{$R^2$ values are observation weighted.}} 
\end{tabular}
\end{table}

\noindent \textbf{\textit{Dominance Violations.}} Aggregate predictive power on our dataset aside, one key distinction between these noise specifications are their predictions regarding dominance. Whereas the $L_1$ complexity model respects dominance, for any $p>1$ the resulting noise specification predicts dominance violations: specifically, that one can construct comparisons $(x,y)$ and $(x',y')$ for which $x$ attribute-wise dominates $y$, $x'$ does not dominate $y'$, and yet predicted error rates are higher in $(x,y)$ than $(x',y')$. 

Our data allows for a test of these violations: it contains 12 choice problems $(x,y)$ in which $x$ dominates $y$ and $\frac{U(x)-U(y)}{d_{Lp}(x,y)}=1$ for all $p$, and 13 choice problems $(x',y')$ in which $x'$ does not dominate $y'$ and $\frac{U(x')-U(y')}{d_{Lp}(x',y')}>1$ for $p\geq 2$. That is, a noise specification with $p\geq 2$ predicts that error rates across all 12 dominance problems are \textit{higher} than error rates across all 13 non-dominance problems. Contrary to these predictions, the pooled error rates across the dominance and non-dominance problems are 4.53\% and 12.28\%, respectively, a statistically significant difference (Fisher's exact $p<0.001$). 

\section{Appendix: Structural Specifications}
\label{APP:structural}
We estimate several standard models of value in multi-attribute objects, intertemporal payoffs, and lotteries, assuming logit choice probabilities $\rho(x, y) = \text{sgm}_{\eta}(V(x)-V(y))$, where $\text{sgm}_{\eta}(t)=1/(1+\exp(-\eta t))$ is the sigmoid function for $\eta\geq0$. For each of these models, we jointly estimate a parameterized $V$ function and the logit noise parameter $\eta$. We also estimate our parameterized model of complexity from Section \ref{SEC:theory_params},
\begin{align*}
    \rho(x,y)&=G\left(\frac{U(x)-U(y)}{d(x,y)}\right),\\
    G(r)&=\begin{dcases}(1-\kappa)-(0.5-\kappa)\frac{(1-r)^{\gamma}}{(r^{\psi}+(1-r)^{\psi})^{1/\psi}} & r\geq 0\\
            \kappa+(0.5-\kappa)\frac{(1+r)^{\gamma}}{(r^{\psi}+(1-r)^{\psi})^{1/\psi}} & r<0
            \end{dcases}
\end{align*}
for $\kappa\in [0,0.5], \gamma,\psi>0$. Unless stated otherwise, we will use the 2-parameter functional form of $G$ in which we fix $\psi=1$. In each domain, we jointly estimate the parameterized value-dissimilarity ratio and the $G$-function parameters $\kappa$ and $\gamma$ (and $\psi$, if applicable). Below we give the equations for each model estimated in the paper.

\subsection{Multiattribute Choice}
\label{APP:structural_mac}
In the following structural equations, we write each attribute in terms of its induced monetary value: that is, the true value of option $x$ is given by $U(x)=\sum_{k}x_k$.\\

\noindent\textbf{Distortion-Free Logit.} Choice rates are given by logit noise specification: $\rho(x,y)=\text{sgm}_{\eta}(U(x)-U(y))$. This model is parameterized by $\eta$.\\

\noindent\textbf{Salience.} We use the continuous salience-weighting model in \citet{bordalo_salience_2013}, where $\rho(x,y)=\text{sgm}_{\eta}(V_{BGS}(x|\set{x,y})-V_{BGS}(y|\set{x,y}))$ for $V_{BGS}(x|\set{x,y})\equiv \sum_{k} x_k\left(1+\frac{|x_k-(x_k+y_k)/2|}{|x_k|+|(x_k+y_k)/2|}\right)^{1-\delta}$, where $\delta\leq 1$. This model is parameterized by $(\eta,\delta)$.\\  

\noindent\textbf{Focusing.} We use the parameterization proposed in \citet{koszegi_model_2013}, where $\rho(x,y)=\text{sgm}_{\eta}(V_{KS}(x|\set{x,y})-V_{KS}(y|\set{x,y}))$ for $V_{KS}(x|\set{x,y})=\sum_{k} x_k|x_k-y_k|^{\theta}$, where $\theta\geq 0$. This model is parameterized by $(\eta,\theta)$.\\

\noindent\textbf{Relative Thinking.} We use the parameterization proposed in \citet{bushong_model_2021}, where $\rho(x,y)=\text{sgm}_{\eta}(V_{BRS}(x|\set{x,y})-V_{BRS}(y|\set{x,y}))$ for $V_{BRS}(x|\set{x,y})\equiv\sum_{k}x_k\left[(1-\omega)+\omega\frac{1}{|x_k-y_k|+\xi}\right]$, where $\omega\in [0,1]$, $\xi>0$. This model is parameterized by $(\eta,\omega,\xi)$. \\

\noindent\textbf{$\mathbf{L_1}$ Complexity}. Choice probabilities in our model are given by $\rho(x,y)=G\left(\frac{U(x)-U(y)}{d_{L1}(x,y)}\right)$, where $d_{L1}$ is defined as in Definition \ref{def:L1_complexity}. We estimate both the 2 and 3 parameter versions of $G$; our model is parameterized by $(\kappa,\gamma)$ for the former and $(\kappa,\gamma,\psi)$ for the latter. 

\subsection{Intertemporal Choice}
\label{APP:structural_time}

\textbf{Exponential Discounting.} Choice probabilities are given by $\rho(x,y)=\text{sgm}_{\eta}(PV(x)-PV(y))$, for $PV\equiv\sum_{t}\delta^t m_x(t)$. The parameters of the model are given by $(\eta,\delta)$. \\

\noindent\textbf{Quasi-Hyperbolic Discounting.} Choice probabilities are given by $\rho(x,y)=\text{sgm}_{\eta}(V_{qd}(x)-V_{qd}(Y))$, for $V_{qd}\equiv\sum_{t>0}\beta\delta^t m_x(t) + m_x(0)$. The parameters of this model are $(\eta,\delta,\beta)$. \\

\noindent\textbf{Hyperbolic Discounting.}
We use the \citet{loewenstein_anomalies_1992} discount function: $\rho(x,y)=\text{sgm}_{\eta}(V_{hd}(x)-V_{hd}(Y))$, for $V_{hd}\equiv\sum_{t}(1+\iota t)^{-\zeta/\iota} m_x(t)$, where $\iota,\zeta>0$. The parameters of this model are $(\eta,\iota,\zeta)$. \\

\noindent\textbf{CPF Complexity.} Choice probabilities in our model are given by $\rho(x,y)=G\left(\frac{PV(x)-PV(y)}{d_{CPF}(x,y)}\right)$, where $d_{CPF}(x,y)$ is defined as in Definition \ref{def:CPF_complexity}. Our model is parameterized by $(\delta,\kappa,\gamma)$.

\subsection{Lottery Choice}
\label{APP:structural_risk}

Define the Bernoulli utility $u_{sym}$ by $u_{sym}(w)=w^{\alpha}$ for $w\geq 0$, and $u_{sym}(w)=-(-w)^{\alpha}$ otherwise. Define $u_{rd}$ by $u_{rd}(w)=w^{\alpha}$ for $w\geq 0$, and $u_{rd}(w)=-\lambda(-w)^{\beta}$ otherwise. \\

\noindent\textbf{Expected Utility.}
To estimate the global preference parameters used in Figure \ref{fig:experiment_plots}  and Table \ref{tab:cdfregs_global}, we assume agents have a Bernoulli utility function that exhibits constant relative risk aversion for both pure-gain and pure-loss lotteries: $\rho(x,y)=\text{sgm}_{\eta}(EU_{sym}(x)-EU_{sym}(y))$, for $EU_{sym}(x)  \equiv \sum_{w}f_x(w)u_{sym}(w)$, where $\alpha> 0$. This model is parameterized by $(\eta,\alpha)$.\\

\noindent\textbf{Simplicity Theory.} The DM has EU preferences, but pay penalize (or favor) lotteries with larger support. We follow \citet{puri_simplicity_2025} in parameterizing the penalization term: $\rho(x,y)=\text{sgm}_{\eta}(V_{st}(x)-V_{st}(y))$, for $V_{st}(x) = EU_{sym}(x)+CA(|S_x|)$ and $CA(s) = \frac{\phi}{1+\exp(\upsilon(s-\mu))} -  \frac{\phi}{1+\exp(\upsilon(1-\mu))}$, where $\alpha>0$. This model is parameterized by $(\eta,\alpha, \phi, \upsilon, \mu)$.\\ 

\noindent\textbf{Reference-Dependence.}
The DM has expected utility preferences, where the (two parameter) Bernoulli utility function allows for separate curvature parameters for positive and negative payouts, with a loss-aversion parameter $\lambda$: $\rho(x,y)=\text{sgm}_{\eta}(EU_{rd}(x)-EU_{rd}(y))$, $EU_{rd}(x)\equiv \sum_{w}f_x(w)u_{rd}(w)$, where $\alpha,\beta>0$. This model is parameterized by $(\eta,\alpha,\beta,\lambda)$.\\ 

\noindent\textbf{Cumulative Prospect Theory.}
We also estimate a model where the agent exhibits probability weighting and loss aversion, following \citet{tversky_advances_1992}. We use the probability weighting function given by \citet{gonzalez_shape_1999}. Let the distinct payoffs in a lottery $x$ be ordered by $w_{-m},...,w_{-1},w_{0},w_{1},...,w_{n}$, where $w_{-m},...,w_{0}$ indicate negative payoffs and $w_{0},...,w_{n}$ indicate positive payoffs, with $p_{-m},...,p_{n}$ denoting the associated probabilities. The value of $x$ is given by $V_{cpt}(x)=\sum_{k=-m}^0 u_{rd}(w_k)\pi_k+\sum_{k=0}^n u_{rd}(w_k)\pi_k$, where for $q(p)=\frac{\chi p^\nu}{\chi p^\nu + (1-p)^\nu}$, we define $\pi_n=q(p_n)$, $\pi_{-m}=q(p_{-m})$, $\pi_k=q(p_k+...+p_n)-q(p_{k+1}+...+p_n)$ for $0\leq k< n$, and $\pi_k=q(p_{-m}+...+p_k)-q(p_{-m}+...+p_{k-1})$ for $-m< k< 0$.
for $\alpha,\beta,\chi,\nu,\lambda>0$. Choice probabilities are given by $\rho(x,y)=\text{sgm}_{\eta}(V_{cpt}(x)-V_{cpt}(y))$. This model is parameterized by $(\eta,\alpha,\beta,\chi,\nu,\lambda)$.\\

\noindent\textbf{CDF Complexity.}
We estimate two versions of our model: one that assumes risk neutrality, and one that allows for utility curvature.  In the risk neutral model, we have $\rho(x,y) = G\left(\frac{EV(x)-EV(y)}{d_{CDF}(x,y)}\right)$, where $EV(x)=\sum_{w}wf_x(w)$ and $d_{CDF}$ are defined as in Definition \ref{def:CDF_complexity} with the 
Bernoulli utility function $u(x)=x$. This model is parameterized by $(\kappa,\gamma)$. In the model that allows for utility curvature, we have $\rho(x,y) = G\left(\frac{EU(x)-EU(y)}{d_{CDF}(x,y)}\right)$, where $EU(x)=\sum_{w}u_{sym}(w)f_x(w)$ and $d_{CDF}$ are defined as in Definition \ref{def:CDF_complexity} with the 
Bernoulli utility function $u=u_{sym}$. This model is parameterized by $(\kappa,\gamma,\alpha)$. 

\section{Appendix: Model Completeness and Restrictiveness}
\label{APP:completeness}
We adapt the completeness and restrictiveness measures in 
\citet{fudenberg_measuring_2022,fudenberg_how_2023} to our setting. We have a set of binary choice problems $\mathcal{C}$, each identified by a vector of objective problem features. For each $c\in\mathcal{C}$ there is an associated outcome $q\in \Delta(\set{a,b})$---a distribution over the options in the choice problem; let $\mathcal{Q}\equiv \Delta(\set{a,b})$. Abusing notation, we will also let $q\in[0,1]$ identify the rate of choosing option $a$.

In our dataset, there is a joint distribution over $\mathcal{C}\times\mathcal{Q}$ given by $\mu$. Letting $\mu_{\mathcal{C}}$ denote the marginal distribution over the choice problems in our dataset, we have $\mu_{\mathcal{C}}(c)=n_c/\sum_{c'\in \mathcal{C}}n_c'$, where $n_c$ denotes the number of observations for choice problem $c$, and letting $\mu_{\mathcal{Q}|\mathcal{C}}$ denote the conditional distribution over $\set{\delta_a,\delta_b}$ in our choice set, we have 
\begin{align*}
\mu_{\mathcal{Q}|\mathcal{C}}(q|c)=
\begin{cases}
r(c) & q=\delta_a\\
1-r(c) & q=\delta_b\\
0 & \text{otherwise}
\end{cases}
\end{align*}
where $r(c)$ denotes the empirical choice rate for option $a$ for choice problem $c$. 

We consider prediction rules of the form $p:\mathcal{C}\to\mathcal{Q}$ and denote the set of such functions by $\overline{\mathcal{P}}$; $p(c)$ denotes the rate of choosing option $a$ in choice problem $c$ under prediction rule $p$. Completeness and restrictiveness are defined for a parametric model $\mathcal{P}_{\Theta}=\set{p_{\theta}}_{\theta\in\Theta}$ with respect to a base model $p^{b}$: we take $p^b$ to be the constant prediction rule $p^b(c) = \sum_{c\in \mathcal{C}}\frac{n_c}{\sum_{c'\in\mathcal{C}}n_{c'}}r(c)$ outputting the average choice rate in lottery and intertemporal choice, and the fitted Distortion-Free Logit model in multiattribute choice.\footnote{In multiattribute choice, we use the Distortion-Free Logit model instead of the constant predictor as the base model since latter has \textit{lower} prediction loss than the former; in the framework of \citet{fudenberg_measuring_2022,fudenberg_how_2023}, the base model should have higher loss than the model under evaluation.}

\subsection{Completeness}
\textbf{Definition}. Let $l(q,q')=-\left[q'\log(q)+(1-q')\log(1-q)\right]$ denote the negative log-likelihood loss function. Analogous to the maximum-likelihood estimates discussed in Section \ref{SEC:benchmarking}, we measure the prediction loss of a model $p$ by the expected negative log-likelihood: $e(p) = \mathbb{E}_{\mu}[l(p(c),r)]$. Let $p^*$ denote the prediction rule that minimizes expected loss in the data, i.e., $p^*\in\arg\min_{p\in\overline{\mathcal{P}}}e(p)$. The completeness of a model $\mathcal{P}_{\Theta}$ is defined by $\kappa(\mathcal{P}_{\Theta})=\frac{e(p_{base})-\min_{p\in\mathcal{P}_{\Theta}}e(p_{\theta})}{e(p_{base})-e(p^*)}$.\\

\noindent\textbf{Implementation}. To form $p^*$, we use an ensemble approach to build a predictor that maps problem features into choice probabilities, where we combine parametric model predictions with those of a neural network trained on the data. This ensemble predictor is described in detail in \href{https://jeffreyyang97.github.io/personalwebsite/CC_OA.pdf}{Supplemental Appendix H}.

\subsection{Restrictiveness}
\textbf{Definition}. Restrictiveness captures the degree to which a model is able to fit pre-defined synthetic data---the better the fit, the less restrictive the model. This synthetic data is defined by the \textit{admissible set}  $\mathcal{P}\subseteq\overline{\mathcal{P}}$, characterized by restrictions so that $\mathcal{P}$ constitutes ``reasonable'' choice data, as absent any such restrictions, any model satisfying basic restrictions could have high restrictiveness. Following \citet{fudenberg_how_2023}, we impose restrictions that are shared by the class of models we estimate: 1) \textit{Weak Dominance}: If $x> y$, then $\rho(x,y)\geq 1/2$; 2) \textit{Monotonicity}: If $x> x'$, then $\rho(x',y)>\rho(x,y)$, where $>$ denotes the domain-specific dominance notion.\footnote{That is, $>$ denotes attribute-wise dominance, first-order-stochastic dominance, and temporal dominance in multiattribute, lottery, and intertemporal choice, respectively.} Every model in Section \ref{APP:structural} satisfies Weak Dominance except for Simplicity Theory, and every model satisfies Monotonicity except for Simplicity Theory and the Salience model.

Let $d:\overline{\mathcal{P}}\times\overline{\mathcal{P}}\to\mathbb{R}_{+}$ denote the expected Kullback-Leibler divergence $d(p,p')=\mathbb{E}_{\mu_{\mathcal{C}}}[D(p'(c)||p(c))]$. Letting $\lambda_{\mathcal{P}}$ denote the uniform distribution on $\mathcal{P}$ and denoting $d(\mathcal{P}_{\Theta},p)=\inf_{\theta\in\Theta}d(p_{\theta},p)$, the restrictiveness of model $\mathcal{P}_{\Theta}$ is defined as $r(\mathcal{P}_{\Theta})\equiv \frac{\mathbb{E}_{\lambda_{\mathcal{P}}}[d(\mathcal{P}_{\Theta},p)]}{\mathbb{E}_{\lambda_{\mathcal{P}}}[d(p_{base},p)]}$.\\

\noindent\textbf{Implementation}. Following \citep{fudenberg_how_2023}, we generate $K=1000$ synthetic datasets by taking $K$ i.i.d. samples from $\lambda_{\mathcal{P}}$, denoted $\set{p^k}_{k=1}^K$, and form the estimator $\hat{r}(\mathcal{P}_{\Theta})=\frac{\frac{1}{K}\sum_{k=1}^K d(\mathcal{P}_{\Theta},p^k)}{\frac{1}{K}\sum_{k=1}^K d(p_{base},p^k)}$. Standard errors for $\hat{r}(\mathcal{P}_{\Theta})$ are computed following \citet{fudenberg_how_2023}.

Due to the high-dimensional nature of restrictions characterizing $\mathcal{P}$---the Monotonicity restriction generates 648, 6616, and 93529 independent pairwise inequality constraints between choice rates in our multiattribute, intertemporal, and lottery choice data, respectively---standard methods of sampling uniformly from $\mathcal{P}$ such as rejection sampling are computationally infeasible. Instead, we approximate uniform draws from $\mathcal{P}$ by implementing a hit-and-run sampler \citep{smith_efficient_1984}, a Markov chain Monte-Carlo algorithm. \href{https://jeffreyyang97.github.io/personalwebsite/CC_OA.pdf}{Supplemental Appendix I} describes this sampling procedure in full. \\

\newpage


\end{document}